\newcommand{\wOnePointFive}{0.70\textwidth}
\newcommand{\wTwo}{0.47\textwidth}
\newcommand{\wThree}{0.30\textwidth}
\newtheorem{mydef}{Definition}
\newtheorem{theorem}{Theorem}
\newcommand{\wToy}{0.288\textwidth}
\newcommand{\raiseToy}{+2.1cm}
\newcommand{\node}{*=0{\bullet}}
\newcommand{\circlenode}[1]{*++[o][F-]{#1}}
\begin{document}

\title{
  Applications of Structural Balance \\ in Signed Social Networks  
}

\author{
  J\'{e}r\^{o}me Kunegis\\ 
  University of Koblenz--Landau, Germany\\
  \texttt{kunegis@uni-koblenz.de}
}

\maketitle

\begin{abstract}
  We present measures, models and link prediction algorithms based on
  the structural balance in signed social networks.  Certain social
  networks contain, in addition to the usual \emph{friend} links,
  \emph{enemy} links. These networks are called signed social
  networks. A classical and major concept for signed social networks is
  that of structural balance, i.e., the tendency of triangles to be
  \emph{balanced} towards including an even number of negative edges,
  such as friend-friend-friend and friend-enemy-enemy triangles.  In
  this article, we introduce several new signed network analysis methods
  that exploit structural balance for measuring partial balance, for
  finding communities of people based on balance, for drawing signed
  social networks, and for solving the problem of link prediction.
  Notably, the introduced methods are based on the signed graph
  Laplacian and on the concept of signed resistance distances.  We
  evaluate our methods on a collection of four signed social network
  datasets.
\end{abstract}

\section{Introduction}
Signed social networks are such social networks in which social ties can
have two signs:  friendship and enmity.  Signed social networks have
been studied in sociology and anthropology\footnote{See for instance
  Figure~\ref{fig:gama-nocluster}}, and are now found on certain
websites such as
Slashdot\footnote{slashdot.org} and
Epinions\footnote{www.epinions.com}.  
In addition to the usual social network analyses, the signed structure of these
networks allows a new range of studies to be performed, related to the
behavior of edge sign distributions within the graph. A major
observation in this regard is the now classical result of \emph{balance
  theory} by~\cite{b355}, stipulating that signed social
networks tend to be balanced in the sense that its nodes can be partitioned
into two sets, such that nodes within each set are connected only by
friendship ties, and nodes from different sets are only connected by
enmity ties.  This observation is not to be understood in an absolute
sense~-- in a large social network, a single wrongly signed edge would
render a network unbalanced. Instead, this is to be understood as a
tendency, which can be exploited to enhance the analytical and
predictive power of network analysis methods for a wide range of
applications. 

\begin{figure}
  \centering
  \includegraphics[width=\wOnePointFive]{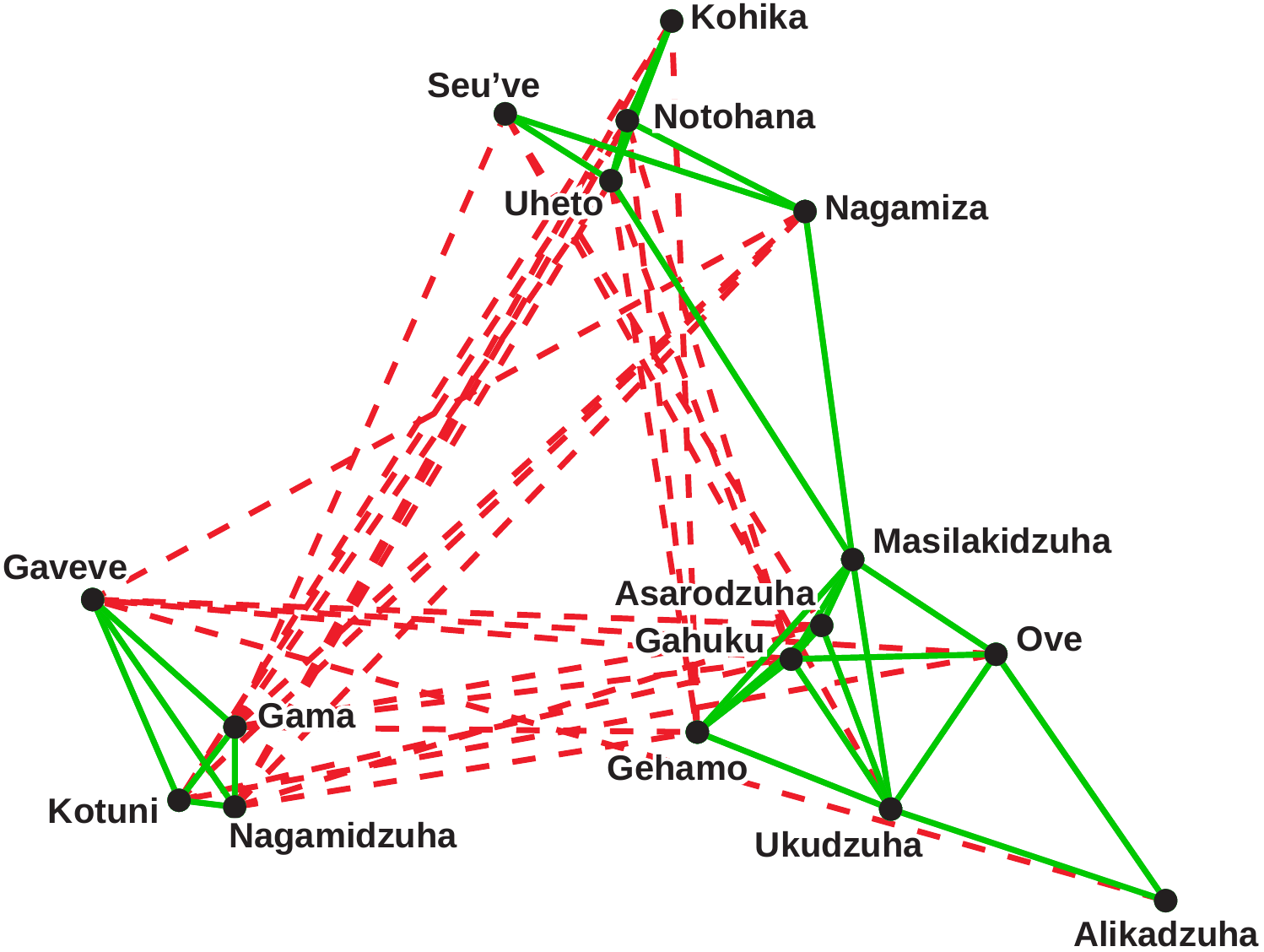} 
  \caption[*]{
    A small example of a signed social network from anthropology: 
    The tribal groups of the Eastern Central Highlands of New Guinea
    from the study of Read~\cite{b322}. 
    Individual tribes are the vertices of this network, with friendly
    relations shown as green edges and antagonistic relations shown as
    red edges.  
  }
  \label{fig:gama-nocluster}
\end{figure}

In this article, we present ways to measure and exploit structural
balance of signed social networks for graph drawing, measuring conflict,
detecting communities and predicting links. 
In particular, we introduce methods based on \emph{algebraic graph
  theory}, i.e., the representation of graphs by matrices. In ordinary
network analysis applications, algebraic graph theory has the advantage
that a large range of powerful algebraic methods become available to
analyse networks. In the case of signed networks, an 
additional advantage is that structural balance, which is inherently a
multiplicative construct as illustrated by the rule \emph{the enemy
  of my enemy is my friend}, maps in a natural way onto the algebraic
representation of networks as matrices. As we will see, this makes not
only signed network analysis methods seamlessly take into account
structural balance theory, it also simplifies calculation with matrices
and vectors, as the multiplication rule is build right into the
definition of their operations. 

In the rest of article, the individual methods are not presented in
order of possible applications, but in order of complexity, building on
each other. The breakdown is as follows:
\begin{itemize}
\item Section~\ref{sec:background} introduces the concept of a signed
  social network, gives necessary mathematical definitions and
  presents a set of four signed social networks that are used
  throughout the paper. 
\item Section~\ref{sec:clusco} defines structural balance and
  introduces a basic but novel measure for quantifying it: the signed
  clustering coefficient. 
\item Section~\ref{sec:drawing} reviews the problem of drawing signed
  graphs, and derives from it the signed Laplacian matrix which arises
  naturally in that context. 
\item Section~\ref{sec:laplacian} gives a proper mathematical
  definition of the signed Laplacian matrix, and proves its basic
  properties. 
\item Section~\ref{sec:conflict} introduces the notion of
  \emph{algebraic conflict}, a second way of quantifying structural
  balance, based on a spectral analysis of the signed Laplacian
  matrix. 
\item Section~\ref{sec:clustering} describes the signed graph
  clustering problem, and shows how its solution leads to another
  derivation of the signed Laplacian matrix. 
\item Section~\ref{sec:prediction} reviews the problem of link
  prediction in signed networks, and shows how it can be solved by the \emph{signed
    resistance distance}. 
\end{itemize}
Section~\ref{sec:conclusion} concludes the article. 
This article is partially based on material previously published by the author in
conference papers
\cite{kunegis:phd,kunegis:slashdot-zoo,kunegis:negative-resistance,kunegis:netflix-srd,kunegis:signed-kernels}.  

\section{Background:  Signed Social Networks}
\label{sec:background}
Negative edges can be found in many types of social networks, 
to model enmity in addition to friendship, distrust in addition
to trust, or positive and negative ratings between users. 
Early uses of signed social networks can be found
in anthropology, where negative edges have been used to denote
antagonistic relationships between tribes~\cite{b323}. 
In this context, the sociological notion of balance is defined as the
absence of negative cycles, i.e., the absence of cycles with an odd
number of negative edges~\cite{b284,b355}.  
Other cases of signed social networks include student
relationships~\cite{b493} and voting processes~\cite{b551}. 

Recent studies~\cite{b270}
describe the social network extracted from Essembly, an ideological
discussion site that allows users to mark other users as \emph{friends},
\emph{allies} and \emph{nemeses}, and discuss the semantics of the three
relation types.  These works model the different types of edges by means
of three subgraphs.  
Other recent work considers the task of
discovering communities from social networks with negative
edges~\cite{b233}.   

In trust networks, nodes represent persons or other entities, and links
represent trust relationships. 
To model distrust, negative edges are then used. 
Work in that field has mostly focused on defining global trust measures
using path lengths or adapting PageRank~\cite{b325,b236,b235,b237,b234}. 

In applications where users can rate each other, we can model ratings as
\emph{like} and \emph{dislike}, giving rise to positive and negative
edges, for instance on online dating sites~\cite{b311}.  

An example of a small signed social network is given by the tribal
groups of the Eastern Central Highlands of New Guinea from the study
of Read~\cite{b322} in Figure~\ref{fig:gama-nocluster}.    
This dataset describes the relations between sixteen tribal groups of
the Eastern Central Highlands of New Guinea~\cite{b323}.
Relations between tribal groups in the Gahuku--Gama alliance structure
can be friendly (\emph{rova}) or antagonistic (\emph{hina}).
In addition, four large signed social networks extracted from the Web will be used throughout the
article. All datasets are part of the Koblenz Network
Collection~\cite{konect}.  The datasets are summarized in
Table~\ref{tab:datasets}.

\begin{table}
  \caption{
    \label{tab:datasets}
    The signed social network datasets used in this article.  The first
    four datasets are large; the last one is small and serves as a
    running example. 
  }
\makebox[\textwidth]{
  \centering
    \begin{tabular}{l l r r r}
      \toprule
      \textbf{Network} & \textbf{Type} & \textbf{Vertices ($|V|$)} & \textbf{Edges ($|E|$)} & \textbf{Percent Negative} \\
      \midrule
      Slashdot Zoo \cite{kunegis:slashdot-zoo} 
      & Directed & 79,120 & 515,581 & 23.9\% \\ 
      Epinions \cite{b367} 
      & Directed & 131,828 & 841,372 & 14.7\% \\
      Wikipedia elections \cite{b551} 
      & Directed & 8,297 & 107,071 & 21.6\% \\
      Wikipedia conflicts \cite{konect:brandes09} 
      & Undirected & 118,100 & 2,985,790 & 19.5\% \\
      \midrule
      Highland tribes \cite{b322}
      & Undirected & 16 & 58 & 50.0\% \\
      \bottomrule
    \end{tabular}
    }
\end{table}

\paragraph{Definitions}
Mathematically, an undirected signed graph can be defined as $G = (V,E,\sigma)$,
where $V$ is the vertex set, $E$ is the edge set, and $\sigma: E
\rightarrow \{-1, +1\}$ is the sign function~\cite{b324}.  
The sign function $\sigma$ assigns a positive or negative sign to each
edge.  
The fact that two edges $u$ and $v$ are adjacent will be denoted by $u
\sim v$.  The degree of a node $u$ is defined as the number of its
neighbors, and can be written as 
\begin{eqnarray*}
  d(u) = \{ v \mid u \sim v \}. 
\end{eqnarray*}
A directed signed network will be noted as $G=(V,E,\sigma)$, in which
$E$ is the set of directed edges (or \emph{arcs}). 

\paragraph{Algebraic Graph Theory}
Algebraic graph theory is the branch of graph theory that represents
graphs using algebraic structures in order to exploit the powerful
methods of algebra in graph theory. The main tool of algebraic graph
theory is the representation of graphs as matrices, in particular the
adjacency matrix and the Laplacian matrix. 
In the following, all matrices are real. 

Given a signed graph $G=(V,E,\sigma)$, its adjacency matrix $\mathbf A \in \mathbb
R^{|V|\times |V|}$ is defined as
\begin{eqnarray*}
  \mathbf A_{uv} &=& \left\{ \begin{array}{ll} 
    \sigma(\{u,v\}) & \mathrm{when } \{u,v\} \in E \\ 
    0 & \mathrm{when } \{u,v\} \notin E 
  \end{array} \right. 
\end{eqnarray*}
The adjacency matrix is square and symmetric. 


The diagonal degree matrix $\mathbf D$ of a signed graph is defined
using $\mathbf D_{uu} = d(u)$. Note that the degrees, and thus the matrix
$\mathbf D$, is independent of the sign function~$\sigma$. 

The assumption of structural balance lends itself to using algebraic
methods based on the adjacency matrix of a signed network.
To see why this is true, consider that the square $\mathbf A^2$ 
contains at its entry $(u,v)$ a sum of paths of length
two between $u$ and $v$ weighted positively or negatively depending on
whether a third positive edge between $u$ and $v$ would lead to a
balanced or unbalanced triangle. 

Finally, the Laplacian matrix of any graph is defined as $\mathbf L =
\mathbf D - \mathbf A$. It is this matrix $\mathbf L$ that will play a central
role for graph drawing, graph clustering and link prediction. 

\section{Measuring Structural Balance: \\ The Signed Clustering Coefficient}
\label{sec:clusco}
In a signed social network, the relationship between two connected nodes
can be positive or negative. When looking at groups of three persons,
four combinations of positive and negative edges are possible (up to
permutations), some being more 
likely than others. An observation made in actual social groups is that
triangles of positive and negative edges tend to be balanced. For
instance, a triangle of three positive edges is balanced, as is a
triangle of one positive and two negative edges.  On the other hand, a
triangle of two positive and one negative edge is not balanced.  The
case of three negative edges can be considered balanced, when
considering the three persons as three different groups, or unbalanced,
when allowing only two groups. 

This characterization of balance can be generalized to the complete
signed network, resulting in the following definition:
\begin{mydef}[Harary, 1953]
  \label{def:balance}
  A connected signed graph is balanced when its
  vertices can be partitioned into two groups such that all positive edges
  connect vertices within the same group, and all negative edges connect
  vertices of the two different groups.
\end{mydef}
Figure~\ref{fig:balance} shows a balanced graph partitioned into two
vertex sets.  
The concept of structural balance can also be illustrated with the
phrase \emph{the enemy of my enemy is my friend} and its permutations. 

Equivalently, unbalanced graphs can be defined as those
graphs containing a cycle with an odd number of negative edges,
as shown in Figure~\ref{fig:conflict}. 
To prove that the balanced graphs are exactly those that do not
contain cycles with an odd number of edges, consider that any cycle in a
balanced graph has to cross sides an even number of times.  On
the other hand, any balanced graph can be partitioned into two
vertex sets by depth-first traversal while assigning each vertex to 
a partition such that the balance property is fulfilled.  Any
inconsistency that arises during such a labeling leads to a cycle with
an odd number of negative edges. 

\begin{figure}
  \begin{minipage}[b]{0.45\linewidth}
  \centering
  \includegraphics[width=0.9\textwidth]{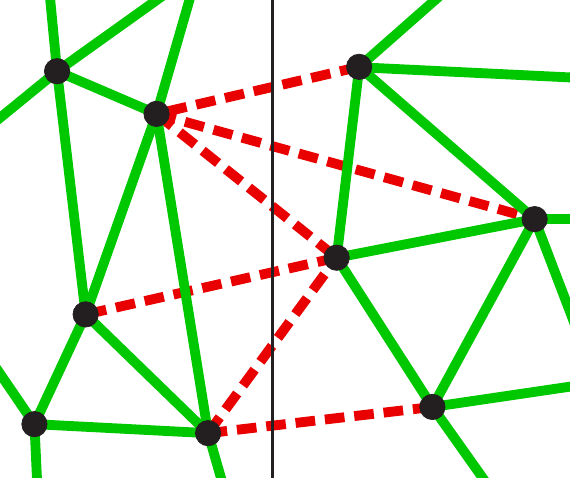}
  \caption{
    The nodes of a graph without negative cycles
    can be partitioned into two sets such that all edges
    inside of each group are positive and all edges between the two
    groups are negative. 
    We call such a graph balanced. 
  }
  \label{fig:balance}
  \end{minipage}
  \hspace{0.5cm}
  \begin{minipage}[b]{0.45\linewidth}
  \centering
  \includegraphics[width=0.9\textwidth]{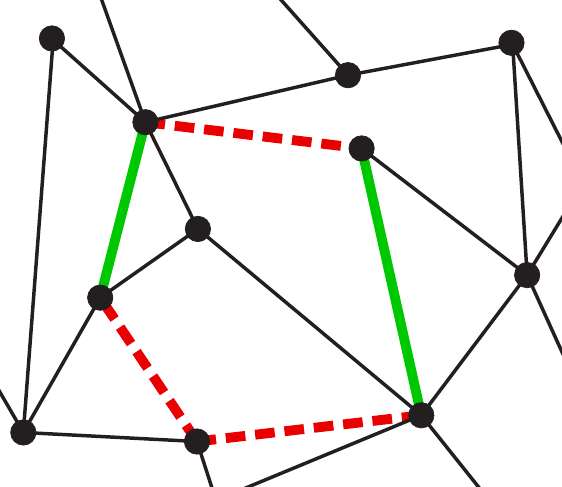}
  \caption{
    An unbalanced graph contains at least one
    cycle with an odd number of negative edges.  
    Such a graph cannot be partitioned into two sets with all negative
    edges across the sets and positive edges within the sets. 
  }
  \label{fig:conflict}
  \end{minipage}
\end{figure}

In large signed social networks such as those given in
Table~\ref{tab:datasets}, it cannot be expected that the full network is
balanced, since already a single unbalanced triangle makes the full
network unbalanced. Instead, we need a measure of balance that
characterizes to what extent a signed network is balanced.  To that end,
we extend a well-establish measure in network analysis, the clustering
coefficient, to signed networks, giving the signed clustering
coefficient.  We also introduce the relative signed clustering
coefficient and give the values observed in our example datasets.

The clustering coefficient is a characteristic number of a graph taking
values between zero and one, denoting the tendency of the graph nodes
to form small clusters.  
The clustering coefficient was introduced in~\cite{b228} and an
extension for positively weighted edges 
proposed in~\cite{b269}. 
The signed clustering coefficient we 
define denotes the tendency of small clusters to be \emph{balanced}, and
takes on values between
$-1$ and $+1$.  The relative signed clustering coefficient will be
defined as the quotient between the two. 

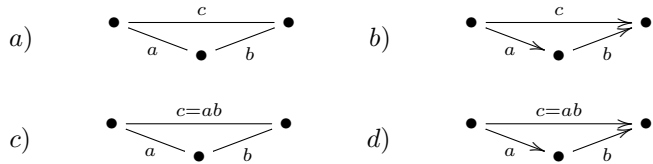
\begin{figure}
  \centerline{\xymatrix @R-31pt @C-2pt {
      & {\bullet} \ar@{-}[rr]^{c} \ar@{-}[rdd]_{a} & & {\bullet} &      & {\bullet} \ar@{->}[rr]^{c} \ar@{->}[rdd]_{a} & & {\bullet} \\ 
      {a)} & & & &   {b)}  \\
      & & {\bullet} \ar@{-}[ruu]_{b} & &     & & {\bullet} \ar@{->}[ruu]_{b} & 
  }}
  \vspace{0.3cm}
  \centerline{\xymatrix @R-31pt @C-2pt {
      & {\bullet} \ar@{-}[rr]^{c = ab} \ar@{-}[rdd]_{a} & & {\bullet} &      & {\bullet} \ar@{->}[rr]^{c = ab} \ar@{->}[rdd]_{a} & & {\bullet} \\
      {c)} & & & &    {d)} \\
      & & {\bullet} \ar@{-}[ruu]_{b} & &     & & {\bullet} \ar@{->}[ruu]_{b} & 
  }}
  \caption{
    The four kinds of clustering coefficients.  a) Regular
    clustering coefficient.  b) Directed clustering coefficient. c) Signed
    clustering coefficient.  d) Signed directed clustering coefficient.
    Edge $c$ is counted when edges $a$ and $b$ are present,
    and for the signed variants, weighted by $\mathrm{sgn}(abc)$.
  }
  \label{fig:clusco}
\end{figure}

The clustering coefficient is defined as the proportion of all incident
edge pairs that are completed by a third edge to form a triangle.
Figure~\ref{fig:clusco} gives an illustration. 
Given an undirected, unsigned graph $G=(V,E)$
its clustering coefficient is given by
\begin{eqnarray}
  c(G) &=& \frac
  {\{ (u, v, w) \in V^3 \mid u \sim v \sim w \sim u \}}
  {\{ (u, v, w) \in V^3 \mid u \sim v \sim w \}} 
  \label{eq:cc}
\end{eqnarray}
To extend the clustering coefficient to negative edges, we assume
structural balance for two incident signed edges.
As shown in Figure~\ref{fig:clusco}, an edge with sign $c$ completing two incident
edges with signs $a$ and $b$ to form a triangle must fulfill the equation $c=ab$. 
\begin{eqnarray}
  c_{\mathrm s}(G) &=& \frac
  { \sum_{u \sim v \sim w \sim u} \sigma(\{u,v\}) \sigma(\{v,w\}) \sigma(\{w,u\}) }
  {\{ u, v, w \in V \mid u \sim v \sim w \}}
  \label{eq:ccs1}
\end{eqnarray}
Therefore, the signed clustering coefficient denotes to what extent
the graph exhibits a balanced structure. 
In actual signed social networks, we expect it to be positive.

Additionally, we define the relative signed clustering coefficient as the quotient
of the signed and unsigned clustering coefficients.  
\begin{eqnarray}
  S(G) = \frac{c_s(G)}{c(G)} = 
  \frac
      { \sum_{u \sim v \sim w \sim u} \sigma(\{u,v\}) \sigma(\{v,w\}) \sigma(\{w,u\}) }
      {\{ u, v, w \in V \mid u \sim v \sim w \sim u \}}
      \label{eq:ccs2} 
\end{eqnarray}
The relative signed clustering coefficient takes on values between $-1$
and $+1$. It is $+1$ when all triangles are balanced.  
In networks with
negative relative signed clustering coefficients, structural balance
does not hold. 
In fact, the relative signed clustering coefficient is closely related
to the number of balanced and unbalanced triangles in a network.  If
$\Delta^+(G)$ is the number of balanced triangles and $\Delta^-(G)$ is the
number of unbalanced triangles in a signed network $G$, then
\begin{eqnarray*}
  S(G) &=& \frac {\Delta^+(G) - \Delta^-(G)} {\Delta^+(G) +
    \Delta^-(G)}. 
\end{eqnarray*}

The directed signed clustering coefficient and directed relative signed clustering
coefficient can be defined analogously using Expressions~(\ref{eq:ccs1})
and~(\ref{eq:ccs2}).  
The signed clustering coefficient and relative signed clustering coefficient are
zero in random networks, when the sign of edges is distributed equally. 
The signed clustering coefficients are by definition smaller than
their unsigned counterparts.

\begin{table}
  \caption{
  \label{tab:clusco}
    The values for all variants of the clustering
    coefficient for the example datasets. 
    The directed variants are not computed for the two undirected
    datasets. 
  }
  \centering
    \begin{tabular}{l|r r r|r r r}
      \toprule
      \textbf{Network} & \multicolumn{3}{c|}{\textbf{Undirected}} &
      \multicolumn{3}{c}{\textbf{Directed}} \\
      & $c(G)$ & $c_{\mathrm s}(G)$ & $S(G)$ &
      $c(G)$ & $c_{\mathrm s}(G)$ & $S(G)$ \\
      \midrule
      Slashdot Zoo & 0.0318 & 0.00607 & 19.1\% & 0.0559 & 0.00918 & 16.4\% \\
      Epinions & 0.1107 & 0.01488 & 13.4\% & 0.1154 & 0.01638 & 14.2\% \\
      Wikipedia elections & 0.1391 & 0.01489 & 10.9\% & 0.1654 & 0.02427 & 14.7\% \\
      Wikipedia conflicts & 0.0580 & 0.03342 & 57.6\% & -- & -- & -- \\
      \midrule
      Highland tribes & 0.5271 & 0.30289 & 57.5\% & -- & -- & -- \\
      \bottomrule
    \end{tabular}
\end{table}

Table~\ref{tab:clusco} gives all four variants of the clustering coefficient
measured in the example datasets, along with the relative signed
clustering coefficients.   
The high values for the relative clustering coefficients show that
our multiplication rule is valid in the examined datasets, and justifies
the structural balance approach.

\section{Visualizing Structural Balance:  \\ Signed Graph Drawing}
\label{sec:drawing}
To motivate the use of algebraic graph theory based on structural
balance, we consider the problem of 
drawing signed graphs and show how it naturally leads to our definition
of the Laplacian matrix for signed graphs.  
We begin by showing how the signed Laplacian matrix arises
naturally in the task of drawing graphs with negative edges when one
tries to place each node near to its positive neighbors and opposite to
its negative neighbors, extending a standard method of graph drawing in
the presence of only positive edges.  

The Laplacian matrix turns up in
graph drawing when we try to find an embedding of a graph into a plane in
a way that adjacent nodes are drawn near to each other~\cite{b287}.
In the literature, signed graphs have been drawn using eigenvectors of
the signed adjacency matrix~\cite{b400}.  
Instead, our approach consists of using the Laplacian to draw signed
graphs, in analogy with the unsigned case. 
To do this, we will stipulate that negative edges should be drawn as far
from each other as possible.

\subsection{Unsigned Graphs}
We now describe the general method for generating an embedding of the
nodes of an unsigned graph into the plane using the Laplacian matrix.
Let $G=(V,E)$ be a connected unsigned graph with adjacency matrix
$\mathbf A$. 
We want to find a two-dimensional drawing of $G$
in which each vertex is drawn near to its neighbors.  This requirement
gives rise to the following vertex equation, which states that every
vertex is placed at the mean of its neighbors' coordinates, weighted by
the sign of the connecting edges.  Let $\mathbf X \in
\mathbb{R}^{n\times 2}$ be a matrix whose columns are the coordinates of
all nodes in the drawing, then we have for each node $u$: 
\begin{eqnarray}
  \mathbf X_{u \bullet} = \left(\sum_{u \sim v} \mathbf A_{uv}\right)^{-1}
  \sum_{u \sim v}
  \mathbf A_{uv} \mathbf X_{v \bullet} \label{eq:mean}
\end{eqnarray}
Rearranging and aggregating the equation for all $u$ we arrive at
\begin{eqnarray}
  \mathbf D \mathbf X = \mathbf A \mathbf X \label{eq:dx_ax} 
\end{eqnarray}
or
\begin{eqnarray*}
  \mathbf L \mathbf X &=& \mathbf 0. 
\end{eqnarray*}
In other words, the columns of $\mathbf X$
should belong to the null space
of $\mathbf L$, which leads to the degenerate
solution of $\mathbf X_{u \bullet} = \mathbf 1$ for all $u$, i.e., each
$\mathbf X_{u \bullet}$ having
all components equal, as the all-ones vector $\mathbf 1$ is an
eigenvector of $\mathbf L$ with eigenvalue zero.  To exclude that solution, we
require that the columns $\mathbf X$ be orthogonal to
$\mathbf 1$. 
Additionally, to avoid the degenerate solution $\mathbf X_{
  u\bullet}=\mathbf X_{v \bullet}$
for $u \neq v$, we require that all columns of $\mathbf X$ be
orthogonal. 
This leads to $\mathbf X_{u \bullet}$ being the
eigenvectors associated with the two smallest eigenvalues
of $\mathbf L$ different from
zero.  This solution results in a well-known satisfactory embedding of
unsigned graphs.  Such an embedding is related to the resistance
distance (or commute-time distance) between nodes of the
graph~\cite{b287}.

Note that Equation~(\ref{eq:dx_ax}) can also be transformed to $\mathbf X
= \mathbf D^{-1} \mathbf A\mathbf X$, leading to the eigenvectors of the
asymmetric matrix $\mathbf D^{-1} \mathbf A$.  This alternative
derivation is not investigated here. 

\subsection{Signed Graphs}
\label{subsec:general-weighted-graphs}
We now extend the graph drawing method described in the previous section
to graphs with positive and negative edges.  
To adapt Expression~(\ref{eq:mean}) to negative edges, we interpret a
negative edge as an indication that two vertices should be placed on
opposite sides of the drawing.  
Therefore, we take the opposite coordinates $-\mathbf X_{v \bullet}$ of vertices $v$
adjacent to $u$ through a negative edge,
and then compute the mean, as
pictured in Figure~\ref{fig:mean}.  We may call this construction
\emph{antipodal proximity}. 

\begin{figure}
  \centering
  \subfigure[Unsigned graph]{
    \includegraphics[width=\wThree]{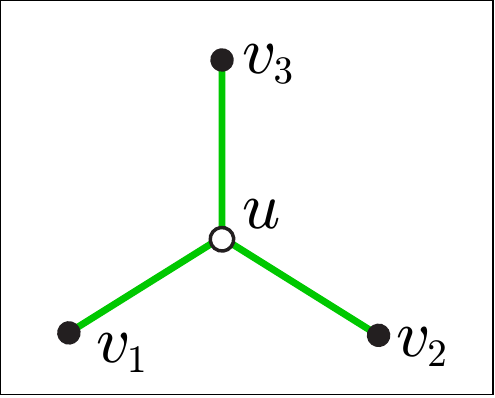}
  }
  \subfigure[Signed graph]{
    \includegraphics[width=\wThree]{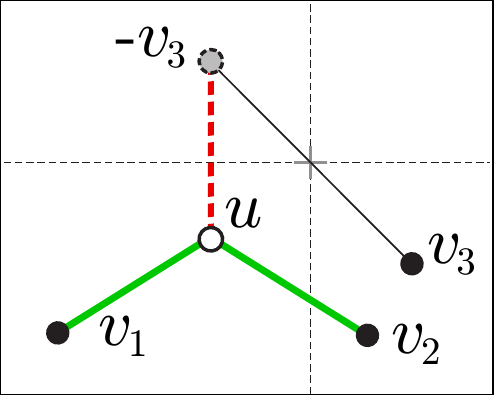}
  }
  \caption{
    Drawing the vertex $u$ at the mean coordinates of its neighbors
    $v_1, v_2, v_3$ by
    proximity and antipodal proximity.
    (a)~In unsigned graphs, a vertex $u$ is placed at the mean of its 
    neighbors $v_1, v_2, v_3$.
    (b)~In signed graphs, a vertex $u$ is
    placed at the mean of 
    its positive neighbors $v_1, v_2$
    and antipodal points $-v_3$ of its negative neighbors. 
  }
  \label{fig:mean}
\end{figure}

This leads to the vertex equation 
\begin{eqnarray}
  \mathbf X_{u \bullet} &= \left( \sum_{\{u,v\}\in E} |\mathbf A_{ij}|\right)^{-1}
  \sum_{\{u,v\}\in E}
  \mathbf A_{uv} \mathbf X_{v \bullet} \label{eq:signed-mean}
\end{eqnarray}
resulting in a signed Laplacian matrix $\mathbf L= \mathbf D-\mathbf A$
in which indeed the definition of the degree matrix $\mathbf D_{uu} =
\sum_v |\mathbf A_{uv}|$ leads to the same equation $\mathbf L \mathbf
X = \mathbf 0$ as in the unsigned case. 

As we will see in the next section, $\mathbf L$ is always
positive-semidefinite, and is positive-definite for graphs
that are unbalanced, i.e., graphs that contain cycles with an odd number of
negative edges.
To obtain a graph drawing from $\mathbf L$, we can thus distinguish three cases,
assuming that $G$ is connected:
\begin{itemize}
\item If all edges are positive, then $\mathbf L$ has one eigenvalue
  zero, and the eigenvectors of the two smallest nonzero eigenvalues can be
  used for graph drawing. 
\item If the graph is unbalanced, $\mathbf L$ is positive-definite and
  we can use the eigenvectors of the two smallest eigenvalues as
  coordinates.  
\item If the graph is balanced, 
  its spectrum is equivalent to that of the
  corresponding unsigned Laplacian matrix, up to signs of the
  eigenvector components.  Using the eigenvectors of the two smallest
  eigenvalues (including zero), we arrive at a graph drawing with all
  points being placed on two parallel lines, reflecting the perfect
  2-clustering present in the graph. 
\end{itemize}

\subsection{Synthetic Examples}
Figure~\ref{fig:toy} shows four small synthetic signed graphs drawn
using the eigenvectors of three characteristic graph matrices.
For each synthetic signed graph, let $\mathbf A$ be its adjacency
matrix, $\mathbf L$ its Laplacian matrix,
and $\mathbf{\bar L}$ the Laplacian matrix of the
corresponding unsigned graph $\bar G=|G|$, i.e., the same graph as $G$, only
that all edges are positive. 
For $\mathbf A$, we use the
eigenvectors corresponding to the two largest absolute eigenvalues. For
$\mathbf L$ and $\mathbf{\bar L}$, we use the eigenvectors of the two smallest
nonzero eigenvalues. 
The small synthetic examples are chosen to display
the basic spectral properties of these three matrices.
All graphs
contain cycles with an odd number of negative edges. 
Column~(a) shows all graphs drawn using the eigenvectors of the
two largest eigenvalues of the adjacency matrix $\mathbf A$.  
Column~(b) shows the unsigned Laplacian embedding of the graphs by
setting all edge weights to $+1$.
Column~(c) shows the signed Laplacian embedding.
The embedding given by the eigenvectors of $\mathbf A$ is clearly not
satisfactory for graph drawing.  As expected, the graphs drawn using the
ordinary Laplacian matrix place nodes connected by a negative edge near
to each other.
The signed Laplacian matrix produces a graph embedding where
negative links span large distances across the drawing, as required. 

\begin{figure}
  \centering
  \begin{tabular}{r c c c}

    \raisebox{\raiseToy}{(1)} &
    \includegraphics[width=\wToy]{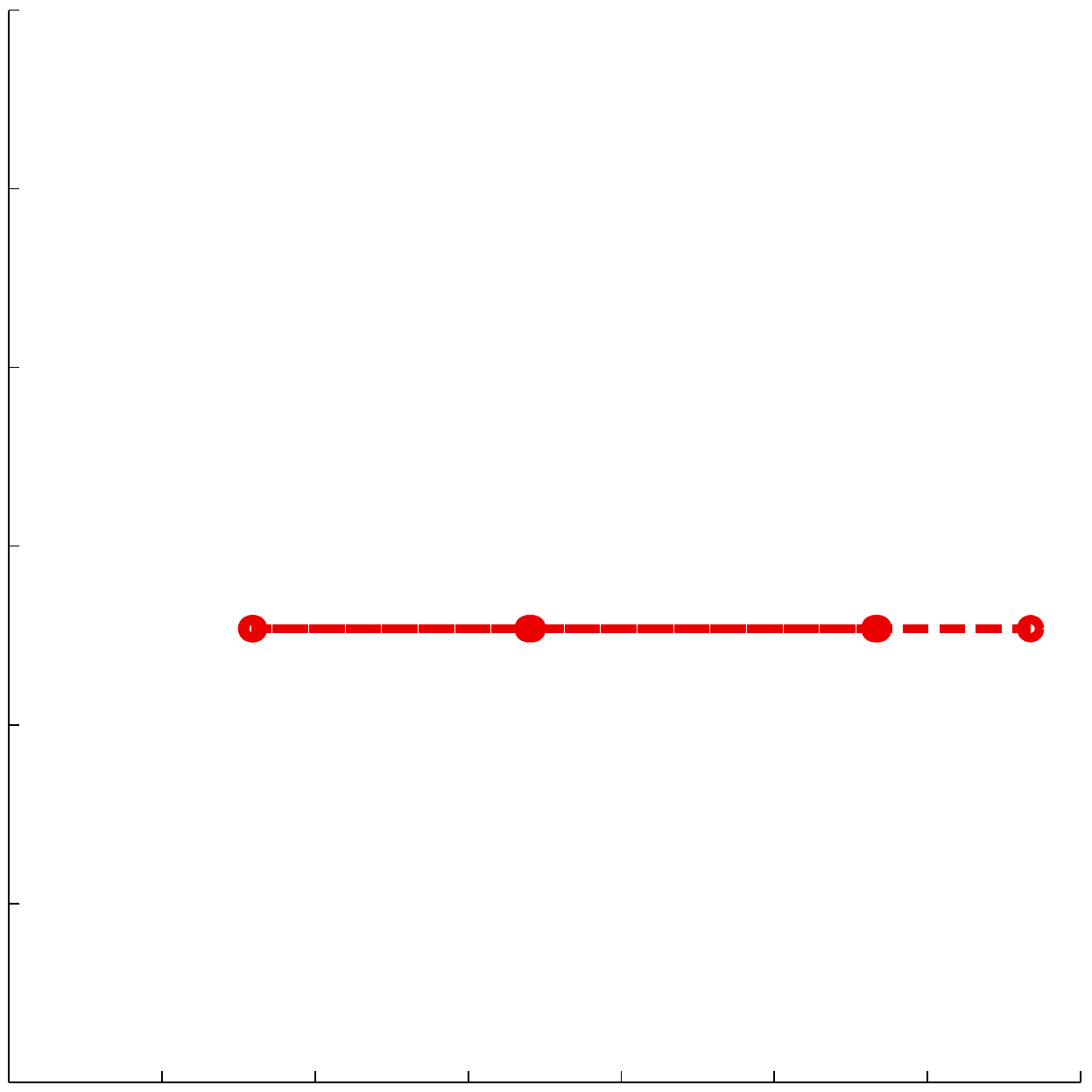} &
    \includegraphics[width=\wToy]{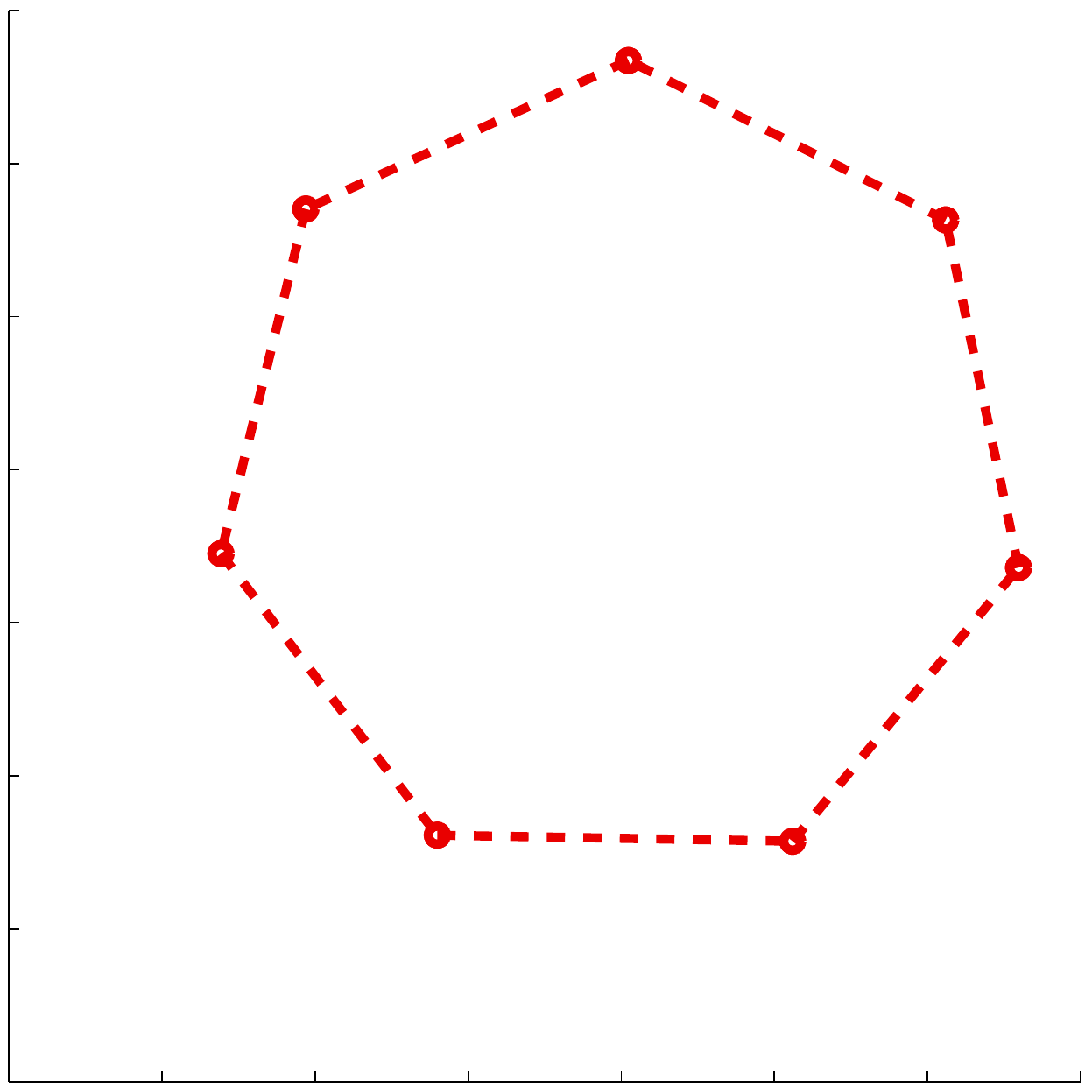} &
    \includegraphics[width=\wToy]{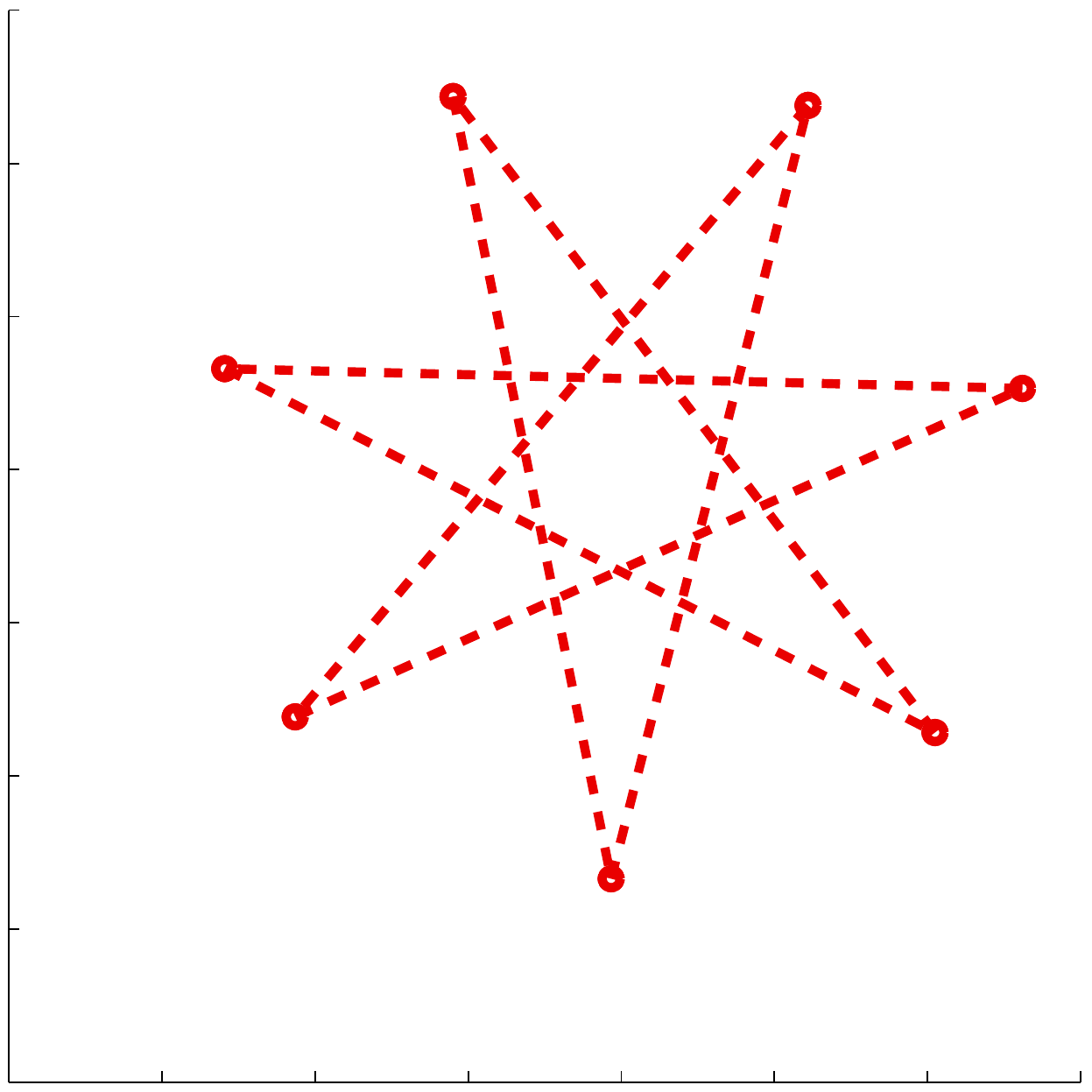} \\

    \raisebox{\raiseToy}{(2)} &
    \includegraphics[width=\wToy]{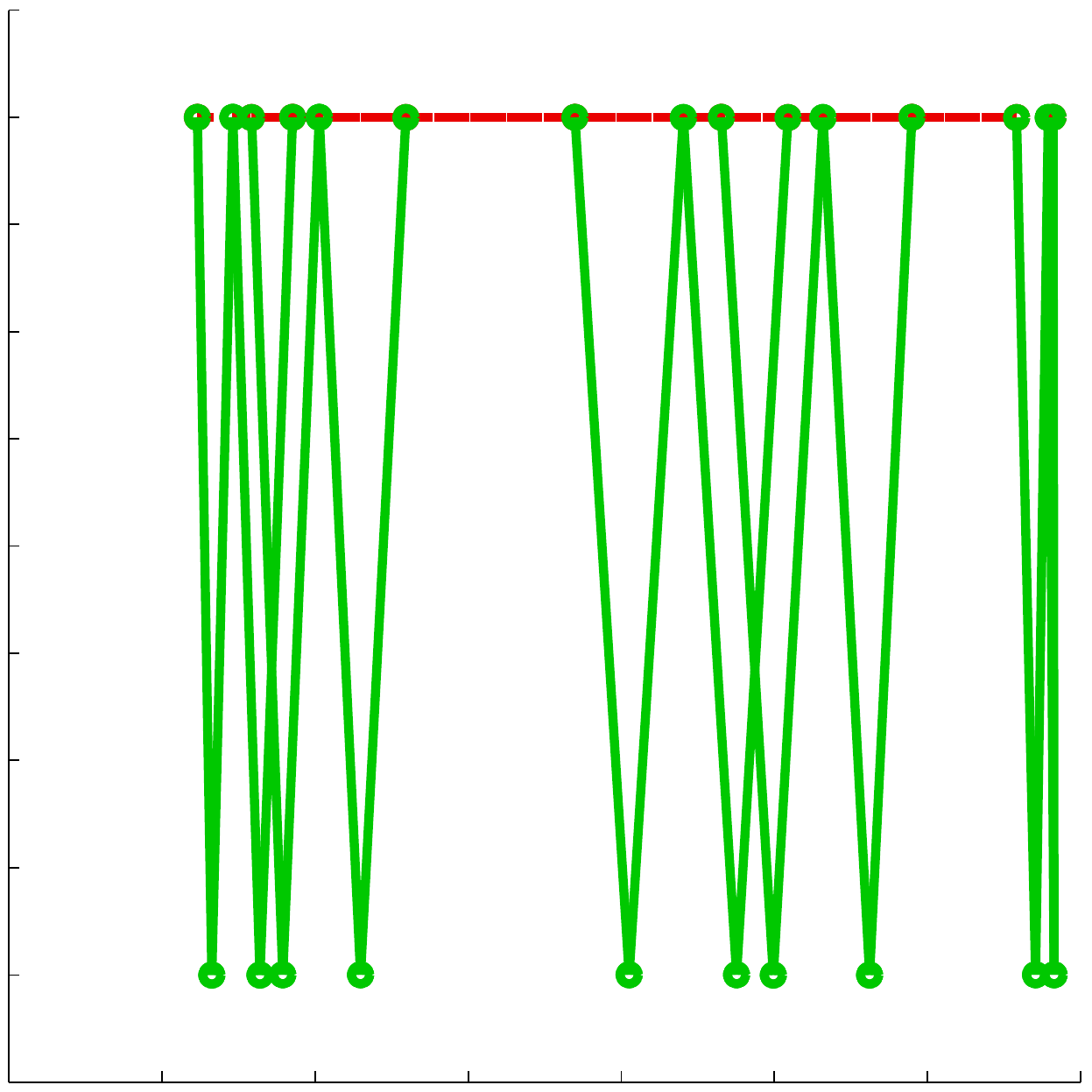} &
    \includegraphics[width=\wToy]{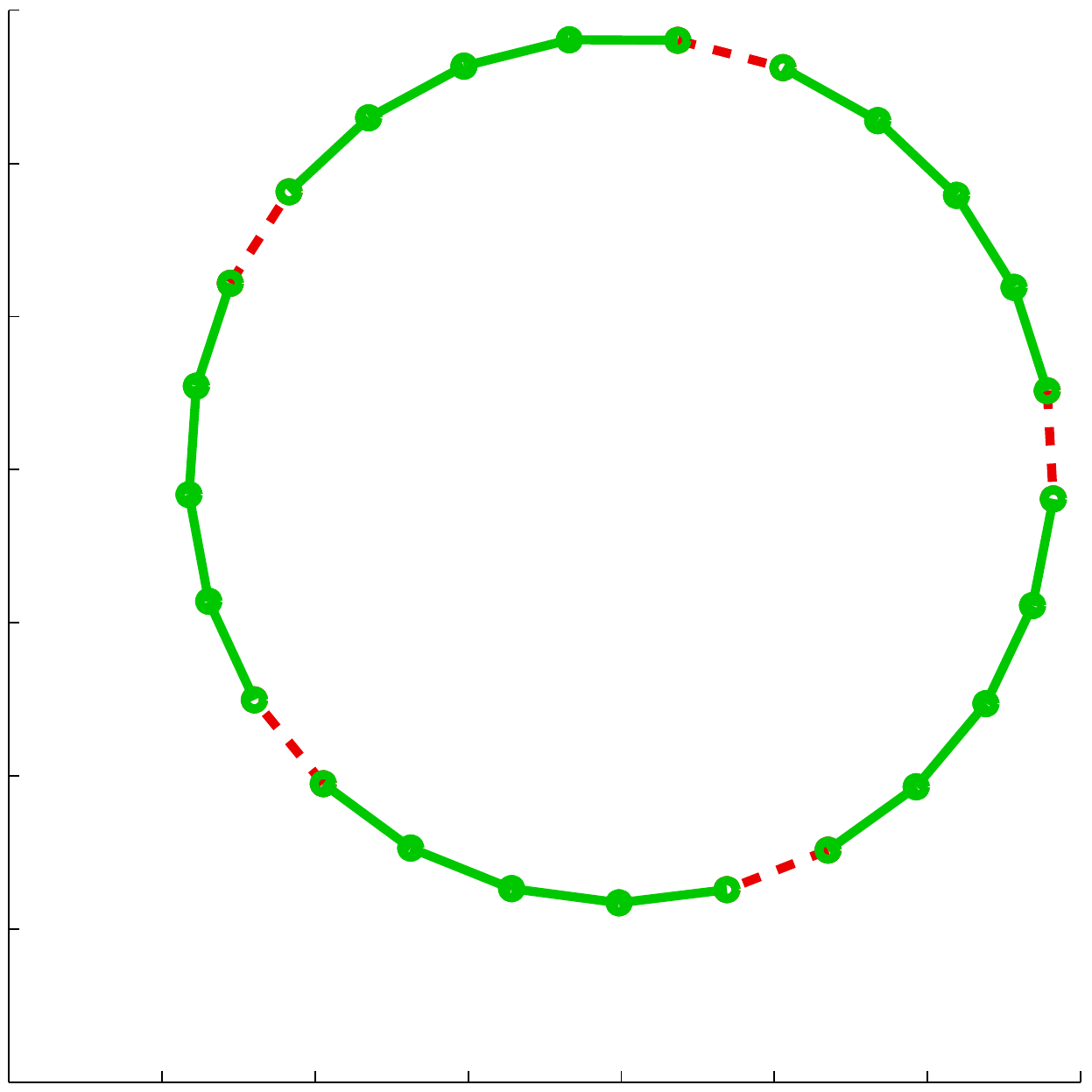} &
    \includegraphics[width=\wToy]{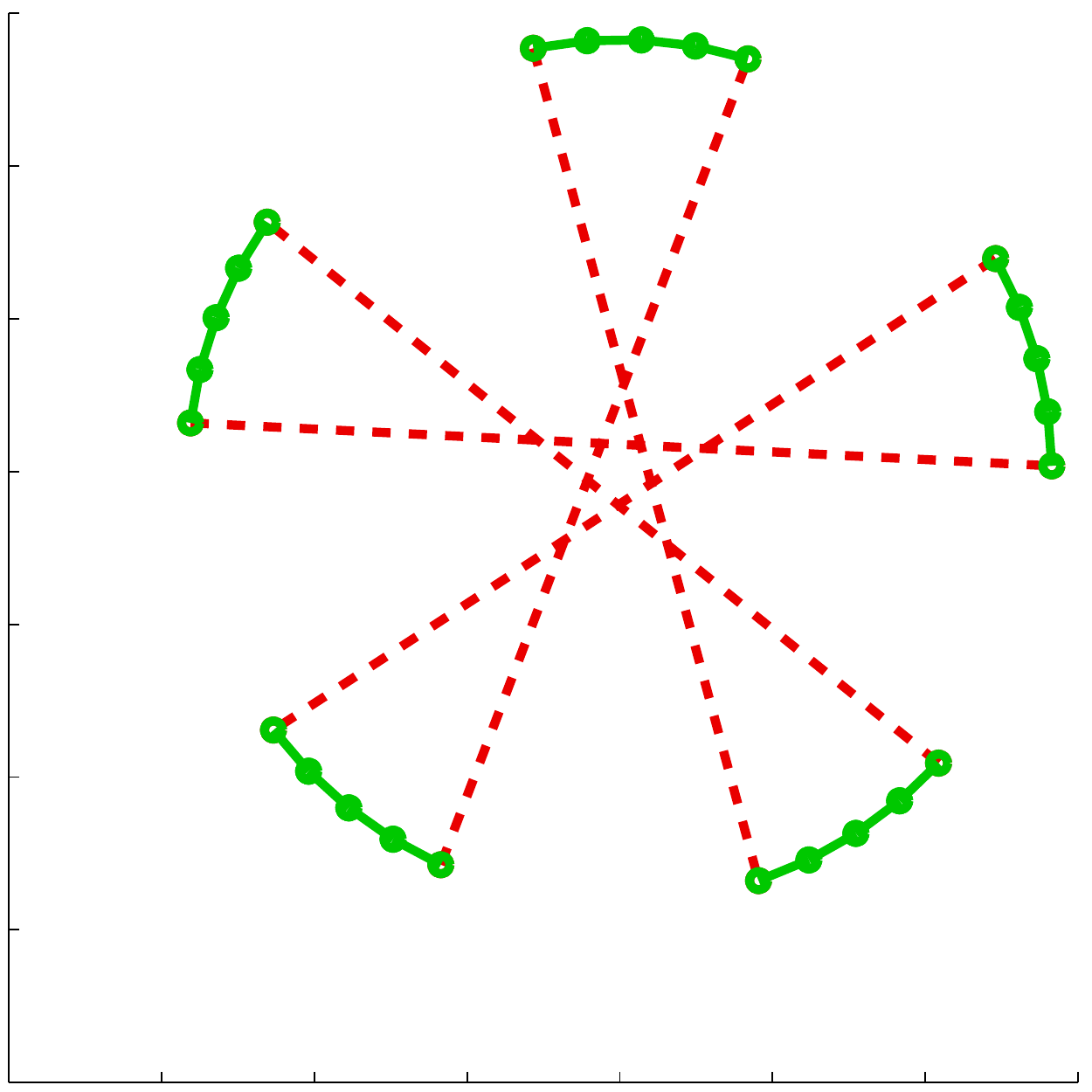} \\

    \raisebox{\raiseToy}{(3)} &
    \includegraphics[width=\wToy]{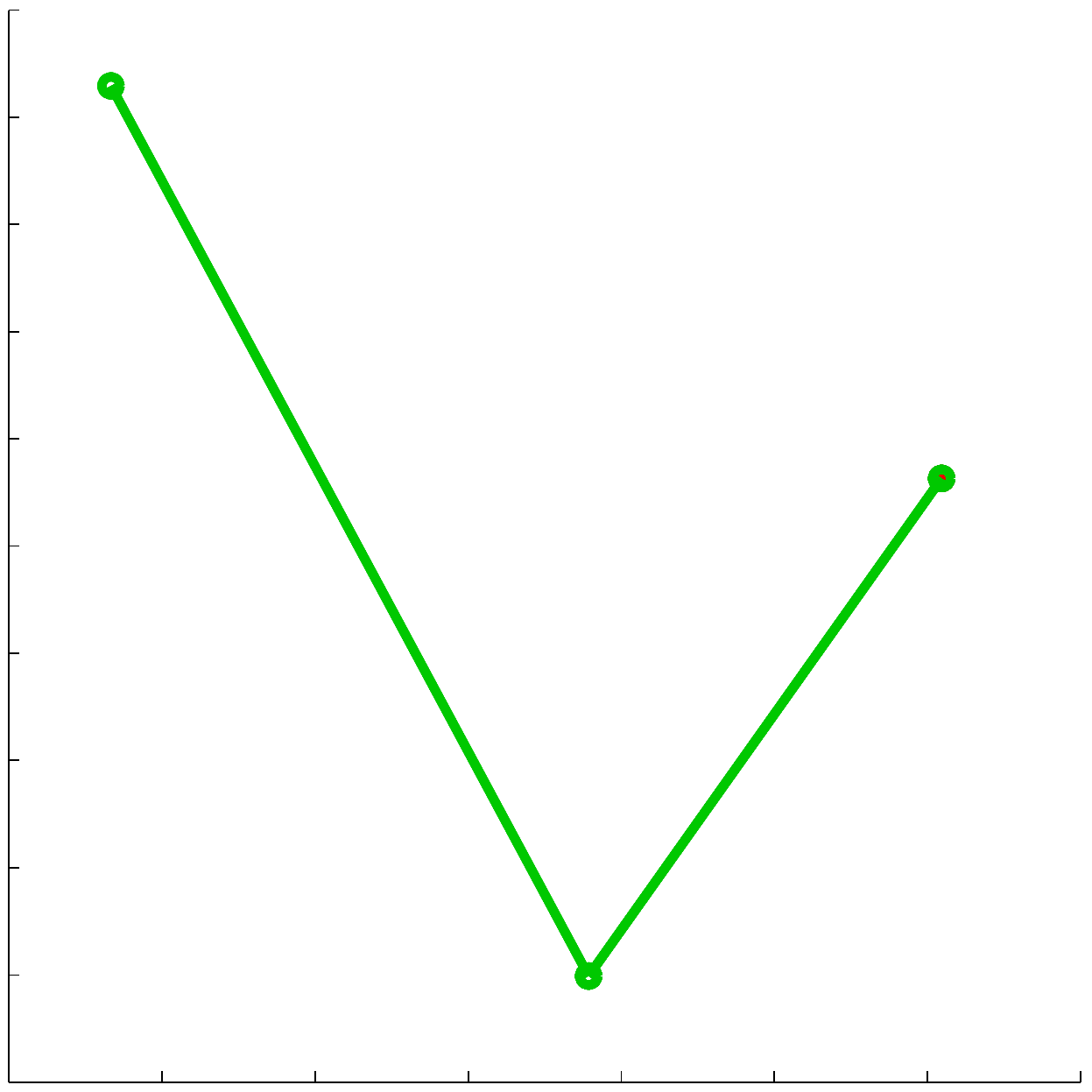} &
    \includegraphics[width=\wToy]{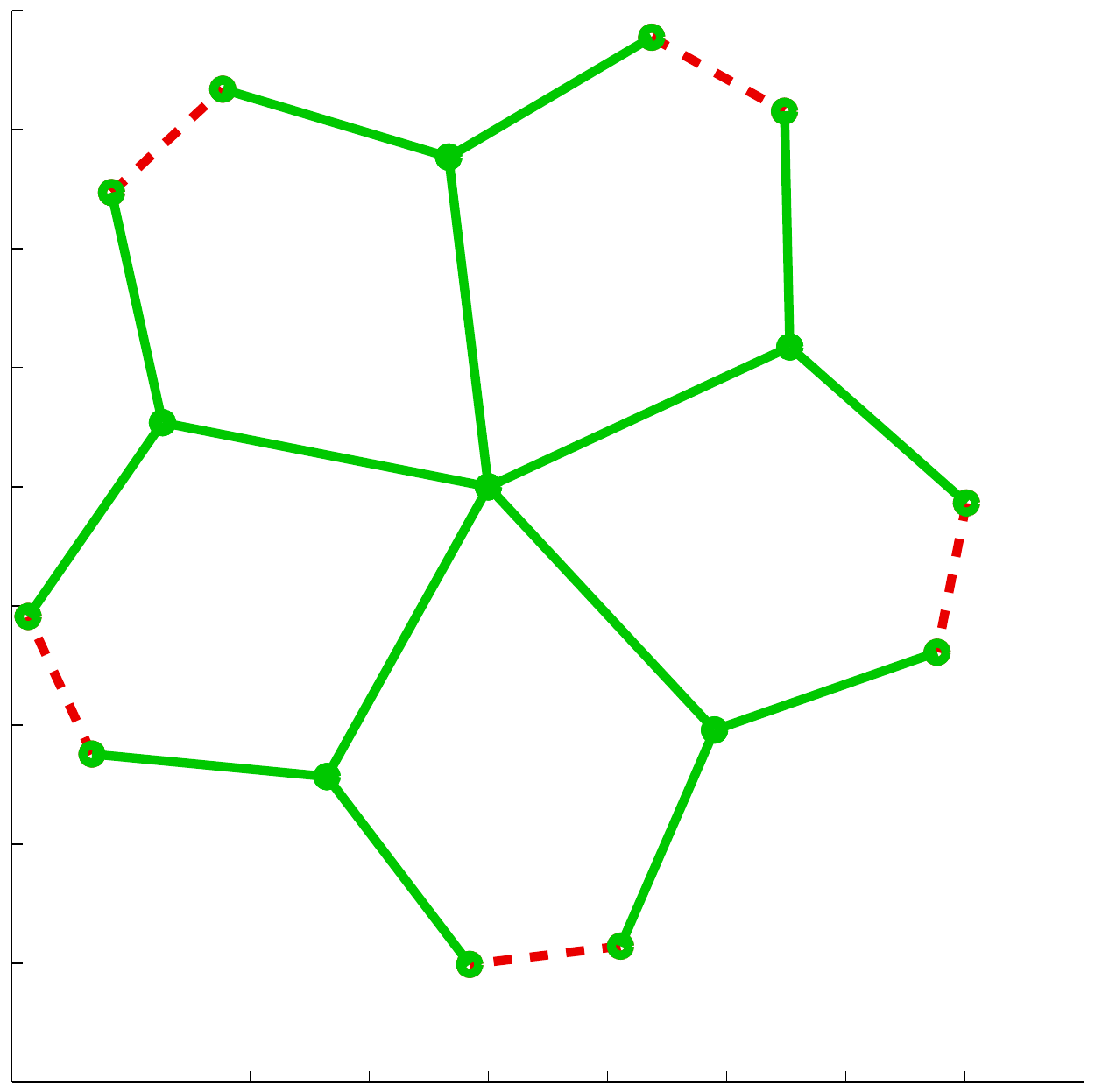} &
    \includegraphics[width=\wToy]{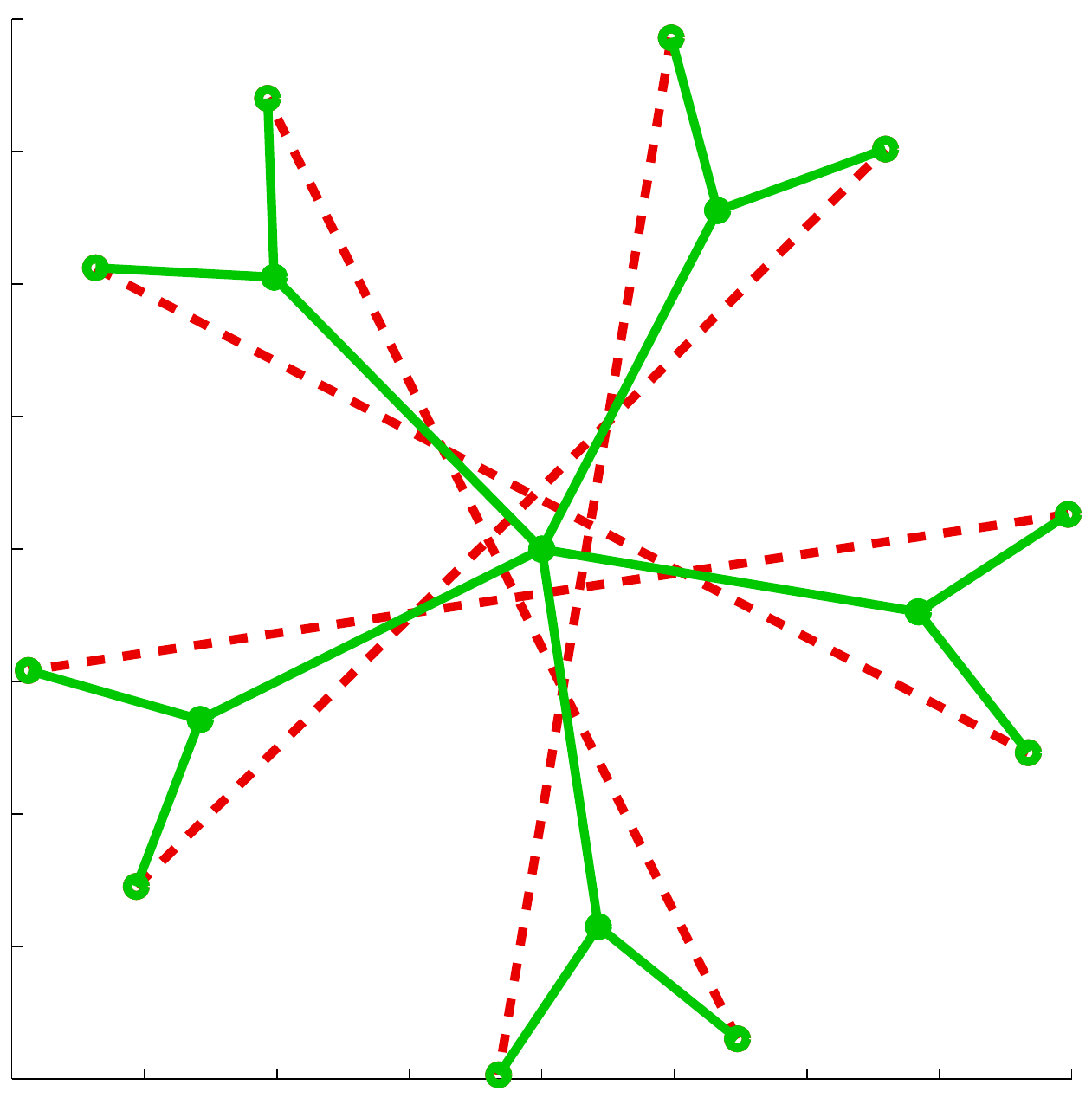} \\

    \raisebox{\raiseToy}{(4)} &
    \includegraphics[width=\wToy]{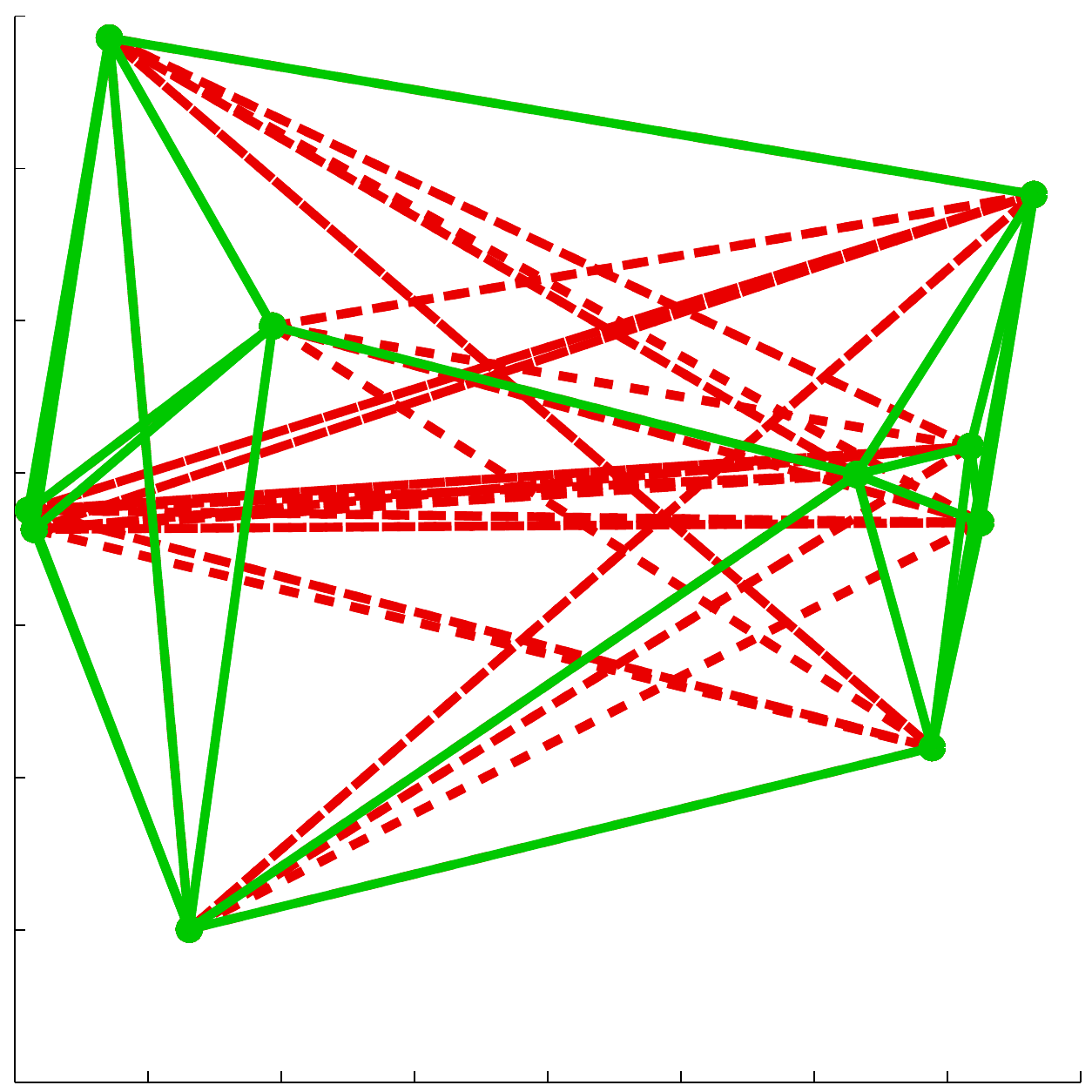} &
    \includegraphics[width=\wToy]{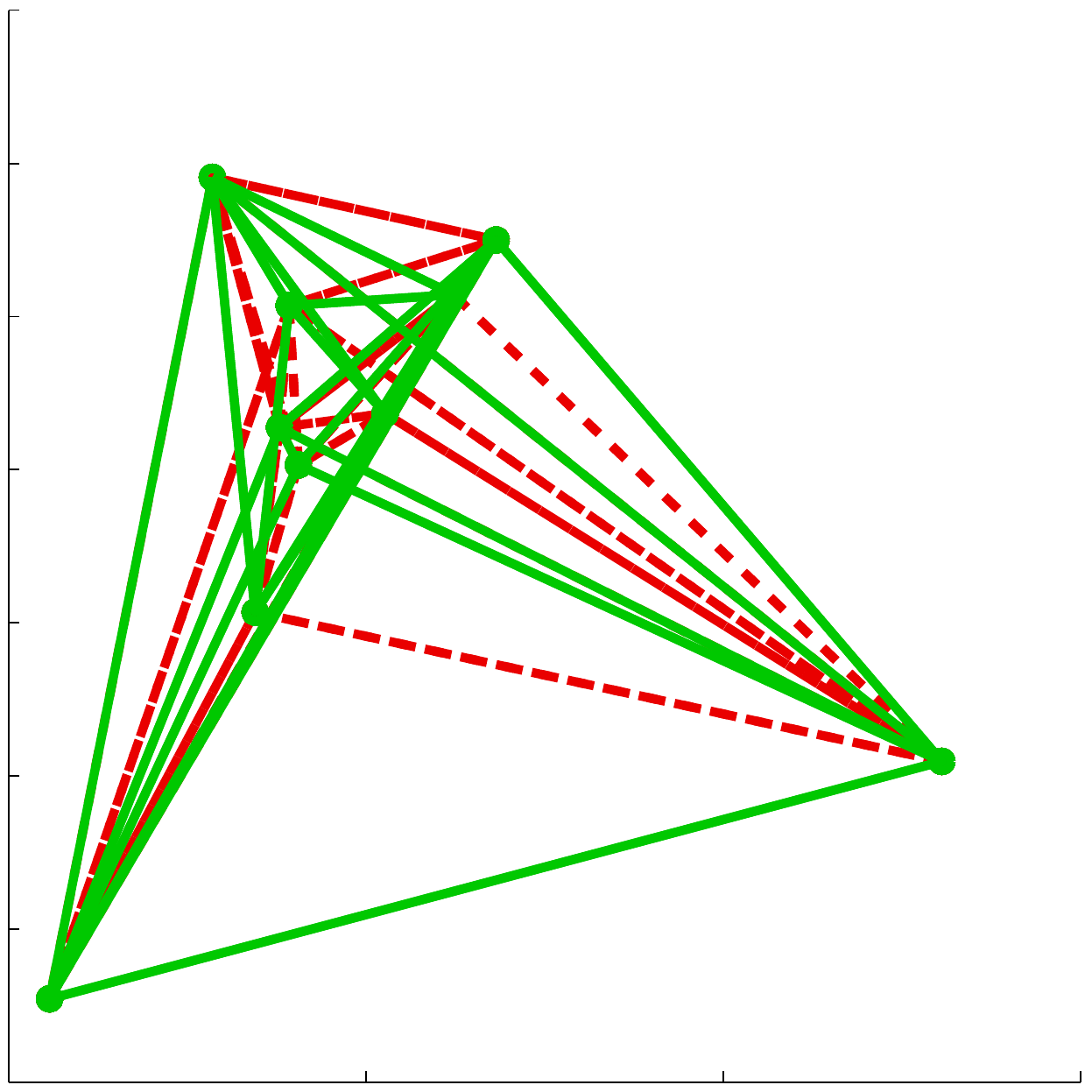} &
    \includegraphics[width=\wToy]{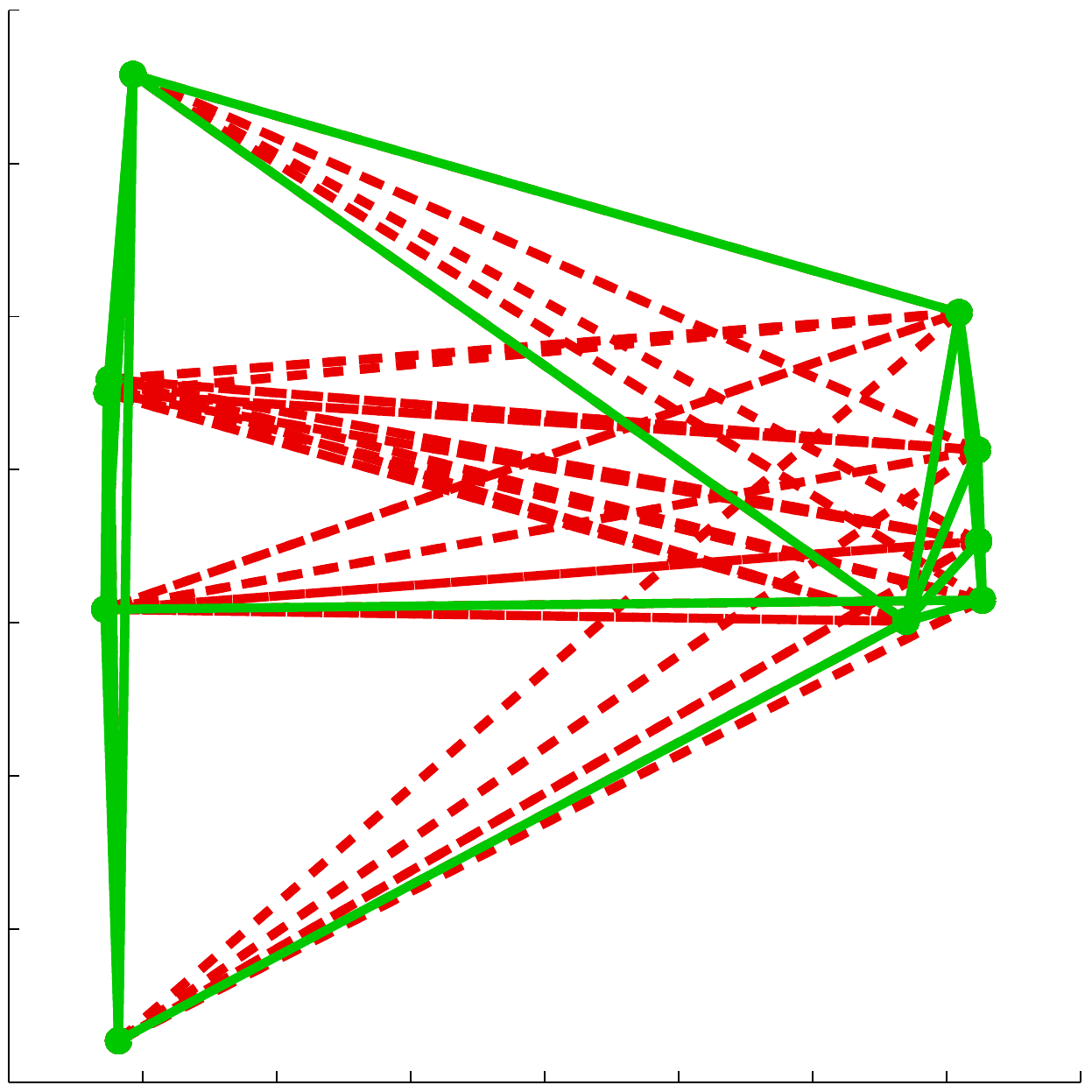} \\
    
    & (a) $\mathbf A$ & (b) $\mathbf{\bar L}$ & (c) $\mathbf L$
  \end{tabular}

  \caption{
    Four small synthetic signed graphs [(1)--(4)] drawn using the
    eigenvectors of three graph matrices. 
    (a) the adjacency matrix~$\mathbf A$,  (b) the Laplacian
    $\mathbf{\bar L}$ of the underlying unsigned
    graph $\bar G$,  (c) the Laplacian $\mathbf L$. 
    All graphs shown contain
    negative cycles, and their signed Laplacian matrices are 
    positive-definite. 
    Positive edges are shown as solid green lines and
    negative edges as red dashed lines. 
  }
  \label{fig:toy}
\end{figure}

\paragraph{Drawing a Balanced Graph}
We assume, in the derivation above, that the eigenvectors corresponding
to the two smallest eigenvalues of the Laplacian $\mathbf L$ should be
used for graph drawing. This is true in that it gives the best possible
drawing according to the proximity and distance criteria of
positive and negative edges. 
If however the graph is balanced, then, as we will see, the smallest eigenvalue of $\mathbf L$ is
zero.  Unlike 
the case in unsigned graphs however, the corresponding eigenvector is not constant
but contains values $\{\pm 1\}$ describing the split into two partitions.
If we use that eigenvector to draw the graph, the resulting drawing
will place all vertices on two lines.  Such an embedding may be
satisfactory in cases where the perfect balance of the graph is to be
visualized.  If however positive edges among each partition's vertices
are also to be visible, the eigenvector corresponding to the third smallest
eigenvalue can be added with a small weight to the first eigenvector,
resulting in a two-dimensional representation of a 3-dimensional
embedding.  
The resulting three methods are illustrated in
Figure~\ref{fig:balanced-solution}.  

\begin{figure}
  \centering
  \subfigure[$(\lambda_1, \lambda_2)$]{
    \includegraphics[width=\wThree]{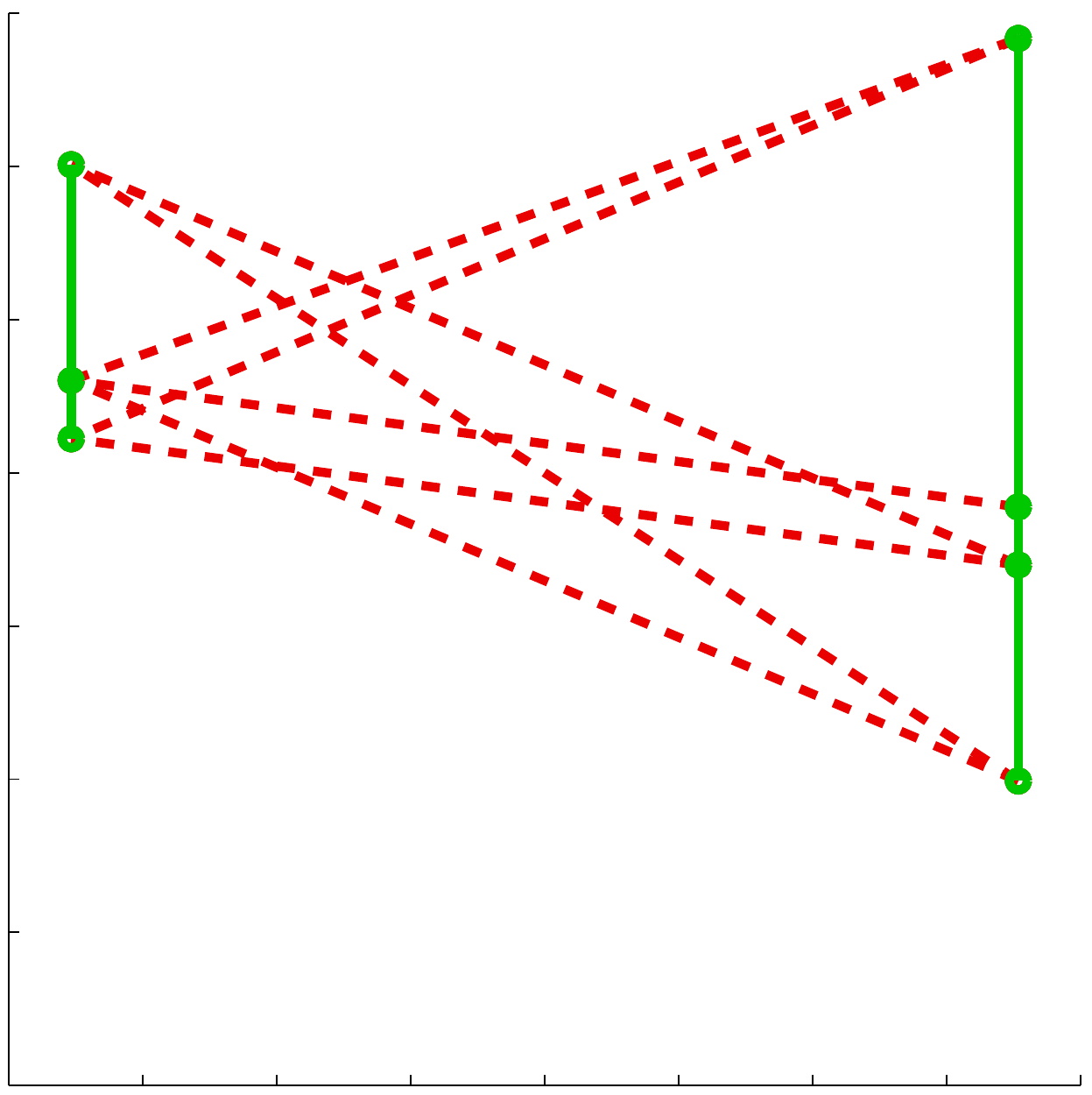}
  }
  \subfigure[$(\lambda_2, \lambda_3)$]{
    \includegraphics[width=\wThree]{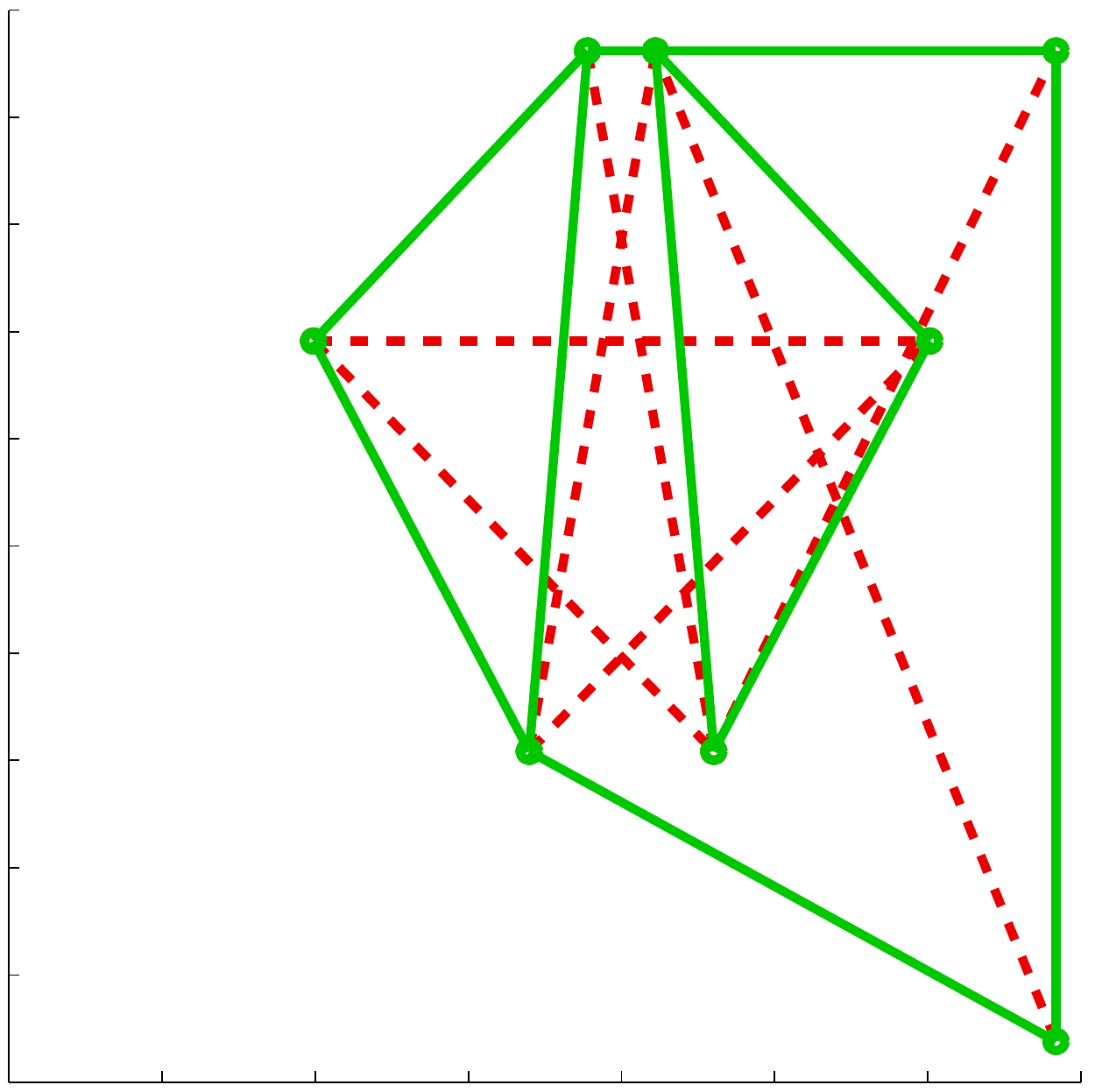}
  }
  \subfigure[$(\lambda_1+0.3\lambda_3, \lambda_2)$]{
    \includegraphics[width=\wThree]{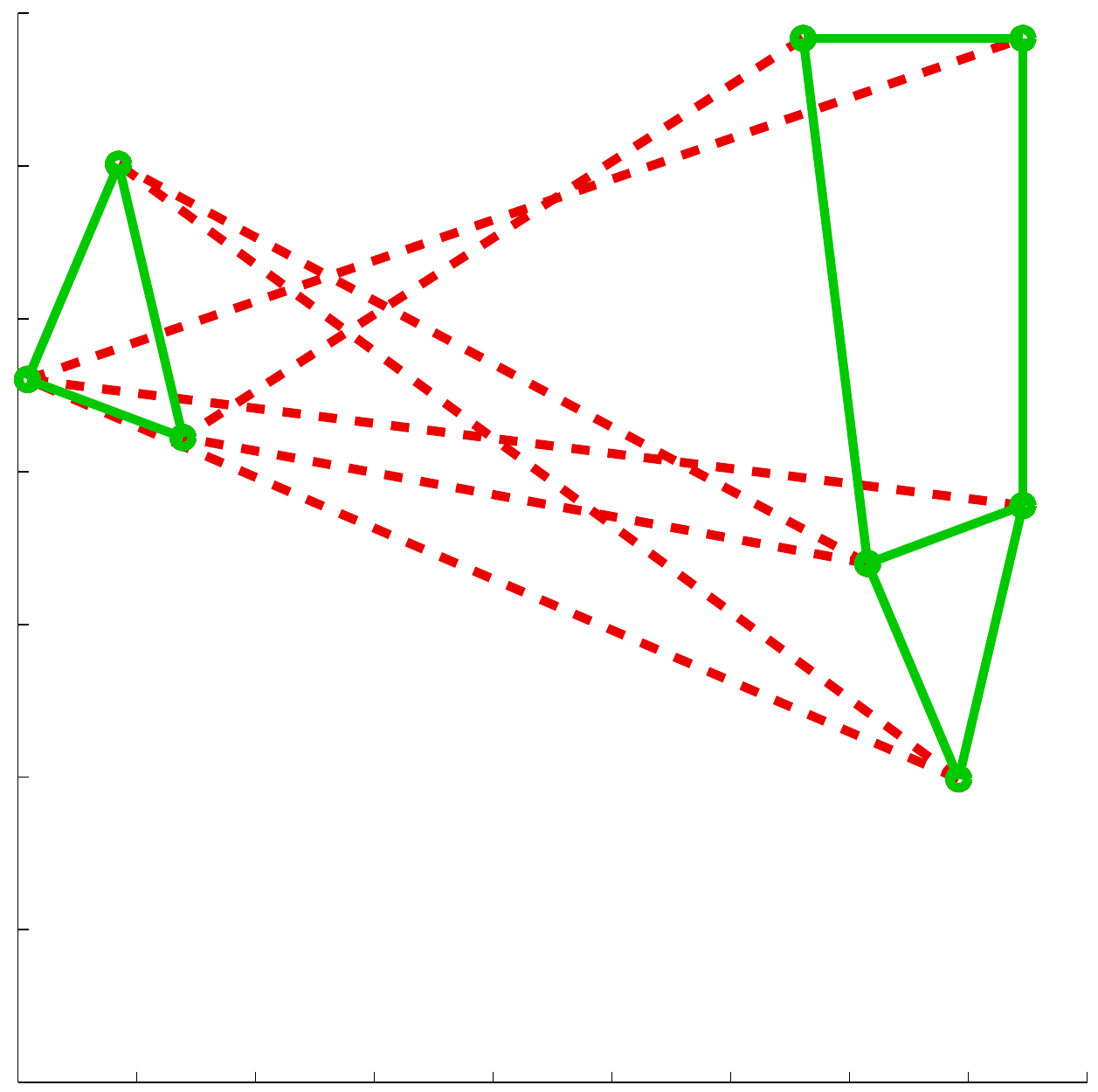}
  }
  \caption{
    Three methods for drawing a balanced signed graph, using a small
    artificial example network.  
    (a) Using the eigenvector corresponding to the smallest eigenvalue
    $\lambda_1 = 0$, intra-cluster structure is lost.  
    (b) Ignoring the first eigenvalue misses important information about
    the clustering.  
    (c) Using a linear combination of both methods gives a good
    compromise.  
  }
  \label{fig:balanced-solution}
\end{figure}

In practice, large graphs are almost always unbalanced as shown in
Figure~\ref{fig:spectra} and Table~\ref{tab:smallest}, and the two
smallest eigenvalues give a satisfactory embedding. 
Figure~\ref{fig:large} shows large signed networks drawn using the two
eigenvectors of the smallest eigenvalues of the Laplacian matrix
$\mathbf L$ for three signed networks. 

\begin{figure}
  \centering
  \subfigure[Slashdot Zoo]{
    \includegraphics[width=\wThree]{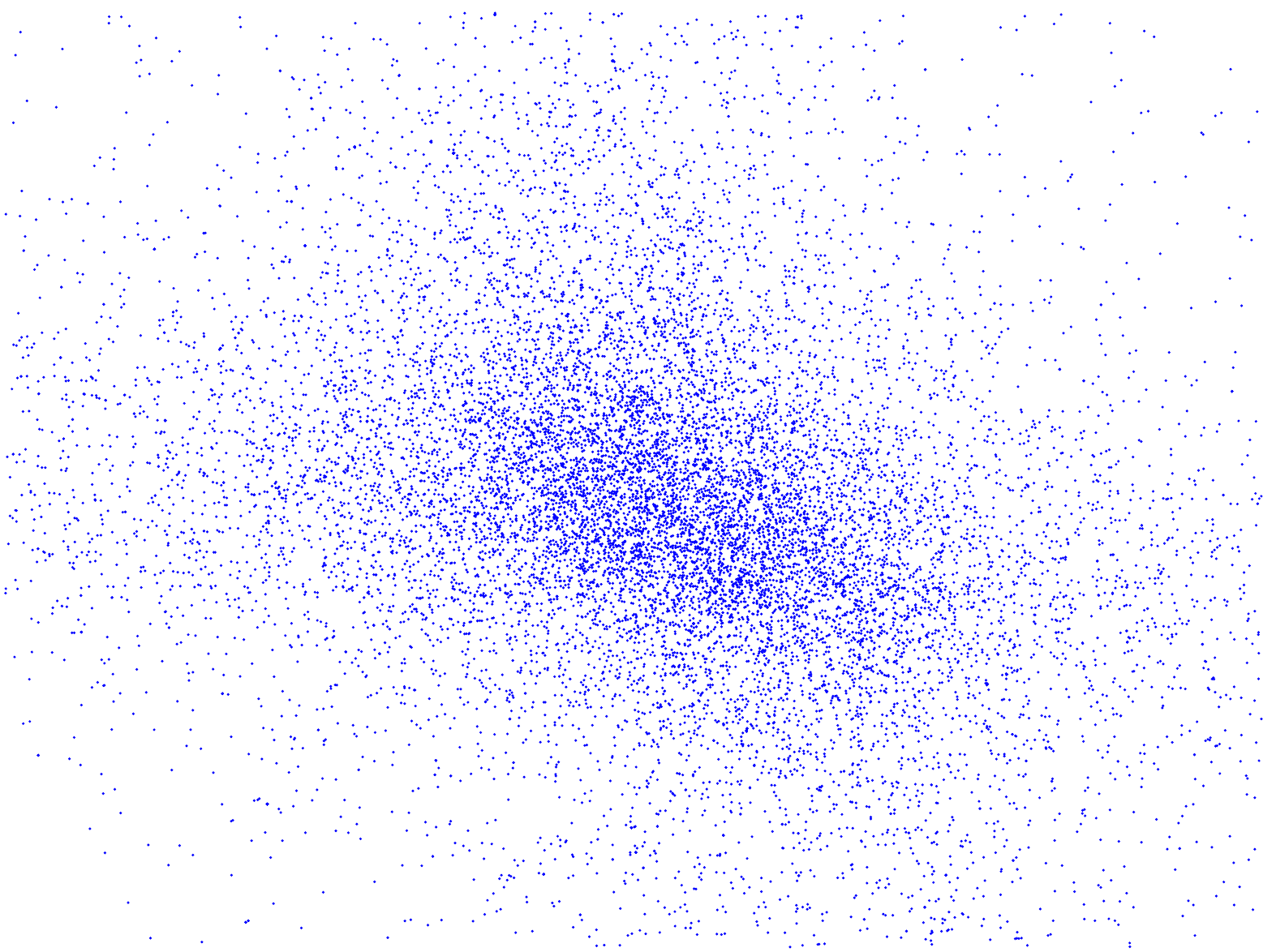}}
  \subfigure[Epinions]{ 
    \includegraphics[width=\wThree]{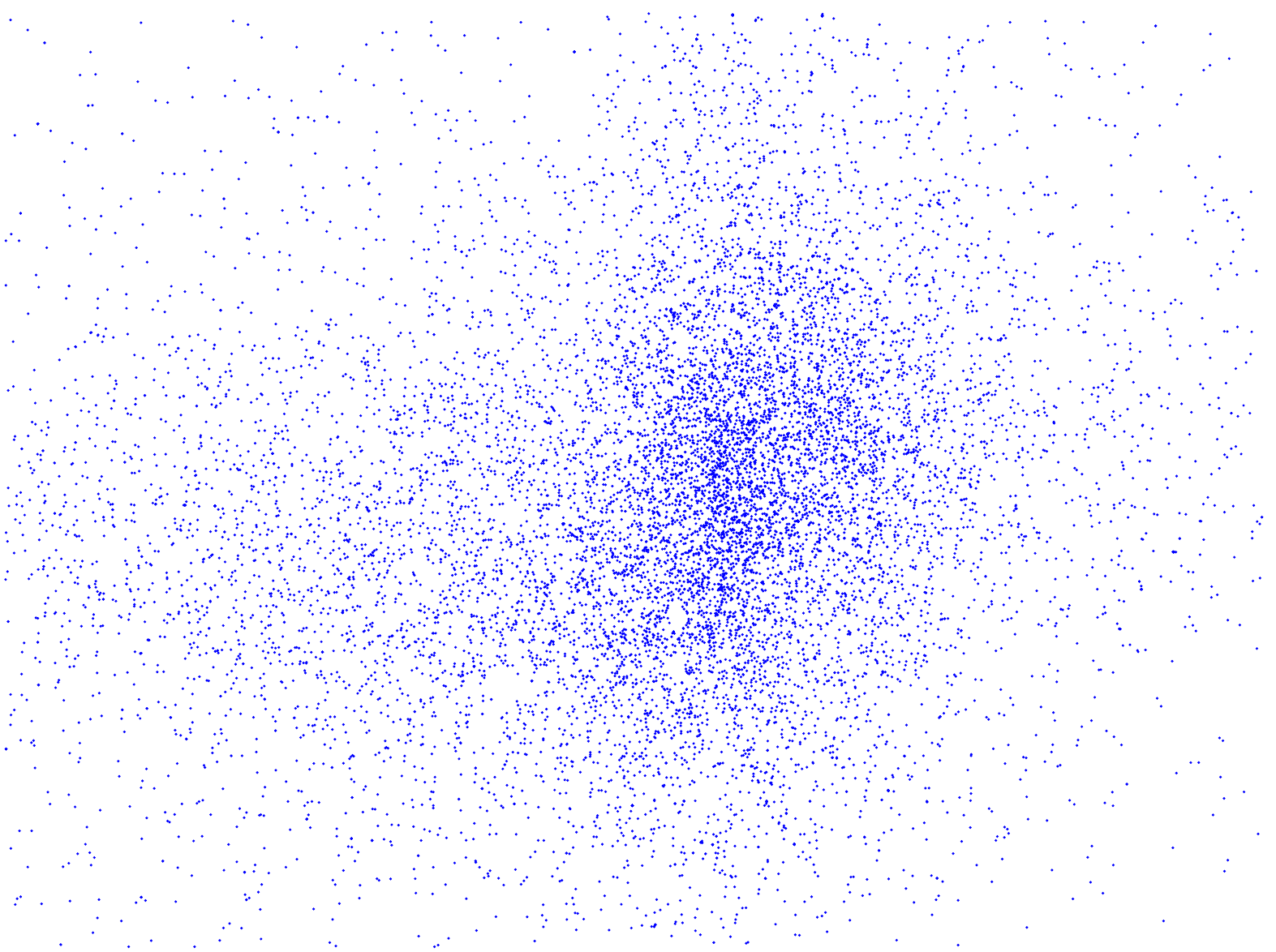}}
  \subfigure[Wikipedia elections]{
    \includegraphics[width=\wThree]{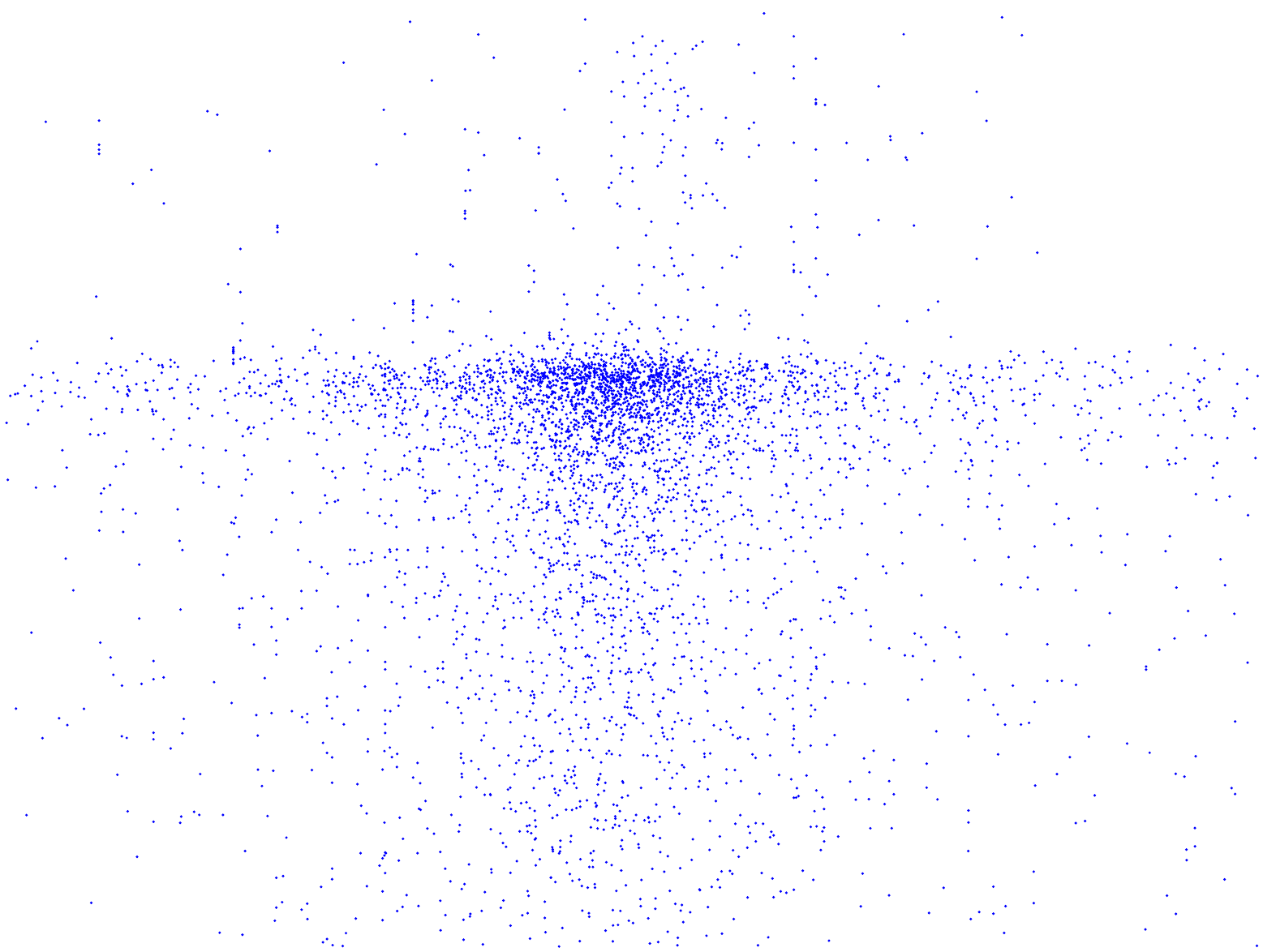}}
  \subfigure[Wikipedia conflicts]{
    \includegraphics[width=\wThree]{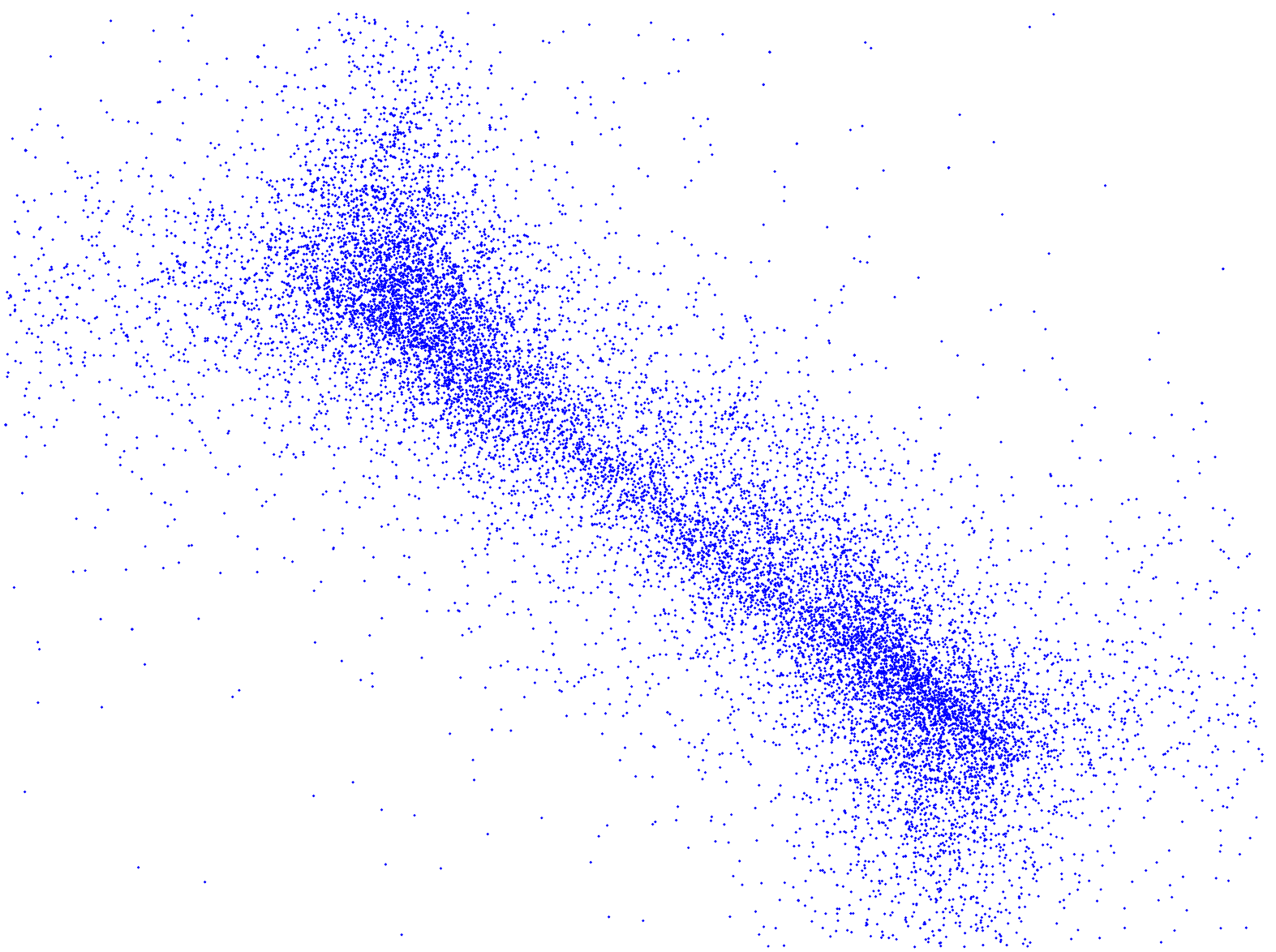}}
  \subfigure[Highland tribes]{
    \includegraphics[width=\wThree]{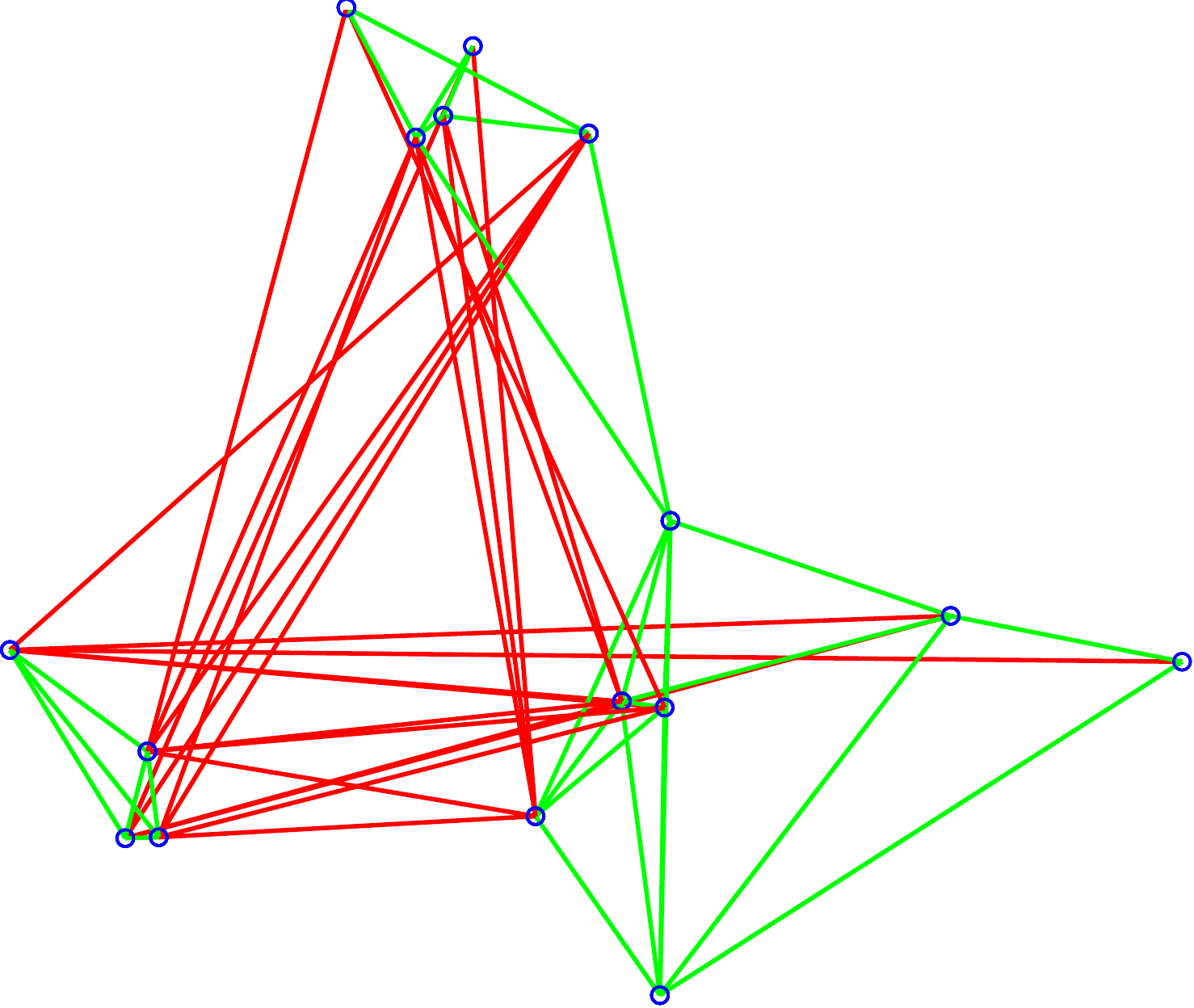}}
  \caption{
    Signed spectral embedding of networks.  For each network, every
    node is represented as a point whose coordinates are the
    corresponding values in the eigenvectors of the signed Laplacian
    $\mathbf L$ corresponding to the two smallest eigenvalues. 
    For the Highland tribes network, positive edges are shown in green
    and negative edges in red. The edges in the other networks are not
    shown. 
  }
  \label{fig:large}
\end{figure}

\section{Capturing Structural Balance:  \\ The Signed Laplacian}
\label{sec:laplacian}
The spectrum Laplacian matrix $\mathbf
L=\mathbf D-\mathbf A$ of signed networks is studied in~\cite{b351}, where it
is established that the signed Laplacian is positive-definite when each
connected component of a graph contains a cycle with an odd number of
negative edges. 
Other basic properties of the Laplacian matrix for signed graphs are given
in~\cite{b356}. 

For an unsigned graph, the Laplacian $\mathbf L$ is positive-semidefinite,
i.e., it has only nonnegative eigenvalues.  
In this section, we prove that the Laplacian matrix $\mathbf L$ of a
signed graph
is positive-semidefinite too, characterize the graphs for which it is
positive-definite, and give the relationship between the eigenvalue
decomposition of the signed Laplacian matrix and the eigenvalue
decomposition of the corresponding unsigned Laplacian matrix.  Our
characterization of the smallest eigenvalue of $\mathbf L$ in
terms of graph balance is based on~\cite{b351}.

\subsection{Positive-semidefiniteness of the Laplacian}
A Hermitian matrix is positive-semidefinite when all its eigenvalues are
nonnegative, and positive-definite when all its eigenvalues are
positive. 
The Laplacian matrix of an unsigned graph is symmetric and thus
Hermitian. Its smallest eigenvalue is zero, and thus the Laplacian of an
unsigned graph is always positive-semidefinite but never
positive-definite.
In the following, we prove that that Laplacian of a signed graph is
always positive-semidefinite, and positive-definite when the graph is
unbalanced. 

\begin{theorem}
  \label{theo:l-positive-semidefinite}
  The Laplacian matrix $\mathbf L$ of a signed graph
  $G=(V,E,\sigma)$ is positive-semidefinite. 
\end{theorem}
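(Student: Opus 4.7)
My plan is to prove the theorem by showing that the associated quadratic form $\mathbf{x}^T \mathbf{L} \mathbf{x}$ is nonnegative for every real vector $\mathbf{x} \in \mathbb{R}^{|V|}$. Since $\mathbf{A}$ is symmetric and $\mathbf{D}$ is diagonal, $\mathbf{L} = \mathbf{D} - \mathbf{A}$ is symmetric and hence has real eigenvalues, so nonnegativity of the quadratic form is equivalent to positive-semidefiniteness.

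The key observation is that the signed degree matrix is defined using absolute values, $\mathbf{D}_{uu} = \sum_v |\mathbf{A}_{uv}|$, which is precisely what is needed to complete a square. I would first expand
\begin{eqnarray*}
  \mathbf{x}^T \mathbf{L} \mathbf{x}
  &=& \sum_u \mathbf{D}_{uu} x_u^2 \; - \; \sum_{u,v} \mathbf{A}_{uv} x_u x_v,
\end{eqnarray*}
then use the symmetry $|\mathbf{A}_{uv}| = |\mathbf{A}_{vu}|$ to rewrite the diagonal term as
$\frac{1}{2}\sum_{u,v} |\mathbf{A}_{uv}|(x_u^2 + x_v^2)$. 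Combining and using that $\sigma(\{u,v\})^2 = 1$ whenever $\{u,v\} \in E$, the whole expression collapses to
\begin{eqnarray*}
  \mathbf{x}^T \mathbf{L} \mathbf{x}
  &=& \tfrac{1}{2}\sum_{u,v} |\mathbf{A}_{uv}|\bigl(x_u - \sigma(\{u,v\}) x_v\bigr)^2
  \;=\; \sum_{\{u,v\} \in E} \bigl(x_u - \sigma(\{u,v\}) x_v\bigr)^2,
\end{eqnarray*}
which is a sum of squares and thus nonnegative.

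There is essentially no hard step here; the whole argument is a one-line computation once one notices the identity $|\mathbf{A}_{uv}| - 2\sigma(\{u,v\}) = |\mathbf{A}_{uv}|(1 - 2\sigma(\{u,v\})/|\mathbf{A}_{uv}|)$ can be completed into $(x_u - \sigma(\{u,v\}) x_v)^2$ using $\sigma^2 = 1$. The only thing one must be careful about is bookkeeping between ordered sums (over $(u,v)$) and unordered edge sums (over $\{u,v\} \in E$), which accounts for the factor $\tfrac{1}{2}$. This sum-of-squares representation, incidentally, also sets up the next result (positive-definiteness when the graph is unbalanced), since it shows that $\mathbf{x}^T \mathbf{L} \mathbf{x} = 0$ forces $x_u = \sigma(\{u,v\}) x_v$ along every edge, a constraint which is consistent precisely when the graph has no odd-signed cycle.
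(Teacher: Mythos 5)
Your proof is correct and follows essentially the same route as the paper: both reduce $\mathbf x^{\mathrm T}\mathbf L\mathbf x$ to the sum of squares $\sum_{\{u,v\}\in E}\bigl(x_u-\sigma(\{u,v\})x_v\bigr)^2$, the paper by first writing $\mathbf L$ as a sum of per-edge matrices $\mathbf L^{\{u,v\}}$ and you by direct expansion of the quadratic form. The only difference is bookkeeping, and your closing observation about the equality case is exactly how the paper proceeds to Theorem~\ref{theo:one}.
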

\begin{proof}
  We write the Laplacian matrix as a sum over the edges of $G$:
  \begin{eqnarray*}
    \mathbf L = \sum_{\{u,v\} \in E} \mathbf L^{\{u,v\}}
  \end{eqnarray*}
  where $\mathbf L^{\{u,v\}} \in \mathbb{R}^{|V| \times |V|}$ contains the four
  following nonzero entries: 
  \begin{eqnarray}
    \mathbf L^{\{u,v\}}_{uu} = \mathbf L^{\{u,v\}}_{vv} &=& 1
    \label{eq:single-edge-laplacian} \\
    \mathbf L^{\{u,v\}}_{uv} = \mathbf L^{\{u,v\}}_{vu} &=& -\sigma(\{u,v\}) \nonumber
  \end{eqnarray}
  Let $\mathbf x \in \mathbb{R}^{|V|}$ be a vertex-vector. 
  By considering the bilinear form $\mathbf x^{\mathrm T} \mathbf
  L^{\{u,v\}} \mathbf x$,  
  we see that $\mathbf L^{\{u,v\}}$ is positive-semidefinite:
  \begin{eqnarray}
    \mathbf x^{\mathrm T} \mathbf L^{\{u,v\}} \mathbf x &=& \mathbf x_u^2 + \mathbf x_v^2 -2 \sigma(\{u,v\}) \mathbf x_u
    \mathbf x_v  \label{eq:bilinear} \\ 
    &=& (\mathbf x_u - \sigma(\{u,v\}) \mathbf
    x_v)^2 \nonumber   \\
    &\geq & 0  \nonumber
  \end{eqnarray}
  We now consider the bilinear form $\mathbf x^{\mathrm T} \mathbf L
  \mathbf x$: 
  \begin{eqnarray*}
    \mathbf x^{\mathrm T}\mathbf L \mathbf x = \sum_{\{u,v\}\in E}
    \mathbf x^{\mathrm T} \mathbf L^{\{u,v\}} \mathbf x \ge 0
  \end{eqnarray*}
  It follows that $\mathbf L$ is positive-semidefinite. 
\end{proof}

\newcommand{\Eta}{H}

Another way to prove that $\mathbf L$ is positive-semidefinite consists of
expressing it using the incidence matrix of $G$.  Assume that for each
edge $\{u,v\}$ an arbitrary orientation is chosen.  Then we define the
incidence matrix $\mathbf \Eta \in \mathbb{R}^{|V| \times |E|}$ of $G$ as
\begin{eqnarray*}
  \mathbf \Eta_{u \{u,v\}} &=& 1 \\
  \mathbf \Eta_{v \{u,v\}} &=& -\sigma(\{u,v\}).
\end{eqnarray*}
Here, the letter $\mathbf \Eta$ is the uppercase greek letter
Eta, as used for instance in~\cite{b651}.  
We now consider the product $\mathbf \Eta \mathbf \Eta^{\mathrm T} \in
\mathbb{R}^{|V| \times |V|}$: 
\begin{eqnarray*}
  (\mathbf \Eta \mathbf \Eta^{\mathrm T})_{uu} &=& d(u) \\
  (\mathbf \Eta \mathbf \Eta^{\mathrm T})_{uv} &=& - \sigma(\{u,v\})
\end{eqnarray*}
for diagonal and off-diagonal entries, respectively. 
Therefore $\mathbf \Eta \mathbf \Eta^{\mathrm T} = \mathbf L$, and it follows that
$\mathbf L$ is
positive-semidefinite.  This result is independent of the orientation
chosen for~$\mathbf \Eta$.   

\subsection{Positive-definiteness of $\mathbf L$}
\label{subsec:positive-definiteness}
We now show that, unlike the ordinary Laplacian matrix, the signed
Laplacian matrix is strictly positive-definite for some graphs, including most
real-world networks.
The theorem presented here can be found in~\cite{b356}, and also follows
directly from an earlier result in~\cite{b647}.  

As with the ordinary Laplacian matrix, the spectrum of the signed
Laplacian matrix of a disconnected graph is the union of the spectra of
its connected components. 
This can be seen by noting that the Laplacian matrix of an unconnected
graph has block-diagonal form, with each diagonal entry being the
Laplacian matrix of a single component.  
Therefore, we will restrict the exposition to connected graphs.

Using Definition~\ref{def:balance} of structural balance, we can characterize the graphs for which the
signed Laplacian matrix is positive-definite.
\begin{theorem}
  \label{theo:one}
  The signed Laplacian matrix of an unbalanced graph is positive-definite.   
\end{theorem}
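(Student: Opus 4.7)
The plan is to build directly on the per-edge decomposition used in the proof of Theorem~\ref{theo:l-positive-semidefinite}. From that proof we already have the identity
\begin{eqnarray*}
  \mathbf{x}^{\mathrm T} \mathbf L \mathbf x
  \;=\; \sum_{\{u,v\} \in E} \bigl(\mathbf x_u - \sigma(\{u,v\}) \mathbf x_v\bigr)^2,
\end{eqnarray*}
which expresses $\mathbf x^{\mathrm T} \mathbf L \mathbf x$ as a sum of squares. Since $\mathbf L$ is known to be positive-semidefinite, to establish positive-definiteness it suffices to show that $\mathbf x^{\mathrm T} \mathbf L \mathbf x = 0$ forces $\mathbf x = \mathbf 0$ whenever $G$ is unbalanced. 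By Section~\ref{subsec:positive-definiteness} we may and do assume $G$ is connected.

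The first step is the standard sum-of-squares argument: if $\mathbf x^{\mathrm T} \mathbf L \mathbf x = 0$, then every summand vanishes, so
\begin{eqnarray*}
  \mathbf x_u \;=\; \sigma(\{u,v\})\, \mathbf x_v \qquad \text{for every edge } \{u,v\} \in E.
\end{eqnarray*}
In particular $|\mathbf x_u| = |\mathbf x_v|$ along each edge, and connectivity then forces $|\mathbf x_u|$ to be a single constant $c \geq 0$ across all vertices.

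The second and decisive step is to exploit the unbalancedness of $G$ via Definition~\ref{def:balance} in its equivalent cyclic form: $G$ contains at least one cycle $v_0, v_1, \ldots, v_k = v_0$ with an odd number of negative edges. Iterating the relation $\mathbf x_{v_{i}} = \sigma(\{v_{i-1}, v_i\}) \mathbf x_{v_{i-1}}$ around this cycle yields
\begin{eqnarray*}
  \mathbf x_{v_0} \;=\; \Bigl(\prod_{i=1}^{k} \sigma(\{v_{i-1}, v_i\})\Bigr) \mathbf x_{v_0} \;=\; -\mathbf x_{v_0},
\end{eqnarray*}
because the product of the edge signs along a cycle with an odd number of negative edges equals $-1$. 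Hence $\mathbf x_{v_0} = 0$, which forces $c = 0$ and therefore $\mathbf x = \mathbf 0$ by the constancy of $|\mathbf x_u|$. Combined with positive-semidefiniteness, this gives $\mathbf L \succ 0$.

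The main conceptual obstacle is really just the bookkeeping at the second step: one has to recognize that the edge-wise equality $\mathbf x_u = \sigma(\{u,v\}) \mathbf x_v$ is a multiplicative consistency condition whose obstruction on cycles is exactly the parity of negative edges; once this connection is made, unbalancedness delivers the contradiction essentially for free. Everything else (the sum-of-squares form, connectedness reduction, propagation of $|\mathbf x_u|$) is routine and already prepared by the earlier material.
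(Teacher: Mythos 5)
Your proof is correct and follows essentially the same route as the paper's: the same per-edge sum-of-squares decomposition, the same edge-wise condition $\mathbf x_u = \sigma(\{u,v\})\mathbf x_v$, and the same use of connectivity to make $|\mathbf x_u|$ constant. The only difference is that the paper argues contrapositively (a nonzero null vector yields a balanced bipartition), while you argue directly by propagating the sign condition around an odd negative cycle to force $\mathbf x = \mathbf 0$; these are equivalent via the cycle characterization of balance the paper already establishes in Section~\ref{sec:clusco}.
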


\begin{proof}
  We show that if the bilinear form $\mathbf x^{\mathrm T} \mathbf L
  \mathbf x$ is zero for some vector $\mathbf x \neq \mathbf 0$, then a 
  bipartition of the vertices as described above exists.

  Let $\mathbf x^{\mathrm T} \mathbf L \mathbf x = 0$.  
  We have seen that for every $\mathbf L^{\{u,v\}}$ as defined in
  Equation~(\ref{eq:single-edge-laplacian}) and any $\mathbf   
  x$, $\mathbf x^{\mathrm T} \mathbf L^{\{u,v\}} \mathbf x \geq 0$.
  Therefore, we have for every edge $\{u,v\}$:
  \begin{eqnarray*}
    \mathbf x^{\mathrm T} \mathbf L^{\{u,v\}} \mathbf x &=& 0 \\
    \Leftrightarrow (\mathbf x_u -\sigma(\{u,v\}) \mathbf x_v)^2 &=& 0  \\
    \Leftrightarrow \mathbf x_u &=& \sigma(\{u,v\}) \mathbf x_v 
  \end{eqnarray*}
  In other words, two components of $\mathbf x$ are equal if the corresponding
  vertices are connected by a positive edge, and opposite to each other if
  the corresponding vertices are connected by a negative edge.  
  Because the graph is connected, it follows that all $|\mathbf x_u|$ must be
  equal.  
  We can exclude the solution $\mathbf x_u = 0$ for all $u$ because
  $\mathbf x$ is not
  the zero vector.
  Without loss of generality, we assume that $|\mathbf x_u|=1$ for all $u$. 

  Therefore, $\mathbf x$ gives a bipartition into vertices with $\mathbf x_u = +1$ and
  vertices with $\mathbf x_u = -1$, with the property that two vertices with the
  same value of $\mathbf x_u$ are in the same partition and two vertices with
  opposite sign of $\mathbf x_u$ are in different partitions, and therefore $G$ is
  balanced. 
  Equivalently, the signed Laplacian matrix $\mathbf L$ of a 
  connected unbalanced signed graph is positive-definite. 
\end{proof}

\subsection{Balanced Graphs}
\index{balance}
\index{conflict}
We now show how the spectrum and eigenvectors of the signed Laplacian of
a balanced graph arise from the spectrum and the eigenvalues of the
corresponding unsigned graph by multiplication of eigenvector
components with $\pm 1$. 

Let $G = (V, E, \sigma)$ be a balanced signed graph 
and $\bar G = (V, E)$ the corresponding unsigned graph.
Since $G$ is balanced, there is a vector $\mathbf x \in \{-1,+1\}^{|V|}$ such
that the sign of each edge $\{u,v\}$ is $\sigma(\{u,v\}) = \mathbf x_u \mathbf x_v$. 
\begin{theorem}
  \label{theo:two}
  If $\mathbf L$ is the signed Laplacian matrix of the balanced graph
  $G$ with bipartition $\mathbf x$ and eigenvalue
  decomposition $\mathbf L = \mathbf U \Lambda \mathbf U^{\mathrm T}$, 
  then the eigenvalue decomposition of the Laplacian matrix $\mathbf{\bar L}$
  of $\bar G$
  of the corresponding unsigned graph $\bar G$ of $G$ is given by 
  $\mathbf{\bar L} = \mathbf{\bar U} \Lambda 
  \mathbf{\bar U}^{\mathrm T}$  where
  \begin{eqnarray*}
    \mathbf{\bar U}_{uk} &=& \mathbf x_u \mathbf U_{uk}.
  \end{eqnarray*}
\end{theorem}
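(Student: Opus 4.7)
My plan is to introduce the diagonal sign-flip matrix $\mathbf{T} \in \mathbb{R}^{|V| \times |V|}$ with $\mathbf{T}_{uu} = \mathbf{x}_u$, and to show that the signed and unsigned Laplacians are related by the similarity transformation $\mathbf{\bar L} = \mathbf{T} \mathbf{L} \mathbf{T}$. Since the entries of $\mathbf{x}$ lie in $\{-1,+1\}$, we have $\mathbf{T}^2 = \mathbf{I}$, so $\mathbf{T}$ is its own inverse and its own transpose; this is the key algebraic fact that will make the whole argument go through.

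To establish $\mathbf{\bar L} = \mathbf{T} \mathbf{L} \mathbf{T}$ I would verify the identity entrywise. The diagonal entries are immediate because $\mathbf{D}$ is independent of the signs and $\mathbf{T}_{uu}^2 = 1$. For an off-diagonal entry corresponding to an edge $\{u,v\} \in E$, the relation $\sigma(\{u,v\}) = \mathbf{x}_u \mathbf{x}_v$ (which is precisely the statement that the bipartition $\mathbf{x}$ realizes the balance of $G$) gives
\begin{eqnarray*}
(\mathbf{T}\mathbf{L}\mathbf{T})_{uv} = \mathbf{x}_u \mathbf{L}_{uv} \mathbf{x}_v = -\mathbf{x}_u \sigma(\{u,v\}) \mathbf{x}_v = -(\mathbf{x}_u \mathbf{x}_v)^2 = -1 = \mathbf{\bar L}_{uv},
\end{eqnarray*}
and entries corresponding to non-edges vanish on both sides.

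Once the conjugation identity is in hand, the spectral conclusion is essentially automatic. Substituting the given decomposition gives
\begin{eqnarray*}
\mathbf{\bar L} = \mathbf{T} \mathbf{U} \Lambda \mathbf{U}^{\mathrm T} \mathbf{T} = (\mathbf{T}\mathbf{U}) \Lambda (\mathbf{T}\mathbf{U})^{\mathrm T},
\end{eqnarray*}
and the matrix $\mathbf{T}\mathbf{U}$ has entries $\mathbf{x}_u \mathbf{U}_{uk}$, matching the claimed formula for $\mathbf{\bar U}$. To close the argument I would verify that $\mathbf{\bar U}$ is orthogonal, which follows from $\mathbf{\bar U}^{\mathrm T} \mathbf{\bar U} = \mathbf{U}^{\mathrm T} \mathbf{T}^{\mathrm T} \mathbf{T} \mathbf{U} = \mathbf{U}^{\mathrm T} \mathbf{U} = \mathbf{I}$, so the expression is a genuine eigenvalue decomposition with the same diagonal $\Lambda$.

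There is no real obstacle here; the only place one must be careful is the first step, in which the definition of balance must be used in exactly the form $\sigma(\{u,v\}) = \mathbf{x}_u \mathbf{x}_v$ rather than merely as ``bipartition into two sets.'' The advantage of this approach is that it is coordinate-free enough to make the identification of the eigenvectors transparent: conjugation by an involution preserves the spectrum and merely relabels eigenvector entries with signs, which is precisely the claim of the theorem.
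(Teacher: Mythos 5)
Your proof is correct and is essentially the paper's own argument: the entrywise verification using $\sigma(\{u,v\}) = \mathbf x_u \mathbf x_v$ and the orthogonality check $\sum_u \mathbf x_u^2 \mathbf U_{uk}\mathbf U_{ul} = 0$ are exactly the computations in the paper, merely repackaged by collecting the sign flips into the diagonal involution $\mathbf T$. That packaging is a slightly cleaner presentation (it is the standard ``switching'' conjugation for signed graphs) but not a different route.
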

\begin{proof}
  To see that $\mathbf{\bar L} = \mathbf{\bar U} \Lambda 
  \mathbf{\bar U}^{\mathrm T}$, note that for
  diagonal elements, we have $\mathbf{\bar U}_{u\bullet}^{\phantom{\mathrm
      I}} \Lambda \mathbf{\bar U}_{u\bullet}^{\mathrm T} = 
  \mathbf x_u^2 \mathbf U_{u\bullet}^{\phantom{\mathrm I}}
  \Lambda \mathbf U_{u\bullet}^{\mathrm T} = \mathbf
  U_{u\bullet}^{\phantom{\mathrm I}} \Lambda \mathbf
  U_{u\bullet}^{\mathrm T} = \mathbf L_{uu} = \mathbf{\bar L}_{uu}$.  
  For off-diagonal elements, we have $\mathbf{\bar
    U}_{u\bullet}^{\phantom{\mathrm I}} \Lambda 
  \mathbf{\bar U}_{v\bullet}^{\mathrm T} = \mathbf x_u \mathbf x_v \mathbf
  U_{u\bullet}^{\phantom{\mathrm I}} \Lambda \mathbf
  U_{v\bullet}^{\mathrm T} = \sigma(\{u,v\}) 
  \mathbf L_{uv} =
  - \sigma(\{u,v\}) \sigma(\{u,v\}) = -1 = \mathbf{\bar L}_{uv}$.  

  We now show that $\mathbf{\bar U} \Lambda 
  \mathbf{\bar U}^{\mathrm T}$ is an eigenvalue
  decomposition of $\mathbf{\bar L}$ by showing that $\mathbf{\bar U}$
  is orthogonal. 
  To see that the columns of $\mathbf{\bar U}$ are indeed orthogonal, note that for any
  two column indexes $k \neq l$, we have $\mathbf{\bar U}_{\bullet k}^{\mathrm T}
  \mathbf{\bar U}_{\bullet l}^{\phantom{\mathrm I}} =
  \sum_{u \in V} \mathbf{\bar U}_{uk} \mathbf{\bar U}_{ul} = \sum_{u \in V} \mathbf x_u^2
  \mathbf U_{uk} \mathbf U_{ul} =
  \mathbf U_{\bullet k}^{\mathrm T} \mathbf U_{\bullet
    l}^{\phantom{\mathrm I}} = 0$ because $\mathbf U$ is orthogonal. 
  Changing signs in $\mathbf U$ does not change the norm of each column vector,
  and thus $\mathbf{\bar L} = \mathbf{\bar U} \Lambda 
  \mathbf{\bar U}^{\mathrm T}$ is the
  eigenvalue decomposition of $\mathbf{\bar L}$.
\end{proof}

As shown in Section \ref{subsec:positive-definiteness}, the Laplacian
matrix of an unbalanced
graph is positive-definite and therefore its spectrum is different
from that of the corresponding unsigned graph. 
Aggregating Theorems~\ref{theo:one} and~\ref{theo:two}, we arrive at the
main result of this section.

\begin{theorem}
  The Laplacian matrix of a connected signed graph is positive-definite if and only
  if the graph is unbalanced.
\end{theorem}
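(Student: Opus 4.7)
The plan is to assemble this biconditional from the three results already proved in the section; essentially no new arguments are required. The implication ``unbalanced $\Rightarrow$ positive-definite'' is precisely Theorem~\ref{theo:one}, where one examines when the semidefinite form $\mathbf x^{\mathrm T}\mathbf L\mathbf x$ vanishes and shows such a vanishing forces a balanced bipartition. Combined with Theorem~\ref{theo:l-positive-semidefinite}, this gives one half of the equivalence cleanly.

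For the converse I would argue contrapositively: assume $G$ is connected and balanced, and exhibit a nonzero $\mathbf x$ with $\mathbf L\mathbf x = \mathbf 0$. The canonical candidate is the $\{\pm 1\}$-valued signature vector guaranteed by Definition~\ref{def:balance}, i.e.\ the vector $\mathbf x$ with $\sigma(\{u,v\}) = \mathbf x_u\mathbf x_v$ on every edge. Applying Theorem~\ref{theo:two}, the spectrum of $\mathbf L$ coincides with that of the unsigned Laplacian $\mathbf{\bar L}$ of $\bar G$. Since $\bar G$ is a connected unsigned graph, $\mathbf{\bar L}$ has $0$ as an eigenvalue (with all-ones eigenvector), so $0$ is also an eigenvalue of $\mathbf L$, and hence $\mathbf L$ cannot be positive-definite.

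If one prefers to avoid invoking Theorem~\ref{theo:two}, the same conclusion follows by a direct computation: for the signature vector $\mathbf x$, row $u$ of $\mathbf L\mathbf x$ equals $d(u)\mathbf x_u - \sum_{v\sim u}\sigma(\{u,v\})\mathbf x_v = d(u)\mathbf x_u - \sum_{v\sim u}\mathbf x_u\mathbf x_v^2 = 0$ using $\mathbf x_v^2 = 1$, so $\mathbf x$ lies in the kernel of $\mathbf L$. This also has the pedagogical advantage of making the balanced eigenvector explicit, consistent with the observation in Section~\ref{subsec:general-weighted-graphs} that the smallest eigenvector of a balanced graph encodes the $\pm 1$ split.

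The hard part is essentially nonexistent, and the only points requiring care are minor. First, one must note that the signature vector is nonzero, which is immediate since its entries are $\pm 1$. Second, connectedness is genuinely used in both directions: in Theorem~\ref{theo:one} to propagate the local equality $\mathbf x_u = \sigma(\{u,v\})\mathbf x_v$ across the whole graph, and in the converse to guarantee that $\mathbf{\bar L}$ has $0$ as a simple eigenvalue (equivalently, that a consistent bipartition exists globally rather than only per component). With these small observations, the theorem is just the conjunction of Theorems~\ref{theo:one} and the balanced-case kernel witness.
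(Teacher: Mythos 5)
Your proposal is correct and follows essentially the same route as the paper: one direction is Theorem~\ref{theo:one}, and the other combines Theorem~\ref{theo:two} with the singularity of the unsigned Laplacian of a connected graph. Your optional direct verification that the $\{\pm 1\}$ signature vector lies in the kernel of $\mathbf L$ is a nice explicit touch but does not change the argument.
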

\begin{proof}
  From Theorem~\ref{theo:one} we know that every unbalanced connected
  graph has a positive-definite Laplacian matrix.
  Theorem~\ref{theo:two} implies that every balanced graph has the same
  Laplacian spectrum as its corresponding unsigned graph.  Since the
  unsigned Laplacian is always singular, the signed Laplacian of a
  balanced graph is also singular.  
  Together, these imply that the Laplacian matrix of a connected
  signed graph is positive-definite if and only if the graph is unbalanced. 
\end{proof}

For a general signed graph that need not be connected, we can therefore make
the following statement:  The multiplicity of the eigenvalue zero equals
the number of balanced connected components in $G$~\cite{b651}.  

\begin{figure}
  \subfigure[Slashdot Zoo]{
    \includegraphics[width=\wTwo]{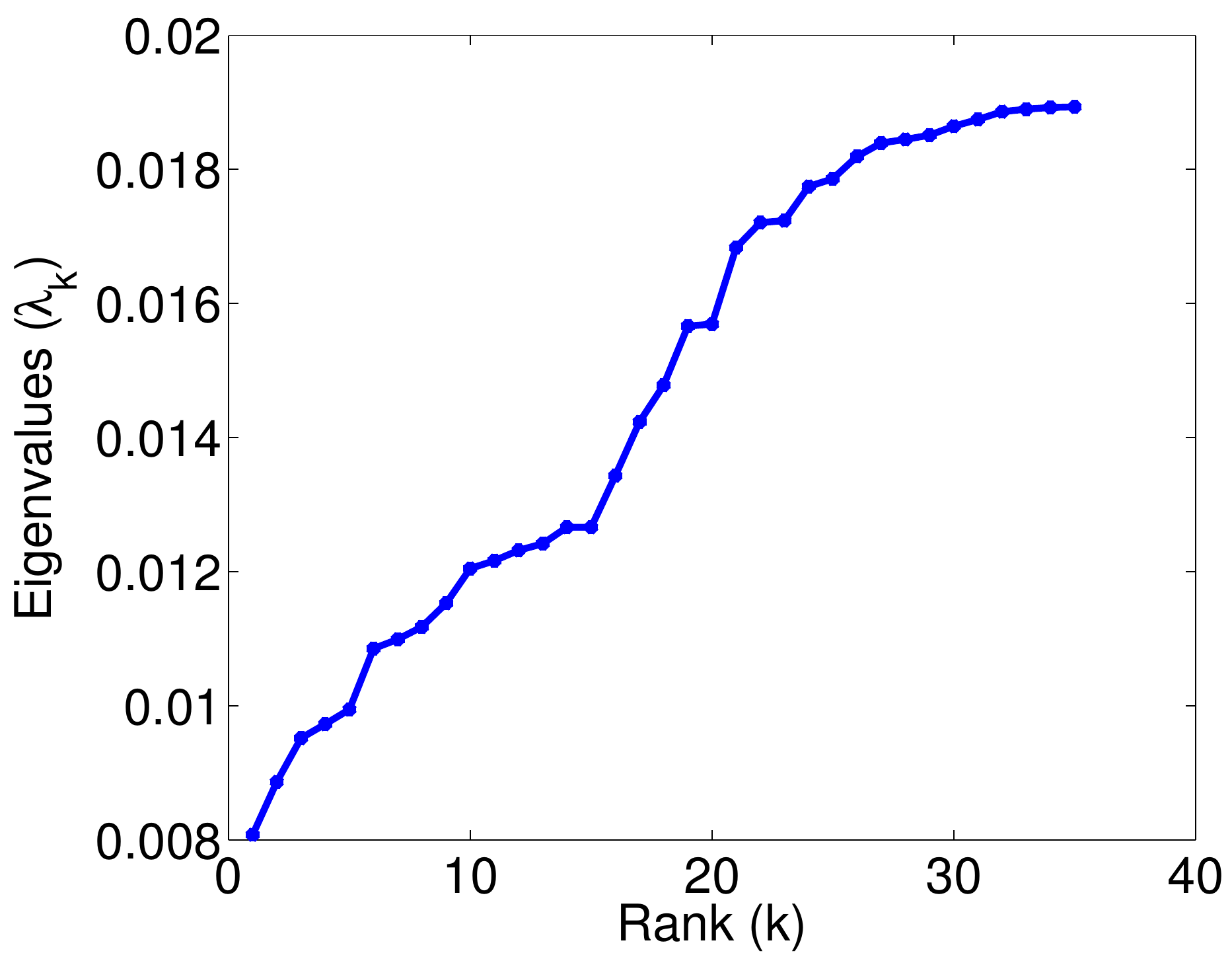}}
  \subfigure[Epinions]{
    \includegraphics[width=\wTwo]{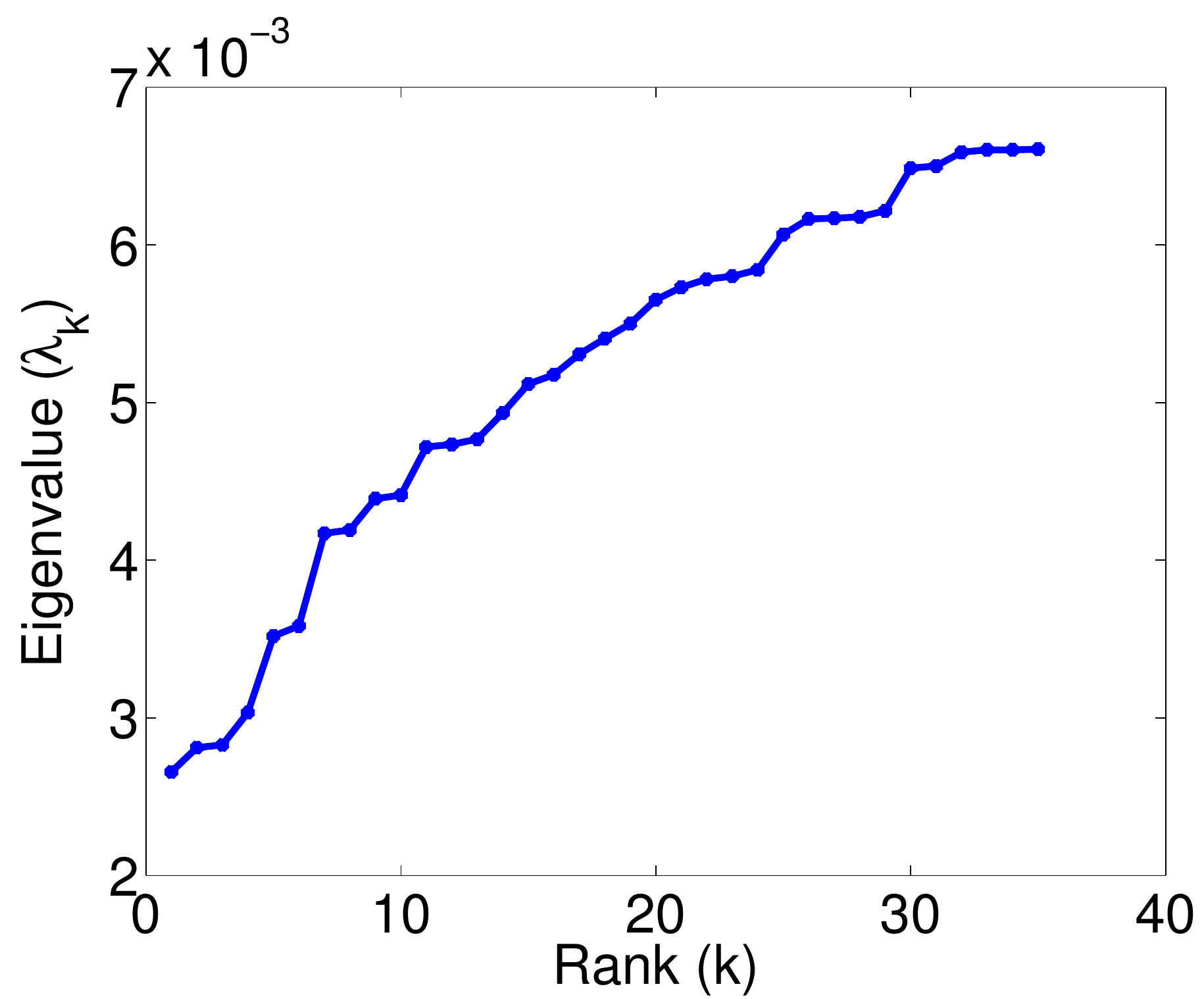}}
  \subfigure[Wikipedia elections]{
    \includegraphics[width=\wTwo]{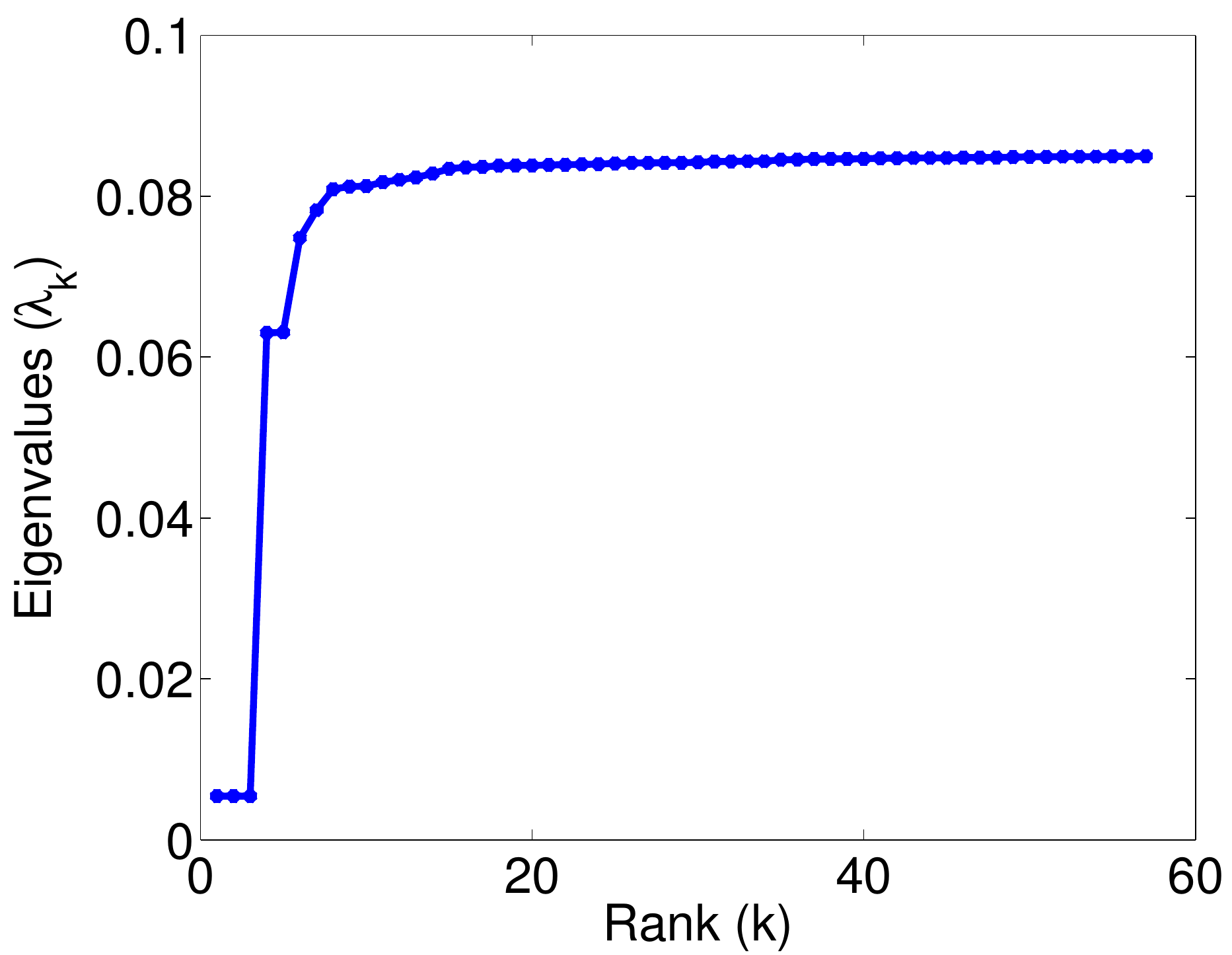}}
  \subfigure[Wikipedia conflicts]{
    \includegraphics[width=\wTwo]{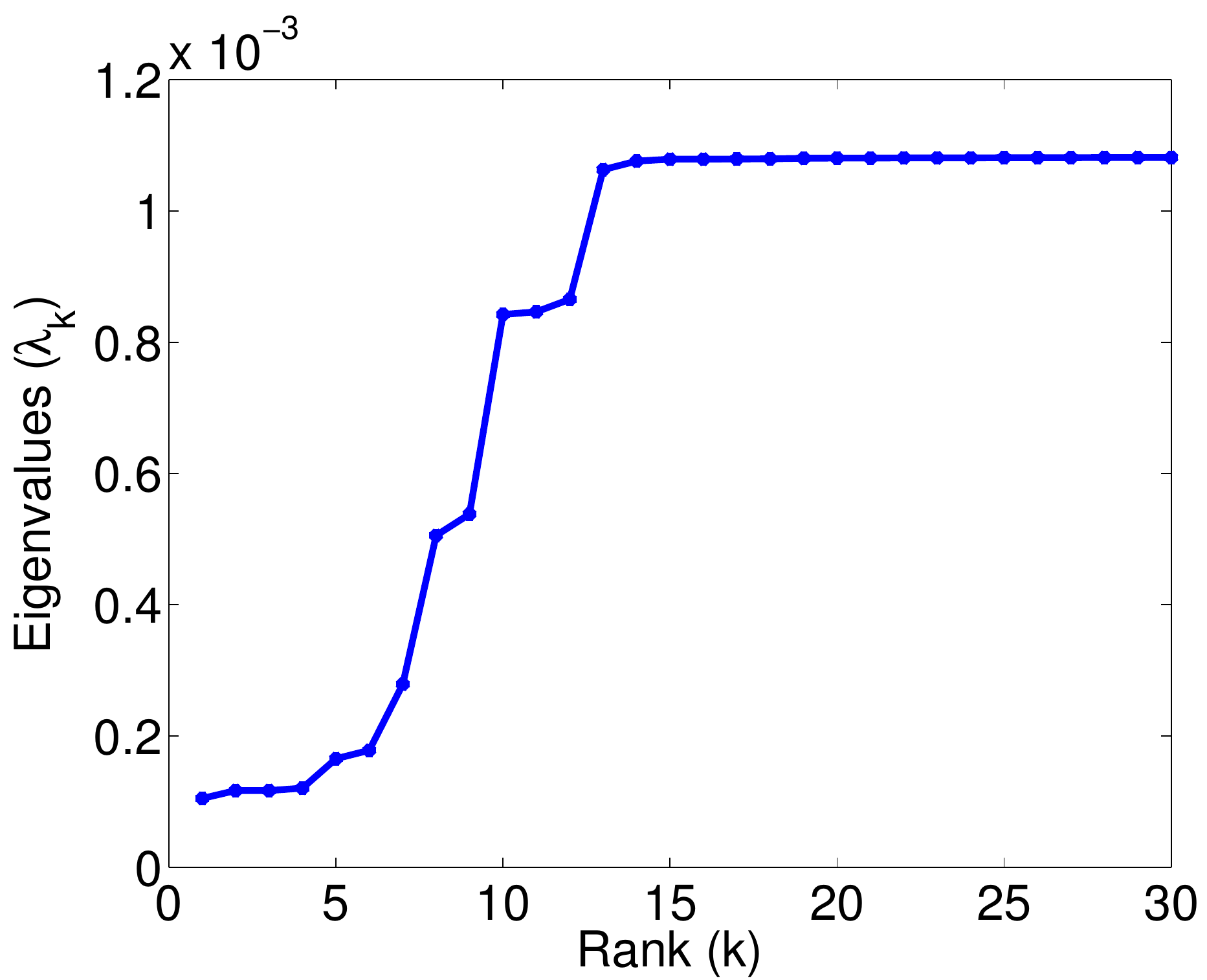}}
  \caption{
    The Laplacian spectra of three signed networks.  
    These plots show the eigenvalues $\lambda_1 \leq \lambda_2 \leq
    \cdots$ of the Laplacian matrix $\mathbf L$. 
  }   
  \label{fig:spectra}
\end{figure}

The spectra of several large unipartite signed networks
are plotted in Figure~\ref{fig:spectra}.  
We can observe that in all cases, the smallest eigenvalue is larger than
zero, implying, as expected, that these graphs are unbalanced. 

\section{Measuring Structural Balance 2:  \\ Algebraic Conflict}
\label{sec:conflict}
The smallest eigenvalue of the Laplacian $\mathbf L$ of a signed graph
is zero when the graph is balanced, and larger otherwise. 
We derive from this that the smallest Laplacian eigenvalue
characterizes the amount of conflict present in the graph.  We will call
this number the \emph{algebraic conflict} of the graph and denote
it~$\xi$.   

Let $G=(V,E,\sigma)$ be a connected signed graph with adjacency matrix $\mathbf A$,
degree matrix $\mathbf D$ and Laplacian $\mathbf L = \mathbf D
- \mathbf A$. 
Let $\lambda_1 \leq \lambda_2 \leq \cdots \leq \lambda_{|V|}$ be the
eigenvalues of $\mathbf L$.  
Because $\mathbf L$ is positive-semidefinite
(Theorem~\ref{theo:l-positive-semidefinite}), we have $\lambda_1 \geq 0$.
According to Theorem~\ref{theo:one}, $\lambda_1$ is zero exactly when $G$ is
balanced.  Therefore, the value $\lambda_1$ can be used as an
invariant of signed graphs that characterizes the conflict due to
unbalanced cycles, i.e., cycles with an odd number of negative edges.  
We will call $\xi = \lambda_1$ the \emph{algebraic conflict} of the network. 
The number $\xi$ is discussed in~\cite{b351}
and~\cite{kunegis:signed-kernels}, without being given a specific name.   

The algebraic conflict $\xi$ for our signed network datasets is
compared in 
Table~\ref{tab:smallest}. 
All these large networks are unbalanced, and we
can for instance observe that the social networks of the Slashdot Zoo and
Epinions are more balanced than the election network of Wikipedia. 

\begin{table}
  \centering
  \caption{
    The algebraic conflict $\xi$ for several signed unipartite networks.
    Smaller values indicate a more balanced network; larger values
    indicate more conflict.
  }
  \begin{tabular}{ l l }
    \toprule
    \textbf{Network} &  $\xi$ \\ 
    \midrule
    Slashdot Zoo & 0.008077 \\
    Epinions & 0.002657 \\
    Wikipedia elections  & 0.005437 \\
    Wikipedia conflicts & 0.0001050 \\
    \midrule
    Highland tribes & 0.7775 \\
    \bottomrule
  \end{tabular}
  \label{tab:smallest}
\end{table}

Figure~\ref{fig:scatter} plots the algebraic conflict
of the signed networks against the relative signed clustering coefficient
The number of signed
datasets is small, and thus we cannot make out a correlation between the two
measures, although the data is consistent with a negative between the
two measures, as expected. 

\begin{figure}
  \centering
  \includegraphics[width=\wTwo]{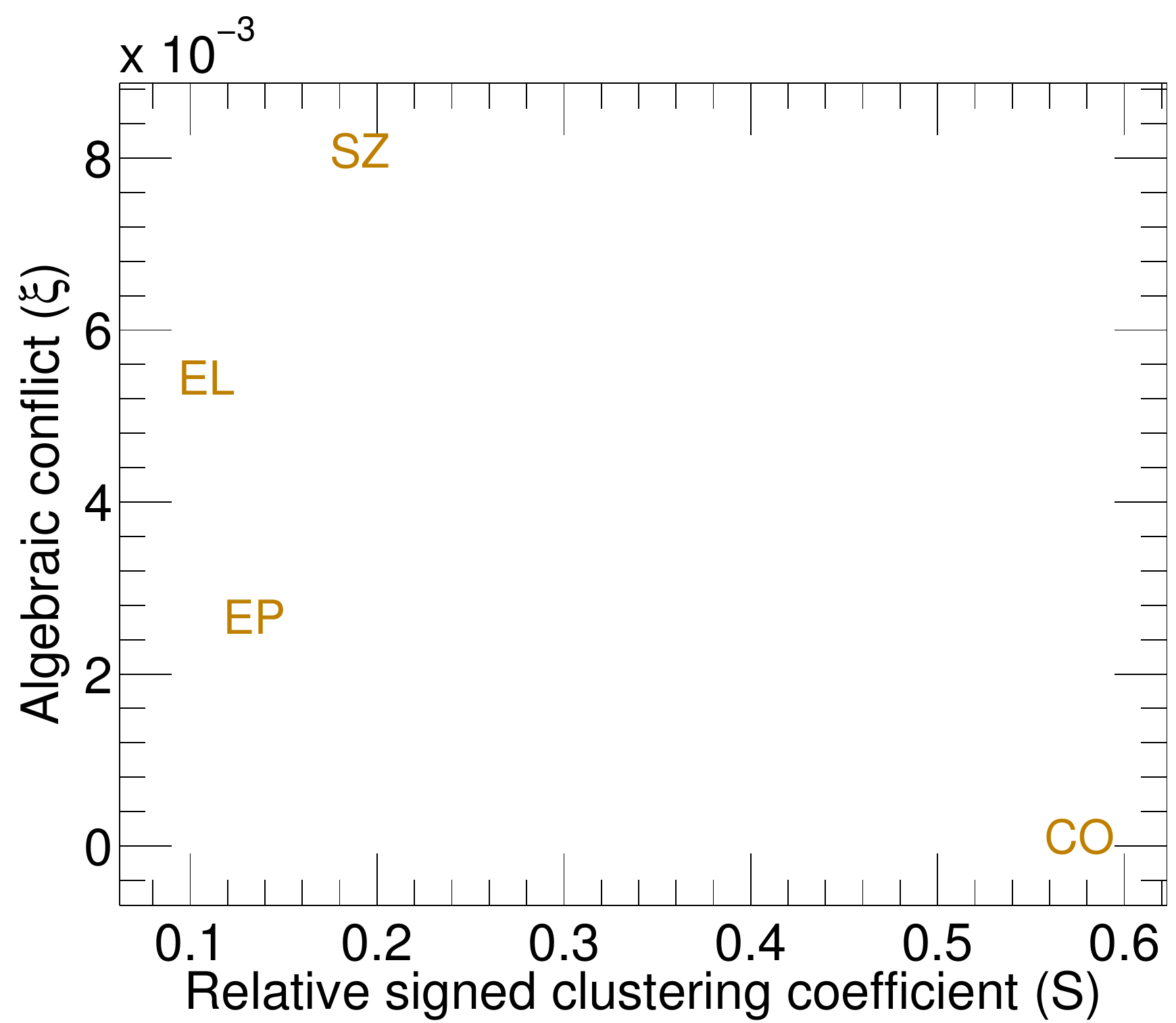}
  \caption{
    Scatter plot of the two measures of balance and conflict for the
    four signed social networks: The relative signed clustering
    coefficient $S$ and the algebraic conflict $\xi$. 
    (\textsf{SZ}: Slashdot Zoo, \textsf{EP}: Epinions, \textsf{EL}:
    Wikipedia elections, \textsf{CO}: Wikipedia conflict)
  }
  \label{fig:scatter}
\end{figure}

\paragraph{Monotonicity}
From the definition of the algebraic conflict $\xi$, we can derive a
simple theorem stating that adding an edge of any weight to a signed
graph can only increase the algebraic conflict, not decrease it. 
\begin{theorem}
  Let $G = (V, E, \sigma)$ be a signed graph and $u,v\in V$ two vertices
  such that $\{u,v\} \notin E$, and $\xi$ the algebraic
  of $G$.
  Furthermore, let $G' = (V, E \cup \{u,v\}, \sigma')$ with
  $\sigma'(e)=\sigma(e)$ when $e\in E$ and $\sigma(\{u,v\})=\sigma$ otherwise be the graph $G$ to
  which an edge with sign $\sigma$ has been added. 
  Then, let $\xi'$ be
  the algebraic conflict of
  $G'$. Then, $\xi \leq \xi'$. 
\end{theorem}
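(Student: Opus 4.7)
The plan is to exploit the Rayleigh quotient characterization of the smallest eigenvalue together with the single-edge decomposition of the Laplacian that was already established in the proof of Theorem~\ref{theo:l-positive-semidefinite}. Since $\mathbf{L}$ is symmetric and positive-semidefinite, the Courant--Fischer theorem gives
\begin{eqnarray*}
  \xi \;=\; \lambda_1(\mathbf L) \;=\; \min_{\mathbf x \neq \mathbf 0} \frac{\mathbf x^{\mathrm T} \mathbf L \mathbf x}{\mathbf x^{\mathrm T} \mathbf x},
\end{eqnarray*}
and analogously for $\xi'$ with $\mathbf L'$. So the whole question reduces to comparing the two bilinear forms $\mathbf x^{\mathrm T} \mathbf L \mathbf x$ and $\mathbf x^{\mathrm T} \mathbf L' \mathbf x$.

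The key observation is that the Laplacian decomposition $\mathbf L = \sum_{e \in E} \mathbf L^{e}$ from Equation~(\ref{eq:single-edge-laplacian}) extends immediately to $G'$: since $E' = E \cup \{\{u,v\}\}$, we have
\begin{eqnarray*}
  \mathbf L' \;=\; \mathbf L \,+\, \mathbf L^{\{u,v\}},
\end{eqnarray*}
where $\mathbf L^{\{u,v\}}$ is the single-edge Laplacian built from the new sign $\sigma$. From Equation~(\ref{eq:bilinear}) we already know that $\mathbf x^{\mathrm T} \mathbf L^{\{u,v\}} \mathbf x = (\mathbf x_u - \sigma \mathbf x_v)^2 \geq 0$ for every $\mathbf x$, i.e., $\mathbf L^{\{u,v\}}$ itself is positive-semidefinite regardless of whether the new edge is positive or negative.

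Combining these, for every nonzero $\mathbf x \in \mathbb{R}^{|V|}$,
\begin{eqnarray*}
  \frac{\mathbf x^{\mathrm T} \mathbf L' \mathbf x}{\mathbf x^{\mathrm T} \mathbf x}
  \;=\; \frac{\mathbf x^{\mathrm T} \mathbf L \mathbf x}{\mathbf x^{\mathrm T} \mathbf x}
  \,+\, \frac{\mathbf x^{\mathrm T} \mathbf L^{\{u,v\}} \mathbf x}{\mathbf x^{\mathrm T} \mathbf x}
  \;\geq\; \frac{\mathbf x^{\mathrm T} \mathbf L \mathbf x}{\mathbf x^{\mathrm T} \mathbf x}.
\end{eqnarray*}
Taking the minimum over $\mathbf x \neq \mathbf 0$ on both sides yields $\xi' \geq \xi$, which is the claim. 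This is essentially a Weyl-type monotonicity statement: adding a positive-semidefinite perturbation to a symmetric matrix cannot decrease any of its eigenvalues, in particular the smallest one.

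There is no real obstacle here; the substantive work was already done in Theorem~\ref{theo:l-positive-semidefinite}, where the positive-semidefiniteness of each single-edge contribution $\mathbf L^{\{u,v\}}$ was verified via the square-completion identity. The only thing worth emphasizing in the write-up is that the sign of the added edge is irrelevant, since $(\mathbf x_u - \sigma \mathbf x_v)^2 \geq 0$ holds for both $\sigma = +1$ and $\sigma = -1$; hence monotonicity of $\xi$ under edge insertion holds uniformly.
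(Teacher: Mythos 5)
Your proof is correct, and it rests on the same decomposition as the paper's: writing $\mathbf L' = \mathbf L + \mathbf L^{\{u,v\}}$ and observing that the single-edge term is positive-semidefinite (your square-completion identity $(\mathbf x_u - \sigma \mathbf x_v)^2 \ge 0$ is exactly the paper's observation that the perturbation equals $\mathbf e \mathbf e^{\mathrm T}$ for $\mathbf e_u = 1$, $\mathbf e_v = -\sigma$). Where you diverge is the final step: the paper invokes a rank-one eigenvalue \emph{interlacing} theorem from the literature to get $\lambda_1 \le \lambda'_1$, whereas you derive the needed inequality directly from the Courant--Fischer variational characterization, since the Rayleigh quotient of $\mathbf L'$ dominates that of $\mathbf L$ pointwise. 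Your route is more elementary and self-contained --- it needs no external citation and does not even use the fact that the perturbation has rank one, so it would extend unchanged to adding several edges at once. The paper's route buys more in exchange for the black-box theorem: interlacing also gives the upper bound $\lambda'_1 \le \lambda_2$ and controls every eigenvalue, not just the smallest, though none of that is needed for the stated claim. Either argument is a complete proof.
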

\begin{proof}
  We make use of a theorem stated for instance in~\cite[p.~97]{b663}.  This
  theorem states that when adding a 
  positive-semidefinite matrix $\mathbf E$ of rank one to a given
  symmetric matrix $\mathbf X$ 
  with eigenvalues $\lambda_1 \leq \lambda_2 \leq \cdots \leq
  \lambda_n$, the new matrix $\mathbf X' = \mathbf X + \mathbf E$ has
  eigenvalues $\lambda'_1 \leq \lambda'_2 \leq \cdots \leq
  \lambda'_n$ which interlace the eigenvalues of $\mathbf X$:
  \begin{eqnarray*}
    \lambda_1^{\phantom '} \leq \lambda'_1 \leq \lambda_2^{\phantom '} \leq \lambda'_2 
    \leq \cdots
    \leq \lambda_n^{\phantom '} \leq \lambda'_n
  \end{eqnarray*}

  The Laplacian $\mathbf L'$ of $G'$ can be written as $\mathbf L' =
  \mathbf L + \mathbf E$, where $\mathbf E \in \mathbb R^{|V| \times
    |V|}$ is the matrix defined by  
  $\mathbf E_{uu}=\mathbf E_{vv} = 1$ and $\mathbf E_{uv}=\mathbf E_{vu}
  = -\sigma$, and $\mathbf E_{uv} = 0 $ for all other entries. 
  Then let $\mathbf e \in \mathbb R^{|V|}$ be the vector defined by
  $\mathbf e_u = 1$, $\mathbf e_v = -\sigma$
  and $\mathbf e_w = 0$ for all other entries.  We have $\mathbf E =
  \mathbf e \mathbf e^{\mathrm T}$, and therefore $\mathbf E$ is
  positive-semidefinite.  

  Now, due to the interlacing theorem mentioned above, adding a
  positive-semidefinite matrix to a given symmetric matrix can only
  increase each eigenvalue, but not decrease it.  Therefore,
  $\lambda_1
  \leq \lambda'_1$, and thus $\xi \leq \xi'$. 
\end{proof}
We have thus proved that adding an edge of any sign to a signed network
can only increase the algebraic conflict, not decrease it. 
It also follows that removing an edge of any sign
from a signed network can decrease the algebraic conflict or leave it
unchanged, but not increase it. 

\section{Maximizing Structural Balance:  \\ Signed Spectral Clustering}
\label{sec:clustering}
One of the main application areas of the graph Laplacian are clustering problems.
In spectral clustering, the eigenvectors of matrices associated with a
graph are used to partition the vertices of the graph into
well-connected groups.  In this section, we show that in a signed graph,
the spectral clustering problem corresponds to 
finding clusters of vertices connected by positive edges, but not
connected by negative edges. 

Spectral clustering algorithms are usually derived by formulating a
minimum cut problem which is then
relaxed~\cite{b246,b304,b148,b453,b452}.  The choice of the cut 
function results in different spectral clustering algorithms.  In all
cases, the vertices of a given graph are mapped into the space spanned
by the eigenvectors of a matrix associated with the graph. 

In this section we derive a signed extension of the ratio cut, which
leads to clustering 
with the signed Laplacian $\mathbf L$.  
We restrict our proofs to
the case of clustering vertices into two groups;  higher-order
clusterings can be derived analogously.

\subsection{Unsigned Graphs}
We first review the derivation of the ratio cut in unsigned graphs. 
Let $G = (V,E)$ be an unsigned graph with adjacency matrix $\mathbf A$.
A cut of $G$ is a partition of 
the vertices $V$ into the nonempty sets $V_1$ and $V_2$, whose weight is
given by 
\begin{eqnarray*}
  \mathrm{Cut}(V_1,V_2) = |\{ \{u,v\} \in E \mid u \in V_1, v \in V_2 \}|. 
\end{eqnarray*}
The cut measures how
well two clusters are connected.  Since we want to find two distinct
groups of vertices, the cut must be minimized.  Minimizing
$\mathrm{Cut}(V_1,V_2)$ however leads in most cases to solutions separating
very few vertices from the rest of the graph.  Therefore, the cut is
usually divided by the size of the clusters, giving the ratio cut:
\begin{eqnarray*}
  \mathrm{RatioCut}(V_1,V_2) = 
  \left( \frac 1 {|V_1|} + \frac 1 {|V_2|} \right)
  \mathrm{Cut}(V_1,V_2) 
\end{eqnarray*}
To get a clustering, we then solve the following optimization problem: 
\begin{eqnarray*}
  \min_{V_1 \subset V} \quad \mathrm{RatioCut} (V_1, V \setminus V_1)
\end{eqnarray*}
Let $V_2 = V \setminus V_1$. Then
this problem can be solved by expressing it in terms of the 
characteristic vector $\mathbf x\in \mathbb R^{|V|}$ of $V_1$ defined by:
\begin{eqnarray}
  \mathbf x_u = \left\{ \begin{array}{ll}  +\sqrt{|V_2|/|V_1|} & \textnormal{ if }
    u \in V_1 \\
    -\sqrt{|V_1|/|V_2|} & \textnormal{ if } u \in V_2 \end{array} \right. 
  \label{eq:unsigned-clustering} 
\end{eqnarray}
We observe that $\mathbf x \mathbf L \mathbf x^{\mathrm T} = 2 |V| \cdot
\mathrm{RatioCut}(V_1,V_2)$, and that $\sum_u
\mathbf x_u = 0$, i.e., $\mathbf x$ is orthogonal to the constant
vector.  Denoting by 
$\mathcal X$ the vectors $\mathbf x$ of the form given in
Equation~(\ref{eq:unsigned-clustering}) we 
have 
\begin{eqnarray}
  \min_{\mathbf x \in \mathbb R^{|V|}} & \quad & \mathbf x \mathbf L \mathbf x^{\mathrm T} \\
  \mathrm{s.t.} & \quad & \mathbf x \cdot \mathbf 1 = 0, \mathbf x \in
  \mathcal X \nonumber
\end{eqnarray}
This can be relaxed by removing the constraint $\mathbf x \in \mathcal X$,
giving as solution the eigenvector of $\mathbf L$ having the smallest nonzero
eigenvalue~\cite{b304}.  

\subsection{Signed Graphs}
We now give a derivation of the ratio cut for signed graphs.  
Let $G=(V,E,\sigma)$ be a signed graph with adjacency matrix $\mathbf A$.  
We write $\mathbf A^\oplus$ and $\mathbf A^\ominus$ for
the adjacency matrices containing only the positive and
negative edges.  In 
other words, $\mathbf A^\oplus_{uv} = \max(0, \mathbf A_{uv})$, $\mathbf A^\ominus_{uv} =
\max(0,-\mathbf A_{uv})$ and $\mathbf A = \mathbf A^\oplus - \mathbf A^\ominus$. 

For convenience we define positive and negative cuts that only count
positive and negative edges respectively:
\begin{eqnarray*}
  \mathrm{Cut}^\oplus(V_1,V_2) &=& \sum_{u \in V_1, v \in V_2} \mathbf A^\oplus_{uv} \\
  \mathrm{Cut}^\ominus(V_1,V_2) &=& \sum_{u \in V_1, v \in V_2} \mathbf A^\ominus_{uv}
\end{eqnarray*}
In these definitions, we allow $V_1$ and $V_2$ to be overlapping. 
For a vector $\mathbf x \in \mathbb{R}^{|V|}$, we consider the bilinear
form $\mathbf x^{\mathrm T}
\mathbf L\mathbf x$.  As shown in Equation~(\ref{eq:bilinear}), this can be written
in the following way:
\begin{eqnarray*}
  \mathbf x^{\mathrm T} \mathbf L \mathbf x = \sum_{\{u,v\}\in E}
  (\mathbf x_u - \sigma(\{u,v\}) \mathbf x_v)^2
\end{eqnarray*}
For a given partition $V = V_1 \cup V_2$, let $\mathbf x \in \mathbb
R^{|V|}$ be the following vector:
\begin{eqnarray}
  \mathbf x_u = \left\{ \begin{array}{ll} +\frac 1 2 \left( \sqrt
    \frac{|V_1|}{|V_2|} + \sqrt 
    \frac{|V_2|}{|V_1|} \right) & \textnormal{ if } u \in V_1 \\
    -\frac 1 2 \left( \sqrt \frac{|V_1|}{|V_2|} + \sqrt
    \frac{|V_2|}{|V_1|} \right) & \textnormal{ if } u \in V_2 
  \end{array} \right. 
  \label{eq:signed-clustering} 
\end{eqnarray}
The corresponding bilinear form then becomes: 
\begin{eqnarray*}
  \mathbf x^{\mathrm T}\mathbf L\mathbf x &=& \sum_{\{u,v\}\in E}
  \left(\mathbf x_u -
  \sigma(\{u,v\})\mathbf x_v\right)^2 \\
  &=& |V| 
  \left(\frac 1 {|V_1|} +\frac 1 {|V_2|}\right)
  \left(2 \cdot \mathrm{Cut}^\oplus(V_1,V_2) + \mathrm{Cut}^\ominus(V_1,V_1)
  + \mathrm{Cut}^\ominus(V_2,V_2) \right) 
\end{eqnarray*}
This leads us to define the following signed cut of $(V_1,V_2)$:
\begin{eqnarray*}
  \mathrm{SignedCut}(V_1,V_2) &=&  \mathrm{Cut}^\oplus(V_1,V_2) +
  \frac 12 \left(\mathrm{Cut}^\ominus(V_1,V_1)  + \mathrm{Cut}^\ominus(V_2,V_2) \right)
\end{eqnarray*}
and to define the signed ratio cut as follows:
\begin{eqnarray*}
  \mathrm{SignedRatioCut}(V_1,V_2) = 
  \left(\frac 1 {|V_1|} +\frac 1 {|V_2|}\right) 
  \mathrm{SignedCut}(V_1,V_2)
\end{eqnarray*}
Therefore, the following minimization problem solves the signed
clustering problem:
\begin{eqnarray*}
  \min_{V_1 \subset V} \quad \mathrm{SignedRatioCut}(V_1, V \setminus V_1) 
\end{eqnarray*}
We can now express this minimization problem using the signed Laplacian,
where $\mathcal X$ denotes the set of vectors of the form given 
in Equation~(\ref{eq:signed-clustering}): 
\begin{eqnarray*}
  \min_{\mathbf x \in \mathbb R^{|V|}} && \quad \mathbf x\mathbf L
  \mathbf x^{\mathrm T} \\
  \mathrm{s.t.} && \quad \mathbf x \in \mathcal X 
\end{eqnarray*}
Note that we lose the orthogonality of $\mathbf x$ to the constant vector.  This
can be explained by the fact that if $G$ contains negative edges, the
smallest eigenvector can always be used for clustering:  If $G$ is
balanced, the smallest eigenvalue is zero and its eigenvector equals
$(\pm 1)$ and gives the two clusters separated by negative edges.  If $G$
is unbalanced, then the smallest eigenvalue of $\mathbf L$ is larger
than zero by Theorem~\ref{theo:one}, and the constant vector is not an
eigenvalue. 

The signed cut $\mathrm{SignedCut}(V_1,V_2)$ counts the number of positive edges that
connect the two groups $V_1$ and $V_2$, and the number of negative edges
that remain in each of these groups.  Thus, minimizing the signed cut
leads to clusterings where two groups are connected by few positive
edges and contain few negative edges inside each group. 
This signed ratio cut generalizes the ratio cut of unsigned graphs and
justifies the use of the signed Laplacian $\mathbf L$ and its particular
definition for spectral clustering of signed graphs. 

\subsection{Signed Clustering using Other Matrices}
When instead of normalizing with the number of vertices $|V_1|$ we
normalize with the number of edges $\mathrm{vol}(V_1)$, the result is a
spectral clustering algorithm based on the eigenvectors of
$\mathbf D^{-1}\mathbf A$ introduced by Shi and Malik~\cite{b452}.  The
cuts normalized 
by $\mathrm{vol}(V_1)$ are called normalized cuts. 
In the signed case, the eigenvectors of $\mathbf D^{-1}\mathbf A$ lead to the
signed normalized cut:
\begin{eqnarray*}
  \mathrm{SignedNormalizedCut}(V_1,V_2) &=&
  \left(\frac 1 {\mathrm{vol}(V_1)}+\frac 1 {\mathrm{vol}(V_2)}\right) 
  \mathrm{SignedCut}(V_1,V_2)
\end{eqnarray*}

A similar derivation can be made for normalized cuts based on $\mathbf N
= \mathbf D^{-1/2}\mathbf A\mathbf D^{-1/2}$, generalizing the spectral
clustering method of Ng, Jordan and Weiss~\cite{b453}.  The dominant
eigenvector of the signed adjacency matrix $\mathbf A$ can also be used
for signed clustering~\cite{b661}.  As in the unsigned case, this method
is not suited for very sparse graphs, and does not have an
interpretation in terms of cuts.  

\paragraph{Example}
As an application of signed spectral clustering to real-world data, we
cluster the tribes in the Highland tribes network. 
The resulting graph contains cycles with an odd number of negative edges, and
therefore its signed Laplacian matrix is positive-definite.
We use the eigenvectors of the two smallest eigenvalues ($\lambda_1 = 1.04$ and
$\lambda_2 = 2.10$) to embed the graph into the plane.  The result is shown in
Figure~\ref{fig:gama}. 
We observe that indeed the positive (green) edges are short, and the
negative (red) edges are long.
Looking at only the positive edges, the drawing makes the two connected
components easy to see.
Looking at only the negative edges, we recognize that the tribal groups
can be clustered into three groups, with no negative edges inside 
any group. 
These three groups correspond indeed to a higher-order grouping in the
Gahuku--Gama society~\cite{b323}.  

\begin{figure}
  \centering
  \includegraphics[width=\wOnePointFive]{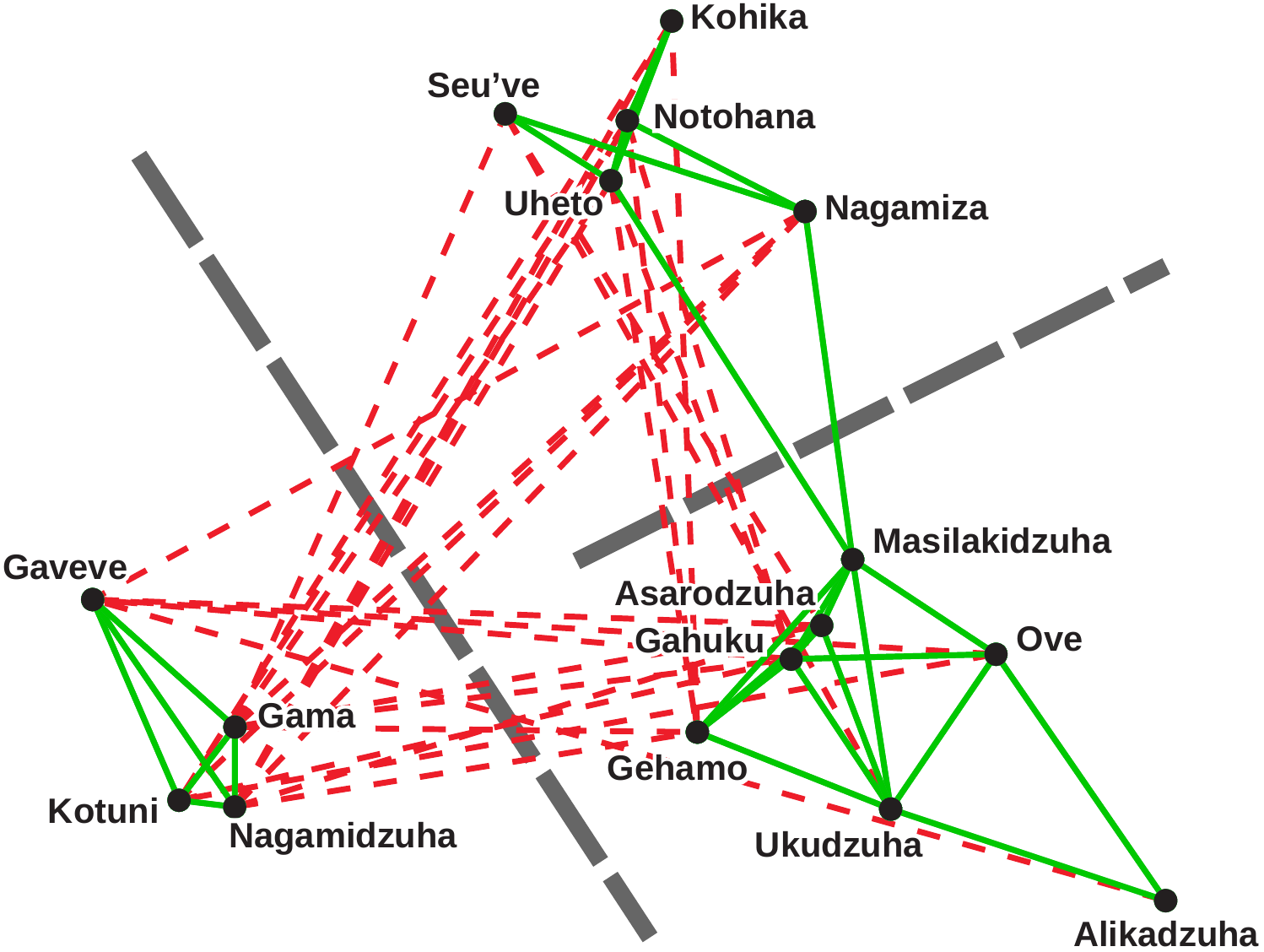} 
  \caption[*]{
    The tribal groups of the Eastern Central Highlands of New Guinea
    from the study of Read~\cite{b322} clustered using eigenvectors of
    the Laplacian matrix. 
    The three higher-order groups as described by Hage and
    Harary~\cite{b323} are linearly separable.  
  }
  \label{fig:gama}
\end{figure}

\section{Predicting Structural Balance: \\ Signed Resistance Distance}
\label{sec:prediction}
In the field of network analysis, one of the major applications consists
in predicting the state of an evolving network in the future. When
considering only the network structure, the corresponding learning
problem is the link prediction problem. In this section, we will 
show that a certain class of link prediction algorithms based on
algebraic graph theory are particularly suited to signed social
networks, since they fulfill three natural requirements that a link
prediction method should follow.  We will state the three
conditions, and then present two algebraic link prediction methods:  the
exponential of the adjacency matrix and the signed resistance
distance. We then finally evaluate the methods on the task of link
prediction. 

First however, let us give the correct terminology and define the link
prediction problem for unsigned and signed social networks. 
Although we state both problems in terms of social networks, both
problems can be extended to other networks. 

Actual social networks are not static graphs, but dynamic systems in
which nodes and edges are added and removed continuously.  The main type
of
change being the addition of edges, i.e., the appearance of a new
tie. Predicting such ties is a common task. For instance, social
networking sites try to predict who users are likely to already know in
order to give good friend recommendations. Let $G=(V,E)$ be an unsigned
social network.  The link prediction then consists of predicting new
edges in that network, a link prediction algorithm is thus a function
mapping a given network to edge predictions. In this work, we will
express link prediction functions algebraically as a map from the
$|V|\times |V|$ adjacency matrix of a network to another $|V|\times |V|$
matrix containing link prediction scores. The semantics of these scores
is that higher values denote a higher likelihood of link
formation. Apart from that, we do not put any other constraint on link
prediction scores. In particular, link prediction scores do not have to
be nonnegative, or restricted to the range $[0,1]$. 

In the case of signed social networks, the link prediction problem is
usually restricted to predicting positive edges. This is easily
motivated by the example of a social recommender system, which should
recommend friends and not enemies. Thus, the link prediction problem for
signed networks can be formalized in the same fashion as for unsigned
networks, by a function from the space of adjacency matrices (containing
positive and negative entries) to the space of score matrices. A link
prediction function $f$ for signed networks will thus be denoted as
follows:
\begin{eqnarray*}
  f : \{-1,0,+1\}^{|V|\times |V|} \rightarrow \mathbb R^{|V|\times |V|} 
\end{eqnarray*}

A note is in order about the related problem of \emph{link sign
  prediction}.  In the problem of link sign prediction, a signed
(social) network is given, along with a set of unweighted edges, and the
goal is the predict the sign of the edges~\cite{kunegis:slashdot-zoo,b555}.  
This problem is different from the link prediction problem in that for
each given edge, it is known that the edge is part of the network, and
only its sign must be predicted. By contrast, the link prediction
problem assumes no knowledge about the network and consists in finding
the positive edges. 

\paragraph{Requirements of a Link Prediction Function}
The structure of the link prediction problem implies two
requirements for a link prediction function, in relation with paths
connecting any two nodes. In addition, the presence of negative edges
implies a third requirement, in relation to the edge signs in paths
connecting two nodes.

Let $V$ be a fixed set of vertices, and $G_1=(V,E_1)$ and $G_2=(V,E_2)$
two unsigned networks with the same vertex sets. Let $u,v \in V$ be two
vertices and $f$ a link prediction function.  Then, compare the set set
of paths connecting the vertices $u$ and $v$, both in $G_1$ and in
$G_2$. Two requirements should be fulfilled by $f$:
\begin{itemize}
  \item \textbf{Path counts}:  If more paths between $u$ and $v$ are
    present in $G_1$ than $G_2$, than $f$ should return a higher score
    for the pair $(u,v)$ in $G_1$ than in $G_2$. 
  \item \textbf{Path lengths}: If paths between $u$ and $v$ are longer in
    $G_1$ than in $G_2$, then $f$ should return a lower score for the
    pair $(u,v)$ in $G_1$ than in $G_2$. 
\end{itemize}
In addition, the following requirement can be formulated for signed
networks. In this requirement, we will refer to a path
as positive when it contains an even number of negative edges and as negative
when it contains an odd number of negative edges. 
\begin{itemize}
  \item \textbf{Path signs}:  If paths between $u$ and $v$ are more
    often positive in $G_1$ than in $G_2$, 
    than $f$ should return a higher score
    for the pair $(u,v)$ in $G_1$ than in $G_2$. 
\end{itemize}

These three requirements are fulfilled by several link prediction
functions, of which we review one and introduce another in the
following.

\subsection{Signed Matrix Exponential}
Let $G=(V,E,\sigma)$ be a signed network with adjacency matrix $\mathbf
A$. Its exponential is then defined as
\begin{eqnarray*}
  e^{\alpha \mathbf A} &=& \mathbf I + \mathbf A + \frac 12 \mathbf A^2 +
  \frac 16 \mathbf A^3 + \cdots
\end{eqnarray*}
This exponential with the parameter $\alpha > 0$ is a suitable link
prediction function for signed networks as it can be expressed as a sum
over all paths between any two nodes. Let $P_G(u,v,k)$ be the set of
paths of length $k$ in the graph $G$. In this definition, we allow a
path to cross a single vertex multiple times, and set the length of a
path as being the number of edges it contains.  Furthermore, let
\begin{eqnarray*}
  (v_0, v_1, \ldots, v_k) \in P_G(u,v,k)
\end{eqnarray*}
with $u=v_0$ and $v=v_k$. Then, any power of $\mathbf A$ can be
expressed as
\begin{eqnarray*}
  (\mathbf A^k)_{uv} &=& \sum_{(v_0, \ldots, v_k) \in P_G(u,v,k)}
  \prod_{i=1}^k \sigma(\{v_{i-1}, v_i\}). 
\end{eqnarray*}
In other words, the $k$\textsuperscript{th} power of the adjacency
matrix equals a sum over all paths of length $k$, weighted by the
product of their edge signs. This leads to the following expression for
the matrix exponential:
\begin{eqnarray*}
  (e^{\alpha \mathbf A})_{uv} &=& \sum_{k=0}^\infty \frac {\alpha^k} {k!} \sum_{(v_0, \ldots, v_k) \in P_G(u,v,k)}
  \prod_{i=1}^k \sigma(\{v_{i-1}, v_i\}). 
\end{eqnarray*}
In other words, the matrix exponential is a sum over all paths between
any two nodes, weighted by the function $\alpha^k/k!$ of their path
length. This implies that the matrix exponential is a suitable link sign
prediction function for signed networks, since it fulfills all three
requirements:
\begin{itemize}
  \item \textbf{Path counts}:  The exponential function is a sum over
    paths and thus counts paths.
  \item \textbf{Path lengths}:  The function $\alpha^k/k!$ is decreasing
    in $k$, for suitably small values of $\alpha$. 
  \item \textbf{Path signs}:  Signs are taken into account by
    multiplication. 
\end{itemize}

Thus, the exponential of the adjacency matrix is a link prediction
function for signed networks. 

Other, similar functions can be constructed, for instance
the function $(\mathbf I - \alpha \mathbf A)^{-1}$ is known as the
Neumann kernel, 
in which $\alpha$ is chosen such that $\alpha^{-1} >
|\lambda_1|$, $|\lambda_1|$ being $\mathbf A$'s largest absolute eigenvalue,
or equivalently the graph's spectral norm~\cite{b263}.

Both the matrix exponential and the Neumann kernel can be applied to the
normalized adjacency matrix $\mathbf N = \mathbf D^{-1/2} \mathbf A
\mathbf D^{-1/2}$, in which each edge $\{u,v\}$ is weighted by
$\sqrt{d(u)d(v)}$, i.e., the geometric mean of the degrees of $u$ and
$v$. The rationale behind this normalization is to count a connection as
less important if it is one of many that attaches to a node. 

\subsection{Signed Resistance Distance}
The resistance distance is a metric defined on vertices of a graph
inspired from electrical resistance networks.  When an electrical current is applied
to an electrical network of resistors, the whole network acts as a single
resistor whose resistance is a function of the individual resistances.  
In such
an electrical network, any two nodes of the network can be taken as the endpoint of the total
resistance, giving a function defined between every pair of nodes.
As shown in~\cite{b101}, this function is a metric, usually called the
\emph{resistance distance}. 

Intuitively, two nodes further apart are separated by a greater equivalent
resistance, while 
nodes closer to each other lead to a small resistance distance.  This
distance function has been used before to perform collaborative
filtering~\cite{b105,b15,b107}, and it fulfills the first two of our
assumptions, when actual edge weights are interpreted as inverse
resistances, i.e., conductances:
\begin{itemize}
  \item \textbf{Path counts}: Parallel resistances are inverse-additive,
    and parallel conductances are additive.
  \item \textbf{Path lengths}:  Resistances in series are additive and
    conductances in series inverse-additive. 
\end{itemize}
As the resistance distance by default only applies to
nonnegative values, previous works use it on
nonnegative data, such as unsigned social networks or document view counts.  
In the presence of signed edges, the resistance distance can be extended by
the following formalism, which fulfills the third requirement on path signs. 
A positive electrical resistance indicates that the potentials
of two connected nodes will tend to each other:  The smaller the
resistance, the more both potentials approach each other.  
Therefore, a positive edge can be represented as a unit resistor. 
If an edge is negative, we can interpret the connection as
consisting of a unit resistor in series
with an \emph{inverting amplifier} that guarantees its ends to have
opposite voltage, as depicted in Figure~\ref{fig:elec}.  In other
words, two nodes connected by a negative 
edge will tend to opposite voltages. 

\begin{figure}
  \centering
  \includegraphics[width=\wOnePointFive]{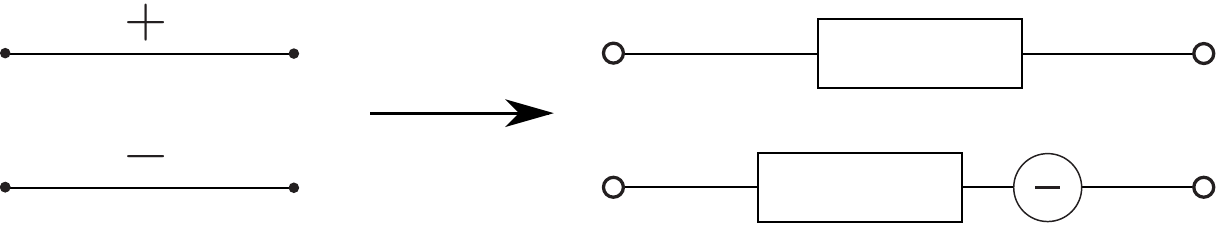}
  \caption{ 
    Interpretation of positive and negative edges as electrical
    components. 
    An edge with a negative weight is interpreted as a positive
    resistor in series with an inverting component (shown as~$\circleddash$). 
  }
  \label{fig:elec}
\end{figure}

Thus, a positive edge can be modeled by a unit resistor and a negative
edge can be modeled by a unit resistor in series with a (hypothetical)
electrical component that assures its ends have opposite electrical
potential.
Note that the absence of an edge is modeled by the absence of a
resistor, which is equivalent to a resistor with infinite
resistance. Thus, actual edge weights and scores correspond not to
resistances, but to inverse resistance, i.e., conductances. 

We now establish a closed-form
expression giving the resistance distance between all node pairs based
on~\cite{b101}. 

\paragraph{Definitions}  
The following notation is used.
\begin{itemize}
\item $\mathbf J_{uv}$ is the current flowing through the oriented edge $(u,v)$.
  $\mathbf J$ is skew-symmetric: $\mathbf J_{uv} = - \mathbf J_{vu}$. 
\item $\mathbf{v}_u$ is the electric potential at node $u$.  Potentials are defined up to
  an additive constant.
\item $\mathbf R_{uv}$ is the resistance value of edge $(u, v)$.  Note
  that $\mathbf R_{uv}= \mathbf R_{vu}$. 
\end{itemize}
In electrical networks, the current entering a node must be equal to the
current leaving that node.  This relation is known as \emph{Kirchhoff's law}, and can
be expressed as $\sum_{v\sim u} \mathbf J_{uv}  =  0$ for all $u \in V$. 
We assume that a current $j$ will be flowing through the network from
vertex $a$ to vertex $b$, and therefore we have
\begin{eqnarray*}
  \sum_{(v, a)} \mathbf J_{av} &=& j, \\
  \sum_{(v, b)} \mathbf J_{bv} &=& -j.
\end{eqnarray*}
Using the identity matrix $\mathbf I$, we express these
relations as
\begin{eqnarray}
  \sum_{(v, u)} \mathbf J_{uv} = j (\mathbf I_{ua} - \mathbf I_{ub}) \label{eq:law}
\end{eqnarray}
The relation between currents and potentials is given by Ohm's law: $\mathbf{v}_u
- \mathbf{v}_v = \mathbf R_{uv} \mathbf J_{uv}$ for all edges $(u, v)$. 

We will now show that the equivalent resistance $\mathbf{\bar R}_{ab}$ between $a$ and
$b$ in the network can be expressed in terms of the graph Laplacian
$\mathbf L$ as
\begin{eqnarray}
  \mathbf{\bar R}_{ab} &=& (\mathbf I_{a\bullet} - \mathbf I_{b\bullet}) \mathbf L^+ (\mathbf I_{a\bullet} -
  \mathbf I_{b\bullet})^{\mathrm T}, \label{eq:unsigned} \\
  &=& \mathbf L^+_{aa} + \mathbf L^+_{bb} - \mathbf L^+_{ab} - \mathbf
    L^+_{ba}, \nonumber
\end{eqnarray}
where $\mathbf L^+$ is the Moore--Penrose pseudoinverse of $\mathbf L$~\cite{b101}. 

The proof follows from recasting Equation~(\ref{eq:law}) as:
\begin{eqnarray*}
  \sum_{(v, u)} \frac{1}{\mathbf R_{uv}} (\mathbf v_u - \mathbf v_v) &= j
  (\mathbf I_{ua} - \mathbf I_{ub})  \nonumber
\end{eqnarray*}
Combining over all $u \in V$:
\begin{eqnarray*}
  \mathbf D \mathbf{v} - \mathbf A\mathbf{v} &=& j(\mathbf I_{a\bullet} - \mathbf I_{b\bullet}) \\
  \mathbf L\mathbf{v} &=& j(\mathbf I_{a\bullet} - \mathbf I_{b\bullet})
\end{eqnarray*}
Let $\mathbf L^+$ be the Moore--Penrose pseudoinverse of $\mathbf L$, then
because $\mathbf{v}$ is contained in the row space of $\mathbf L$~\cite{b101}, we have
$\mathbf L^+\mathbf L\mathbf{v} = \mathbf{v}$, and we get 
\begin{eqnarray*}
  \mathbf{v} &=& \mathbf L^+ j (\mathbf I_{a\bullet} - \mathbf I_{b\bullet}) 
\end{eqnarray*}
Which finally gives the equivalent resistance between $a$ and $b$ as
\begin{eqnarray*}
  \mathbf{\tilde{R}}_{ab} &=& (\mathbf v_a - \mathbf v_b) / j \\
  &=& (\mathbf I_{a\bullet} - \mathbf I_{b\bullet})^{\mathrm T} \mathbf{v} / j \\
  &=& (\mathbf I_{a\bullet} - \mathbf I_{b\bullet})^{\mathrm T} \mathbf L^+ (\mathbf I_{a\bullet} -
  \mathbf I_{b\bullet})
\end{eqnarray*}
A symmetry argument shows that $\mathbf{\tilde R}_{ab} = \mathbf{\tilde R}_{ba}$ as expected.
As shown in~\cite{b101}, $\mathbf{\tilde R}$ is a metric. 

The definition of the resistance distance can be extended to signed
networks in the following way. 
\begin{figure}
  \centerline{
  \xymatrix @R-20pt
  {
    (a) & \node \ar@{-}[r]^{r_1=+1} & \node \ar@{-}[r]^{r_2=-1} & \node & 
         r = r_1 + r_2 = 0 \\
	 \\
    (b) & \node \ar@{-}@/^/[rr]^{r_1=+1}  \ar@{-}@/_/[rr]^{r_2=-1} & & \node & r =
  \frac{r_1 r_2}{r_1 + r_2}  = -1 / 0\\
  }
  }
  \caption{
    Applying the sum rules to negative resistance values leads
    to contradictions.  
  }
  \label{fig:ex}
\end{figure}
Figure~\ref{fig:ex} shows two examples in which we allow negative
resistance values in Equation~(\ref{eq:unsigned}):  two
parallel edges, and two serial edges.  In these examples, we will use the sum
rules that hold for electrical resistances:  resistances in series add up and
conductances in parallel also add up. 

Therefore, the constructions of Figure~\ref{fig:ex}. would result in a
total resistance of zero for case (a), and an undefined total resistance
in case (b).  However, the graph of Figure~\ref{fig:ex}~(a) could result
from two users $a$ and $b$ having a positive and a negative correlation
with a third user $c$.  Intuitively, the resulting distance between $a$
and $b$ should take on a negative value.  In the graph of
Figure~\ref{fig:ex}~(b), the intuitive result would be $r = -1/2$.  What
we would like is for the sign and magnitude of the equivalent resistance
to be handled separately: The sum rules should hold for the
\emph{absolute values} of the resistance similarity values, while the
sign should obey a product rule.  These requirements are summarized in
Figure~\ref{fig:summ}.

\begin{figure}
  \centerline{
  \xymatrix @R-20pt
  {
    (a) & \node \ar@{-}[r]^{r_1=+1} & \node \ar@{-}[r]^{r_2=-1} & \node & 
         r = \mathrm{sgn}(r_1 r_2) (|r_1| + |r_2|) = -2 \\
	 \\
    (b) & \node \ar@{-}@/^/[rr]^{r_1=+1}  \ar@{-}@/_/[rr]^{r_2=-1} & & \node & r =
         \frac{r_1 r_2 }{|r_1| + |r_2|} =  -1/2\\
  }
  }
  \caption{
    Applying modified sum rules resolves the contradictions. 
  }
  \label{fig:summ}
\end{figure}
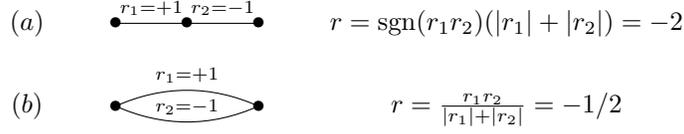
To achieve the serial sum equation proposed in Figure~\ref{fig:summ}, we
propose the following interpretation of a negative resistance:
\begin{itemize}
\item An edge carrying a negative resistance value acts like the corresponding
  positive resistance in series with a component that negates potentials.
\end{itemize}
A component that negates electric potential cannot exist in physical 
electrical networks, because it violates an invariant of electrical circuit:
The invariant stating that potentials are only defined up to an additive
constant.  However, as we will see below, the potential inversion gets canceled
out in the calculations, yielding results independent of any additive constant.
For this reason, we will talk of negative resistances, but avoid the term
resistor in this context. 

Before giving a closed-form expression for the signed resistance distance, we
provide three intuitive examples validating our definition in
Figure~\ref{fig:three_examples}. 

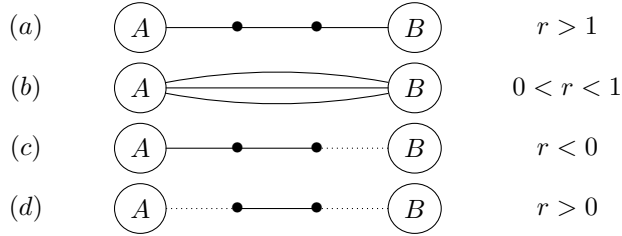
\begin{figure}
  \centerline{
  \xymatrix @R-20pt 
  {
    (a) &\circlenode{A}  \ar@{-}[r] & \node \ar@{-}[r] & \node \ar@{-}[r] &
  \circlenode{B} & r > 1 \\
    (b) & \circlenode{A} \ar@{-}@/^/[rrr] \ar@{-}[rrr] \ar@{-}@/_/[rrr] & & &
  \circlenode{B} & 0 < r < 1 \\
    (c) & \circlenode{A}  \ar@{-}[r] & \node \ar@{-}[r] & \node \ar@{.}[r] &
  \circlenode{B} & r < 0 \\
    (d) & \circlenode{A}  \ar@{.}[r] & \node \ar@{-}[r] & \node \ar@{.}[r] &
  \circlenode{B} & r > 0
  }
  }
  \caption{
    Example configurations of signed resistance values.  The total
    resistance is to be calculated between the nodes A and B.  All edges
    have unit absolute resistance.  Edges with negative resistance
    values are shown as dotted lines.  For each case, we formulate a
    condition that should hold for any signed resistance distance.
  }
  \label{fig:three_examples}
\end{figure}

\begin{itemize}
\item Example (a) shows that, as a path of resistances in series gets longer, the
  resulting resistance increases.  This conditions applies to the regular
  resistance distance as well as to the signed resistance distance.  In this
  case, the total resistance should be higher than one. 
\item Example (b) shows that a higher number of parallel
  resistances decreases the resulting resistance value.  Again, this is true for
  both types of resistances.  In this example, the total resistance should be
  less than one. 
\item Examples (c) and (d) show that in a path of signed resistances, the total
  resistance has the sign of the product of individual resistances.  This
  condition is particular to the signed resistance distance.  
\end{itemize}

We will now show how Kirchhoff's law has to be adapted to support our
definition of negative resistances.  We adapt Equation~(\ref{eq:law}) by applying
the absolute value to the resistance weight. 
\begin{eqnarray*}
  \sum_{(v, u)} \frac{1}{|\mathbf R_{uv}|} (\mathbf v_u -
  \mathrm{sgn}({\mathbf R_{uv}})
  \mathbf v_v) = 0
\end{eqnarray*}
where $\mathrm{sgn}(x)$ denotes the sign function. 
In terms of the matrices $\mathbf D$ and $\mathbf L$ we arrive at
\begin{eqnarray*}
  \mathbf D_{uu} &=& \sum_{(v, u)} |1/\mathbf R_{uv}|  \\
  \mathbf L &=& \mathbf D - \mathbf A \\
  \mathbf{\tilde {r}}_{ab} &=& (\mathbf I_{a\bullet} - \mathbf I_{b\bullet}) \mathbf L^+
  (\mathbf I_{a\bullet} - \mathbf I_{b\bullet})^{\mathrm T} \\
  &=& \mathbf L^+_{aa} +\mathbf L^+_{bb}-\mathbf L^+_{ab}-\mathbf L^+_{ba}. \nonumber
\end{eqnarray*}
The proof follows analogously to the proof for the regular resistance distance
by noting that $\mathbf{v}$ is again contained in the row space of $\mathbf L$. 
\begin{eqnarray*}
  \mathbf L^+ \mathbf L \mathbf{v} = \mathbf{v} 
\end{eqnarray*}
From which the result follows. 

As with the regular resistance distance, the signed resistance distance is
symmetric:  $\mathbf{\tilde R}_{ab} = \mathbf{\tilde R}_{ba}$. 

Due to a duality between electrical networks and random
walks~\cite{b19}, the resistance distance is also known as the
commute-time kernel, and its values can be interpreted as the average
time it takes a random walk to \emph{commute}, i.e., to go from a node $u$ to
another node $v$ and back to $u$ again. 

The matrix $\mathbf L^+$ will be called the resistance distance
kernel. Similarly, the matrix $\mathbf e^{-\alpha \mathbf L}$ is known
as the heat diffusion kernel, because it can be derived from a physical
process of heat diffusion. 
Both of these kernels can be normalized, i.e., they can be applied to
the normalized adjacency matrix $\mathbf N=\mathbf D^{-1/2} \mathbf A
\mathbf D^{-1/2}$, giving the normalized resistance distance kernel and
the normalized heat diffusion kernel.  
We note that the normalized heat diffusion kernel is equivalent to the
normalized exponential kernel~\cite{b155}. 

The degree matrix $\mathbf D$ of a signed graph is defined in this
article using $\mathbf D_{uu} = \sum_v |\mathbf A_{uv}|$ in the general case.  In some
contexts, an alternative degree matrix $\mathbf D_{\mathrm{alt}}$ is defined
without the absolute value:  
\begin{eqnarray*}
  (\mathbf D_{\mathrm{alt}})_{uu} = \sum_v \mathbf A_{uv}
\end{eqnarray*}
This leads to an alternative Laplacian matrix $\mathbf L_{\mathrm{alt}} =
\mathbf D_{\mathrm{alt}}-\mathbf A$ 
for signed graphs that is not positive-semidefinite.
This Laplacian is used in the context of knot theory~\cite{b365}, 
to draw graphs with negative edge weights~\cite{b357}, 
and to implement constrained clustering, i.e., clustering with \emph{must-link}
and \emph{must-not-link} edges~\cite{b558}. 
Since $\mathbf L_{\mathrm{alt}}$ is not positive-semidefinite in the general
case, it cannot be used as a kernel. 

Expressions of the form $(\sum_i |\mathbf w_i|)^{-1} \sum_i \mathbf w_i
\mathbf x_i$ appeared several times in the 
preceding sections.  These types of expressions represent a weighted mean of the
values $\mathbf x_i$, supporting negative values of the weights $\mathbf w_i$. 
These expressions have been used for some
time in the collaborative filtering literature without being connected
to the signed Laplacian, for instance in~\cite{b132}.

\subsection{Evaluation}
We compare the methods shown in Table~\ref{tab:methods} at the task of
link prediction in signed social networks. 

Evaluation is performed using the following methodology. Let
$G=(V,E,\sigma)$ be any of the signed networks, and let
\begin{eqnarray*}
  E = E_{\mathrm a} \cup E_{\mathrm b}
\end{eqnarray*}
be a partition of the edge set $E$ into a training set $E_{\mathrm a}$
and a test set $E_{\mathrm b}$. The training set is chosen to comprise
75\% of all edges. For the networks in which edge arrival times are
known (Epinions, Wikipedia elections, Wikipedia conflict), the split is
made in such a way that all edges in the training set $E_{\mathrm a}$
are older than the edges in the test set $E_{\mathrm b}$. 
Each link prediction method is then applied to the training network
\begin{eqnarray*}
  G_{\mathrm a} = (V, E_{\mathrm a}). 
\end{eqnarray*}
Let $E^+_{\mathrm b}$ denote the test edges with positive sign. 
Then, a zero test set $E_{\mathrm z}$ of edges not in the network at all
is generated, having the same size as $E^+_{\mathrm b}$. Then, the scores
of each link prediction algorithm are computed for all node pairs in
$E^+_{\mathrm b}$ and $E_{\mathrm z}$, and the accuracy of each link
prediction algorithm evaluated on $E^+_{\mathrm b}$ and $E_{\mathrm z}$
using the area under the curve (AUC) measure~\cite{b366}. 
The area under the curve is a number in the range $[0,1]$ which is larger
for better predictions, and admits a value of 0.5 for a random predictor. 
The parameters $\alpha$ of the various link prediction functions are
learned using the method described
in~\cite{kunegis:spectral-transformation}. 
The results of the experiments are shown in Table~\ref{tab:results}. 

\begin{table}
  \caption{
    The link prediction functions evaluated on the signed social network
    datasets. 
    Each method is a function of a specific characteristic graph matrix:
    $\mathbf A$, the adjacency matrix; $\mathbf N= \mathbf D^{-1/2}
    \mathbf A \mathbf D^{-1/2}$, the normalized adjacency matrix;
    $\mathbf L = \mathbf D - \mathbf A$, the Laplacian matrix; and
    $\mathbf Z = \mathbf I - \mathbf N = \mathbf D^{-1/2} \mathbf L
    \mathbf D^{-1/2}$, the normalized Laplacian matrix. 
  }
  \centering
  \scalebox{0.9}{
  \begin{tabular}{ll}
    \toprule
    \textbf{Name} & \textbf{Expression} \\
    \midrule
    Exponential (Exp) & $e^{\alpha \mathbf A}, 0 < \alpha$ \\
    Neumann kernel (Neu) & $(\mathbf I - \alpha \mathbf A)^{-1}, 0 < \alpha
    < \mathopen\parallel \mathbf A \mathclose\parallel_2^{-1}$ \\
    Normalized exponential (N-Exp) & $e^{\alpha \mathbf A}, 0 < \alpha$ \\
    Normalized Neumann kernel (N-Neu) & $(\mathbf I - \alpha \mathbf N)^{-1}, 0
    < \alpha < 1$ \\
    Resistance distance (Resi) & $\mathbf L^+$ \\
    Heat diffusion (Heat) & $e^{-\alpha \mathbf L}, 0 < \alpha$ \\
    Normalized resistance distance (N-Resi) & $\mathbf Z^+$ \\
    Normalized heat diffusion & \textit{Equivalent to Normalized exponential} \\
    \bottomrule
  \end{tabular}
  }
  \label{tab:methods}
\end{table}

\begin{table}
  \caption{
    The full evaluation results. The numbers are the area under the curve
    values (AUC); higher values denote better link prediction accuracy. 
    The best performing link prediction algorithm for each dataset is
    highlighted in bold. 
  }
  \scalebox{0.82}{
  \begin{tabular}{ l rrrrrrrr }
    \toprule
    \textbf{Network} & \textbf{Exp} & \textbf{Neu} & \textbf{N-Exp} & \textbf{N-Neu} 
    & \textbf{Resi} & \textbf{Heat} & \textbf{N-Resi} \\
    \midrule
    Slashdot Zoo        & \textbf{68.98\%} & 67.71\% & 64.87\% & 65.68\% & 61.64\% & 59.11\% & 65.71\% \\
    Epinions            & 75.04\% & 73.12\% & 78.38\% & 78.65\% & 63.26\% & 63.28\% & \textbf{78.82\%} \\
    Wikipedia elections & 57.08\% & 55.60\% & 60.30\% & \textbf{61.16\%} & 51.44\% & 50.60\% & 60.98\% \\
    Wikipedia conflicts & 85.57\% & 85.56\% & 85.03\% & 85.03\% & \textbf{87.02\%} & 85.95\% & 85.04\% \\
    \bottomrule
  \end{tabular}
  }
  \label{tab:results}
\end{table}

We observe that the best link prediction method depends on the
dataset. Each of the exponential, the normalized Neumann kernel,
the resistance distance kernel and the normalized resistance distance
kernel performs best for one or more datasets. 

\section{Conclusion}
\label{sec:conclusion}
We have reviewed network analysis methods for signed social networks~--
social networks that allow positive and negative edges.  A main theme we found
is that of structural balance, the statement that triangles in a signed
social network tend to be balanced, and on a larger scale the tendency
of a whole network to have a structure conforming to that assumption. We
showed how this can be measured in two different ways:  on the scale of triangles
by the signed clustering coefficient, and on the global scale by
the algebraic conflict, the smallest eigenvalue of the graph Laplacian. 
We also showed how structural balance can be exploited for graph
drawing, graph clustering, and finally for implementing social
recommenders, using signed link prediction algorithms. 

As structural balance can be seen as a form of multiplication rule
(illustrated by the phrase \emph{the enemy of my enemy is my friend}),
it is expected that algebraic methods are well-suited to analysing
signed social networks. Indeed, 
we identified functions of the adjacency matrix $\mathbf A$ and of the
Laplacian matrix $\mathbf L$, which model negative edges in a
natural way. 

In a more general sense, signed social networks can be understood as a
stepping stone to the more general topic of \emph{semantic networks}, in
which edges are labeled by arbitrary predicates. In such networks, the
combination of labels to give a new label, in analogy with the
multiplication rule of the signed edge weights $\{\pm 1\}$, cannot be
directly mapped by real numbers, and a general method for that case is
still an open problem in network theory. Certain subproblems have
however already be identified, for instance the usage of 
split-complex imaginary numbers to represent the \emph{like}
relationship~\cite{kunegis:split-complex-dating}.  

\section*{Acknowledgments}
We thank Andreas Lommatzsch, Christian Bauckhage, Stephan Schmidt,
Jürgen Lerner and Martin Mehlitz. 
The research leading to these results has received funding from the European Community's 
Seventh Frame Programme under grant agreement n\textsuperscript{o}~257859, 
ROBUST.

\bibliographystyle{abbrv}
\bibliography{ref,kunegis,konect} 

\end{document}